\DeclareMathAlphabet{\mathpzc}{OT1}{pzc}{m}{it}
\newtheorem{thm}{Theorem}[section]
\newtheorem{lem}[thm]{Lemma}
\newtheorem{prop}[thm]{Proposition}
\newtheorem{cor}[thm]{Corollary}
\theoremstyle{definition}
\newtheorem{defn}[thm]{Definition}
\newtheorem{ex}[thm]{Example}
\newtheorem*{aknow}{Acknowledgments}
\theoremstyle{remark}
\newtheorem{rem}[thm]{Remark}
\newcommand{\zk}{\mathfrak{z}}
\newcommand{\g}{\mathfrak{g}}
\newcommand{\N}{\mathfrak{N}}
\newcommand{\J}{\mathfrak{J}}
\newcommand{\hk}{\mathfrak{h}}
\newcommand{\lk}{\mathfrak{l}}
\newcommand{\ak}{\mathfrak{a}}
\newcommand{\qk}{\mathfrak{q}}
\newcommand{\tk}{\mathfrak{t}}
\newcommand{\Sb}{\mathsf{S}}
\newcommand{\Zb}{\mathsf{Z}}
\newcommand\RR{\mathbb R}
\newcommand\CC{\mathbb C}
\newcommand\PP{\mathbb P}
\newcommand{\Tr}{\operatorname{Tr}}
\newcommand{\ad}{\operatorname{ad}}
\newcommand{\im}{\operatorname{Im}}
\renewcommand\dfrac{\displaystyle \frac}
\newcommand\e{\varepsilon}
\newcommand{\End}{\operatorname{End}}
\newcommand{\Id}{\operatorname{Id}}
\renewcommand\hat\widehat
\renewcommand\tilde\widetilde 
\newcommand{\spa}{\operatorname{span}}
\newcommand{\OO}{\operatorname{O}}
\newcommand{\rank}{\operatorname{rank}}
\newcommand{\Jb}{\overline{\J}}
\newcommand{\Ann}{\operatorname{Ann}}
\newcommand{\RAnn}{\operatorname{RAnn}}
\newcommand{\LAnn}{\operatorname{LAnn}}
\newcommand{\oplusp}{{ \ \overset{\perp}{\mathop{\oplus}} \ }}
\newcommand{\ps}{\PP^1}
\begin{document}

\title[Jordanian double extensions and symmetric Novikov
algebras]{Jordanian double extensions of a quadratic vector space and
  symmetric Novikov algebras}

\author{Minh Thanh Duong, Rosane Ushirobira}

\address{Institut de Mathématiques de Bourgogne, Université de
  Bourgogne, B.P. 47870, F-21078 Dijon Cedex, France}

\email{Thanh.Duong@u-bourgogne.fr}
\email{Rosane.Ushirobira@u-bourgogne.fr}

\keywords{Jordanian double extension.  2-step nilpotent
  pseudo-Euclidean Jordan
  algebras. $T^*$-extension. Jordan-admissible. Symmetric Novikov
  algebras}

\subjclass[2000]{17C10, 17C30, 17C50 , 17D25, 17A30
}

\date{\today}

\begin{abstract}
  First, we study pseudo-Euclidean Jordan algebras obtained as double
  extensions of a quadratic vector space by a one-dimensional
  algebra. We give an isomorphic characterization of 2-step nilpotent
  pseudo-Euclidean Jordan algebras. Next, we find a Jordan-admissible
  condition for a Novikov algebra $\N$. Finally, we focus on the case
  of a symmetric Novikov algebra and study it up to dimension 7.
\end{abstract}

\maketitle

\section{Introduction}

All algebras considered in this paper are finite-dimensional algebras
over $\CC$. The general framework for our study is the following: let
$\qk$ be a complex vector space equipped with a non-degenerate
bilinear form $B_\qk$ and $C:\qk\rightarrow \qk$ be a linear map. We
associate a vector space
\[\J = \qk\oplusp\tk \] to the triple $(\qk,B_\qk,C)$ where $(\tk =
\spa\{x_1,y_1\},B_\tk)$ is a 2-dimensional vector space and
$B_\tk:\tk\times\tk\rightarrow \CC$ is the bilinear form defined by
\[ B_\tk(x_1, x_1) = B_\tk(y_1, y_1) = 0, \ B_\tk(x_1, y_1) = 1.\]
Define a product $\star$ on the vector space $\J$ such that $\tk$ is a
subalgebra of $\J$, \[y_1\star x = C(x), \ x_1\star x = 0,\ x\star y =
B_\qk(C(x),y)x_1 \] for all $x,y\in\qk$ and such that the bilinear
form $B_\J = B_\qk + B_\tk$ is {\em associative} (that means $B_\J
(x\star y, z) = B_\J(x,y\star z), \ \forall x,y,z\in\J$). We call $\J$
is a {\em double extension of $\qk$ by $C$}. It can be completely
characterized by the pair $(B_\qk,C)$ combined with some properties of
the 2-dimensional subalgebra $\tk$.

A rather interesting note is that such algebras $\J$ can also be
classified up to isometric isomorphisms (or i-isomorphisms, for short)
or isomorphisms. This is successfully done for the case of $B_\qk$
symmetric or skew-symmetric, $C$ skew-symmetric (with respect to
$B_\qk$) and $B_\tk$ symmetric (see \cite{FS87}, \cite{DPU10} and
\cite{Duo10}). In these cases, a double extension of $\qk$ by $C$ is a
quadratic Lie algebra or a quadratic Lie superalgebra. Their
classification is connected to the well-known classification of
adjoint orbits in classical Lie algebras theory\cite{CM93}. That is,
there is a one-to-one correspondence between isomorphic classes of
those algebras and adjoint $G$-orbits in $\ps(\g)$, where $G$ is the
isometry group of $B_\qk$ and $\ps(\g)$ is the projective space
associated to the Lie algebra $\g$ of $G$. Therefore, it is natural to
consider similar algebras corresponding to the remaining different
cases of the pair $(B_\qk,C)$.

Remark that the above definition of a double extension is a special
case of a one-dimensional extension in terms of the double extension
notion initiated by V. Kac to construct quadratic solvable Lie
algebras \cite{Kac85}. This notion is generalized effectively for
quadratic Lie algebras \cite{MR85} and many other non-anticomutative
algebras (see \cite{BB99}, \cite{BB08} and \cite{AB10}) to obtain an
inductive characterization (also called {\em generalized double
  extension}). Unfortunately, the classification (up to isomorphisms
or i-isomorphisms) of the algebras obtained by the double extension or
generalized double extension method seems very difficult, even in
nilpotent or low dimensional case. For example, nilpotent
pseudo-Euclidean Jordan algebras up to dimension 5 are listed
completely but only classified in cases up to dimension 3 \cite{BB08}.


In Section 2, we apply the work of A. Baklouti and S. Benayadi in
\cite{BB08} for the case of a one-dimensional double extension of the
pair $(B_\qk,C)$ to obtain pseudo-Euclidean (commutative) Jordan
algebras (i.e. Jordan algebras endowed with a non-degenerate
associative symmetric bilinear form). Consequently, the bilinear forms
$B_\qk$, $B_\tk$ are symmetric, $C$ must be also symmetric (with
respect to $B_\qk$) and the product $\star$ is defined by:
\begin{eqnarray*}
  &&(x + \lambda x_1 + \mu y_1)\star( y + \lambda' x_1 + \mu' y_1)  := \\ &&\mu C(y) + \mu'C(x) + B_\qk( C(x), y) x_1 + \e \left( \left( \lambda\mu' + \lambda'\mu \right) x_1 + \mu \mu' y_1\right),
  \end{eqnarray*}
  $\e \in \{0,1 \}$, for all $x, y \in \qk, \lambda, \mu, \lambda',
  \mu' \in \CC$.

  Since there exist only two one-dimensional Jordan algebras, one
  Abelian and one simple, then we have two types of extensions called
  respectively {\em nilpotent double extension} and {\em
    diagonalizable double extension}.  The first result (Proposition
  \ref{prop2.1}, Corollary \ref{cor2.2}, Corollary \ref{cor2.7} and
  Appendix) is the following:

  \smallskip

THEOREM 1:{\em
\begin{enumerate}
\item If $\J$ is the nilpotent double extension of $\qk$ by $C$ then
  $C^3 = 0$, $\J$ is 3-step nilpotent and $\tk$ is an Abelian
  subalgebra of $\J$.

\item If $\J$ is the diagonalizable double extension of $\qk$ by $C$
  then $3C^2 = 2C^3 +C$, $\J$ is not solvable and $\tk \star \tk
  =\tk$. In the reduced case, $y_1$ acts diagonalizably on $\J$ with
  eigenvalues $1$ and $\frac{1}{2}$.
\end{enumerate}}

In Propositions \ref{prop2.3} and \ref{prop2.8}, we characterize these
extensions up to i-isomorphisms, as well as up to isomorphisms and
obtain the classification result:

\smallskip

THEOREM 2:{\em

\begin{enumerate}
\item Let $\J = \qk \oplusp (\CC x_1 \oplus \CC y_1)$ and $\J' = \qk
  \oplusp (\CC x'_1 \oplus \CC y'_1)$ be nilpotent double extensions
  of $\qk$ by symmetric maps $C$ and $C'$ respectively. Then there
  exists a Jordan algebra isomorphism $A$ between $\J$ and $\J'$ such
  that $A(\qk \oplus \CC x_1) = \qk \oplus \CC x'_1$ if and only if
  there exist an invertible map $P \in \End(\qk)$ and a nonzero
  $\lambda \in \CC$ such that $\lambda C' = P C P^{-1}$ and $P^* P C =
  C$, where $P^*$ is the adjoint map of $P$ with respect to $B$. In
  this case $A$ i-isomorphic then $P\in \OO(\qk)$.

\item Let $\J = \qk \oplusp (\CC x_1 \oplus \CC y_1)$ and $\J' = \qk
  \oplusp (\CC x'_1 \oplus \CC y'_1)$ be diagonalizable double
  extensions of $\qk$ by symmetric maps $C$ and $C'$
  respectively. Then $\J$ and $\J'$ are isomorphic if and only if they
  are i-isomorphic. In this case, $C$ and $C'$ have the same spectrum.
\end{enumerate}}

%

In Section 3, we introduce the notion of generalized double extension
but with a restricting condition for 2-step nilpotent pseudo-Euclidean
Jordan algebras. As a consequence, we obtain in this way the inductive
characterization of those algebras (Proposition \ref{prop3.11}):
\smallskip

THEOREM 3:

{\em Let $\J$ be a 2-step nilpotent pseudo-Euclidean Jordan
  algebra. If $\J$ is non-Abelian then it is obtained from an Abelian
  algebra by a sequence of generalized double extensions.}

To characterize (up to isomorphisms and i-isomorphisms) 2-step
nilpotent pseudo-Euclidean Jordan algebras we need to use the concept
of a $T^*$-extension in \cite{Bor97} as follows. Given a complex
vector space $\ak$ and a non-degenerate cyclic symmetric bilinear map
$\theta: \ak\times\ak \rightarrow \ak^*$, define on the vector space
$\J= \ak\oplus\ak^*$ the product \[
(x+f)(y+g) = \theta(x,y)\] then $\J$ is a 2-step nilpotent
pseudo-Euclidean Jordan algebra and it is called a {\em
  $T^*$-extension of $\ak$ by $\theta$} (or a $T^*$-extension,
simply). Moreover, we have the following result (Proposition
\ref{prop3.14}):

\smallskip

THEOREM 4:{\em

  Every reduced 2-step nilpotent pseudo-Euclidean Jordan algebra is
  i-isomorphic to some $T^*$-extension.}

Theorem 4 allows us to consider only isomorphic classes and
i-isomorphic classes of $T^*$-extensions to represent all 2-step
nilpotent pseudo-Euclidean Jordan algebras. An i-isomorphic and
isomorphic characterization of $T^*$-extensions is given by:
\smallskip

THEOREM 5:{\em

Let $\J_1$ and $\J_2$ be $T^*$-extensions of $\ak$ by $\theta_1$ and $\theta_2$ respectively. Then:
\begin{enumerate}
\item there exists a Jordan algebra isomorphism between $\J_1$ and
  $\J_2$ if and only if there exist an isomorphism $A_1$ of $\ak$ and
  an isomorphism $A_2$ of $\ak^*$ satisfying:
\[A_2(\theta_1(x,y)) = \theta_2(A_1(x),A_1(y)), \forall x,y\in\ak.\]

\item there exists a Jordan algebra i-isomorphism between $\J_1$ and
  $\J_2$ if and only if there exists an isomorphism $A_1$ of $\ak$
\[\theta_1(x,y) = \theta_2(A_1(x),A_1(y))\circ A_1, \forall x,y\in\ak.\]
\end{enumerate}}

As a consequence, the classification of i-isomorphic $T^*$-extensions
of $\ak$ is equivalent to the classification of symmetric 3-forms on
$\ak$. We detail it in the cases of $\dim(\ak) = 1$ and $2$.

In the last Section, we study Novikov algebras. These objects appear
in the study of the Hamiltonian condition of an operator in the formal
calculus of variations \cite{GD79} and in the classification of
Poisson brackets of hydrodynamic type
\cite{BN85}. 
A detailed classification of Novikov algebras up to dimension 3 can be
found in \cite{BM01}.

An associative algebra is both Lie-admissible and
Jordan-admissible. This is not true for Novikov algebras although they
are Lie-admissible. Therefore, it is natural to search a condition for
a Novikov algebra to become Jordan-admissible. The condition we give
here (weaker than associativity) is the following (Proposition
\ref{prop4.17}):

\smallskip

THEOREM 6:{\em

A Novikov algebra $\N$ is Jordan-admissible if it satisfies the condition
\[(x,x,x) = 0,\forall x\in\N.\] }
A corollary of Theorem 6 is that Novikov algebras are not
power-associative since there exist Novikov algebras not
Jordan-admissible.

Next, we consider symmetric Novikov algebras. A Novikov algebra $\N$
is called {\em symmetric} if it is endowed with a non-degenerate
associative symmetric bilinear form. In this case, $\N$ will be
associative, its sub-adjacent Lie algebra $\g(\N)$ is a quadratic
2-step nilpotent Lie algebra \cite{AB10} and the associated Jordan
algebra $\J(\N)$ is pseudo-Euclidean. Therefore, the study of
quadratic 2-step nilpotent Lie algebras (\cite{Ova07}, \cite{Duo10})
and pseudo-Euclidean Jordan algebras is closely related to symmetric
Novikov algebras.

By the results in \cite{ZC07} and \cite{AB10}, we have that every
symmetric Novikov algebra up to dimension 5 is commutative and a
non-commutative example is given in the case of dimension 6. This
algebra is 2-step nilpotent. In this paper, we show that every
symmetric non-commutative Novikov algebra of dimension 6 is 2-step
nilpotent.

As for quadratic Lie algebras and pseudo-Euclidean Jordan algebras, we
define the notion of a {\em reduced} symmetric Novikov algebra. Using
this notion, we obtain (Proposition \ref{prop4.28}):

\smallskip

THEOREM 7:

{\em Let $\N$ be a symmetric non-commutative Novikov algebra. If $\N$
  is reduced then
\[
3\leq \dim(\Ann(\N))\leq \dim(\N\N) \leq\dim(\N) -3.\]
}

In other words, we do not have $\N\N = \N$ in the non-commutative
case. Note that this may be true in the commutative case (see Example
\ref{ex4.14}). As a result, we obtain the following result for the
case of dimension 7 (Proposition \ref{prop4.30}):

THEOREM 8:

{\em Let $\N$ be a symmetric non-commutative Novikov algebra of dimension 7. If $\N$ is reduced then there are only two cases:
  \begin{enumerate}
    \item $\N$ is 3-step nilpotent and indecomposable.
    \item $\N$ is decomposable by $\N =\CC x\oplusp \N_6$, where $x^2=x$ and $\N_6$ is a symmetric non-commutative Novikov algebra of dimension 6.
\end{enumerate}}

Finally, we give an example for 3-step nilpotent symmetric Novikov
algebras of dimension 7. By the above theorem, it is indecomposable.

\begin{aknow}
  We heartily thank Didier Arnal for many discussions and suggestions
  for the improvement of this paper.

  This article is dedicated to our mentor Georges Pinczon (1948 --
  2010), an admirable mathematician, a very fine algebrist and above
  all, a very good friend.
\end{aknow}

\section{Pseudo-Euclidean Jordan algebras}

\begin{defn}
  A (non-associative) algebra $\J$ over $\CC$ is called a
  (commutative) {\em Jordan algebra} if its product is commutative and
  satisfies the following identity ({\em Jordan identity}):
  \begin{equation} \label{I} (xy)x^2 = x(yx^2), \forall \ x, y, z \in
    \J.
\end{equation}
\end{defn}

For instance, any commutative algebra with an associative product is a
Jordan algebra.

Given an algebra $A$, the {\em commutator} $[x,y] := xy-yx$, $\forall
\ x,y \in A$ measures the commutativity of $A$. Similarly the {\em
  associator} defined by \[ (x,y,z):= (xy)z - x(yz), \ \forall \ x, y,
z \in A.\] measures the associativity of $A$. In terms of associators,
the Jordan identity in a Jordan algebra $\J$ becomes
\begin{equation} \label{I'} (x, y, x^2) = 0, \forall \ x, y, z \in \J.
  \end{equation}

  An algebra $A$ is called a power-associative algebra if the
  subalgebra generated by any element $x \in A$ is associative (see
  \cite{Sch61} for more details). A Jordan algebra is an example of a
  power-associative algebra. A power-associative algebra $A$ is called
  {\em trace-admissible} if there exists a bilinear form $\tau$ on $A$
  that satisfies:
\begin{enumerate}
    \item $\tau(x,y) = \tau(y,x)$,
    \item $\tau(xy,z) = \tau(x,yz)$,
    \item $\tau(e,e) \neq 0$ for any idempotent $e$ of $A$,
    \item $\tau(x,y) = 0$ if $xy$ is nilpotent or $xy =0$.
\end{enumerate}

It is a well-known result that simple (commutative) Jordan algebras
are trace-admissible \cite{Alb49}. A similar fact is proved for any
{\em non-commutative} Jordan algebras of characteristic 0
\cite{Sch55}. Recall that non-commutative Jordan algebras are algebras
satisfying (\ref{I}) and the {\em flexible} condition $(xy)x = x(yx)$
(a weaker condition than commutativity).

A bilinear form $B$ on a Jordan algebra $\J$ is {\em associative}
if \[B(xy,z) = B(x, yz), \forall \ x,y,z\in \J.\] The following
definition is quite natural:

\begin{defn}
  Let $\J$ be a Jordan algebra equipped with an associative symmetric
  non-degenerate bilinear form $B$.  We say that the pair $(\J, B)$ is
  a {\em pseudo-Euclidean Jordan algebra} and $B$ is an {\em
    associative scalar product} on $\J$.
\end{defn}

Recall that a real finite-dimensional Jordan algebra $\J$ with a unit
element $e$ (that means, $xe=ex=x,\ \forall x\in\J$) is called {\em
  Euclidean} if there exists an associative inner product on
$\J$. 
This is equivalent to say that the associated trace form $\Tr(xy)$ is
positive definite, where $\Tr(x)$ is the sum of eigenvalues in the
spectral decomposition of $x \in \J$. To obtain a pseudo-Euclidean
Jordan algebra, we replace the base field $\RR$ by $\CC$ and the inner
product by a non-degenerate symmetric bilinear form (considered as a
generalized inner product) on $\J$ keeping its associativity.

\begin{lem}\label{lem1.3}
  Let $(\J,B)$ be a pseudo-Euclidean Jordan algebra and $I$ be a {\bf non-degenerate ideal} of $\J$, that is, the
  restriction $B|_{I \times I}$ is non-degenerate. Then $I^\bot$ is
  also an ideal of $\J$, $II^\bot = I^\bot I = \{0\}$ and $I\cap
  I^\bot = \{0\}$.
\end{lem}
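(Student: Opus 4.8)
The plan is to prove the three assertions in the natural order, using associativity of $B$ throughout as the only real tool.

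First I would show $I^\bot$ is an ideal. Recall $I^\bot = \{z \in \J : B(z,I) = 0\}$. Take $z \in I^\bot$ and $a \in \J$; I must check $az \in I^\bot$, i.e. $B(az, x) = 0$ for all $x \in I$. By associativity of $B$ and commutativity of $\J$, $B(az,x) = B(z,ax)$. Since $I$ is an ideal, $ax \in I$, and since $z \in I^\bot$, $B(z,ax) = 0$. Hence $az \in I^\bot$, so $I^\bot$ is an ideal. This step is short and is the template for the others.

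Next I would prove $I \cap I^\bot = \{0\}$. Let $w \in I \cap I^\bot$. Then for every $x \in I$ we have $B(w,x) = 0$ because $w \in I^\bot$, so $w$ lies in the radical of $B|_{I \times I}$. Since $B$ is non-degenerate on $I$ by hypothesis, $w = 0$. This uses precisely the non-degeneracy assumption on the ideal $I$, which is why that hypothesis is stated.

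Finally, for $II^\bot = I^\bot I = \{0\}$: commutativity of $\J$ gives $II^\bot = I^\bot I$, so it suffices to treat one product. Take $x \in I$ and $z \in I^\bot$; then $xz \in I$ since $I$ is an ideal, and $xz \in I^\bot$ since $I^\bot$ is an ideal (just proven). Thus $xz \in I \cap I^\bot = \{0\}$, giving $xz = 0$. The main (and only mild) obstacle is organizing the logical dependencies correctly: the product-vanishing statement relies on $I^\bot$ already being known to be an ideal, which in turn needs associativity, while the intersection-triviality statement is where the non-degeneracy hypothesis on $I$ is genuinely consumed. Everything reduces to one-line applications of $B(xy,z) = B(x,yz)$ together with commutativity, so no heavy computation is required.
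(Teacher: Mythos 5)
Your proof is correct and follows essentially the same strategy as the paper's: associativity of $B$ gives that $I^\bot$ is an ideal, and non-degeneracy of $B|_{I\times I}$ gives $I\cap I^\bot=\{0\}$, both exactly as in the paper. The only divergence is the product-vanishing step: the paper shows $B(II^\bot,\J)=B(I,I^\bot\J)=0$ and invokes non-degeneracy of $B$ on all of $\J$, whereas you deduce $II^\bot\subset I\cap I^\bot=\{0\}$ from the two ideal properties together with the trivial intersection. Both are one-line arguments and equally valid; your ordering of the claims (ideal, then intersection, then products) is the right one for your version, since the last step consumes the first two.
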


\begin{proof}
  Let $x\in I^\bot, y\in\J$, one has $B(xy,I)= B(x,yI) = 0$
  then $xy\in I^\bot$ and $I^\bot$ is an ideal.

  If $x\in I^\bot$ such that $B(x,I^\bot) = 0$ then $x\in I$ and
  $B(x,I) =0$. Since $I$ is non-degenerate then $x=0$. That implies
  that $I^\bot$ is non-degenerate.

  Since $B(II^\bot,\J) =  B(I,I^\bot\J) = 0$ then $II^\bot = I^\bot I = \{0\}$.

  If $x\in I\cap I^\bot$ then $B(x,I) = 0$. Since $I$ non-degenerate,
  then $x=0$.
\end{proof}

By the proof of above Lemma, given a non-degenerate subspace $W$ of
$\J$ then $W^\bot$ is also non-degenerate and $\J=W\oplus W^\bot$. In
this case, we use the notation:
\[\J = W\oplusp W^\bot.\]

\begin{rem}
  A pseudo-Euclidean Jordan algebra does not necessarily have a unit
  element. However if that is the case, this unit element is certainly
  unique. A Jordan algebra with unit element is called a {\em unital}
  Jordan algebra. If $\J$ is not a unital Jordan algebra, we can
  extend $\J$ to a unital Jordan algebra $\overline{\J} = \CC e\oplus
  \J$ by the product
  \[(\lambda e + x)\star(\mu e +y ) = \lambda \mu e + \lambda y + \mu
  x + xy.\] More particularly, $e\star e =e$, $e\star x=x\star e = x$
  and $x\star y = xy$. In this case, we say $\overline{\J}$ the {\em
    unital extension} of $\J$.
\end{rem}
\begin{prop}
  If $(\J,B)$ is unital then there is a decomposition:
  \[\J = \J_1\oplusp\dots\oplusp\J_k,\]
  where $\J_i,\ i=1,\dots,k$ are unital and indecomposable ideals.
\end{prop}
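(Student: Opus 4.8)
The plan is to argue by induction on $\dim(\J)$. Call $\J$ \emph{indecomposable} if it cannot be written as $\J = \J'\oplusp\J''$ with $\J'$ and $\J''$ both nonzero ideals. If $\J$ is already indecomposable (in particular when $\dim(\J)\le 1$) there is nothing to prove: take $k=1$. Otherwise $\J$ admits a proper nonzero ideal $I$ on which $B$ is non-degenerate, and Lemma \ref{lem1.3} provides the orthogonal splitting $\J = I\oplusp I^\bot$ into ideals satisfying $II^\bot = I^\bot I = \{0\}$, with $B$ non-degenerate on each factor.

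Next I would check that both factors are themselves unital pseudo-Euclidean Jordan algebras, so that the induction hypothesis applies. Non-degeneracy of $B$ restricted to $I\times I$ and to $I^\bot\times I^\bot$ is exactly Lemma \ref{lem1.3}. To produce units, write the unit $e$ of $\J$ as $e = e_1 + e_2$ with $e_1\in I$ and $e_2\in I^\bot$. For $x\in I$ one has $x = ex = e_1 x + e_2 x$, and $e_2 x \in I^\bot I = \{0\}$, so $x = e_1 x = x e_1$ by commutativity; thus $e_1$ is a unit for $I$, and symmetrically $e_2$ is a unit for $I^\bot$. Hence the induction hypothesis decomposes $I$ and $I^\bot$ into unital indecomposable ideals \emph{of $I$} and \emph{of $I^\bot$} respectively.

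The step I expect to be the main obstacle is verifying that these pieces are genuinely ideals of $\J$, not merely of the respective factors $I$ and $I^\bot$ (this fails for ideals of ideals in general algebras). Here orthogonality is precisely what saves the argument: if $J$ is an ideal of $I$ and $x\in\J$ decomposes as $x = x' + x''$ with $x'\in I$ and $x''\in I^\bot$, then for $j\in J\subseteq I$ we have $x'' j \in I^\bot I = \{0\}$, so $xj = x' j \in J$ because $J$ is an ideal of $I$; by commutativity $jx = xj \in J$ as well, so $J$ is an ideal of $\J$. Combining the decompositions of $I$ and $I^\bot$, and noting that a piece inside $I$ is orthogonal to a piece inside $I^\bot$ (since $I\bot I^\bot$), then assembles the desired orthogonal direct sum $\J = \J_1\oplusp\cdots\oplusp\J_k$ with each $\J_i$ unital and indecomposable, where each $\J_i$ inherits its unit from the unit of the factor containing it as constructed above.
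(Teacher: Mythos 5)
Your proposal is correct and follows essentially the same route as the paper: split off a non-degenerate proper ideal via Lemma \ref{lem1.3}, decompose the unit as $e=e_1+e_2$ and use $II^\bot=\{0\}$ to see that each summand is a unit for its factor, then induct on dimension. The only difference is that you explicitly verify that ideals of the factors remain ideals of $\J$ (via $I^\bot I=\{0\}$), a point the paper leaves implicit; this is a worthwhile addition but not a different argument.
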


\begin{proof}
  The assertion is obvious if $\J$ is indecomposable. Assume that $\J$
  is decomposable, that is, $\J = I\oplus I'$ with $I,\ I' \neq \{0\}$
  proper ideals of $\J$ such that $I$ is non-degenerate. By the above
  Lemma, $I' = I^\bot$ and we write $\J = I\oplusp I^\bot$. Assume
  that $\J$ has the unit element $e$. If $e\in I$ then for $x$ a
  nonzero element in $I^\bot$, we have $e x = x \in I^\bot$. This is a
  contradiction. This happens similarly if $e\in I^\bot$. Therefore,
  $e = e_1+e_2$ where $e_1\in I$ and $e_2\in I^\bot$ are nonzero
  vectors. For all $x\in I$, one has:
  \[x = ex = (e_1+e_2)x = e_1x = xe_1.\] It implies that $e_1$ is the
  unit element of $I$. Similarly, $e_2$ is also the unit element of
  $I^\bot$. Since the dimension of $\J$ is finite then by induction,
  one has the result.
\end{proof}

\begin{ex}\label{ex1.6}
  Let us recall an example in Chapter II of \cite{FK94}: consider
  $\qk$ a vector space over $\CC$ and $B:\qk\times\qk\rightarrow \CC$
  a symmetric bilinear form. Define the product below on the vector
  space $\J = \CC e\oplus \qk$:
\[(\lambda e + u)(\mu e + v) := (\lambda\mu + B(u,v))e + \lambda v +
\mu u,
\]
for all $\lambda,\mu\in\CC, u,v\in\qk$. In particular, $e^2=e$, $ue=eu
= u$ and $uv=B(u,v)e$. This product makes $\J$ a Jordan
algebra.

Now, we add the condition that $B$ is non-degenerate and define a
bilinear form $B_\J$ on $\J$ by:
\[B_\J(e,e) = 1, \ B_\J(e,\qk) = B_\J(\qk,e) = 0 \text{ and
}B_\J|_{\qk\times\qk} = B.
\]
Then $B_\J$ is associative and non-degenerate and $\J$ becomes a
pseudo-Euclidean Jordan algebra with unit element $e$.

\end{ex}

\begin{ex}
  Let us slightly change Example \ref{ex1.6} by setting \[\J' := \CC
  e\oplus \qk\oplus \CC f.\] Define the product of $\J'$ as follows:
  \[e^2=e,\ ue=eu = u, \ ef=fe=f,\ uv=B(u,v)f \text{ and } uf = fu
  =ff=0,\] for all $u,v\in\qk$. It is easy to see that $\J'$ is the unital extension of the
  Jordan algebra $\J = \qk\oplus \CC f$, where the product on $\J$ is
  defined by:
  \[uv=B(u,v)f, \ uf = fu= 0, \forall u,v\in \qk.\] Moreover, $\J'$ is
  a pseudo-Euclidean Jordan algebra with the bilinear form $B_{\J'}$
  defined by:
  \[B_{\J'}\left( \lambda e + u + \lambda' f,\mu e + v + \mu' f\right)
  = \lambda \mu'+\lambda' \mu + B(u,v).\] We will meet this algebra
  again in the next Section.
\end{ex}

Recall the definition of a representation of a Jordan algebra:

\begin{defn}
  A {\em Jacobson representation} (or simply, a {\em representation})
  of a Jordan algebra $\J$ on a vector space $V$ is a linear map $\J
  \to \End(V)$, $x \mapsto S_x$ satisfying for all $x$, $y$, $z \in \J$,

\begin{enumerate}
\item $[S_x, S_{yz}] + [S_y, S_{zx}] + [S_z, S_{xy}] = 0$,

\item $S_x S_y S_z + S_z S_y S_x + S_{(xz)y} = S_x S_{yz} + S_y S_{zx}
  + S_z S_{xy}.$

\end{enumerate}

\end{defn}

\begin{rem}
  An equivalent definition of a representation of $\J$ can be found
  for instance in \cite{BB08}, as a necessary and sufficient condition
  for the vector space $\J_1 = \J \oplus V$ equipped with the
  product: \[(x+u)(y+v) = xy + S_x(v)+S_y(u),\ \forall
  x,y\in\J,u,v\in V \] to be a Jordan algebra. In this case,
  Jacobson's definition is different from the usual definition of
  representation, that is, as a homomorphism from $\J$ into the Jordan
  algebra of linear maps.
\end{rem}

For $x \in \J$, let $R_x \in \End(\J)$ be the endomorphism of $\J$
defined by: \[ R_{x}(y) = xy =yx, \forall \ y \in \J.\] Then the
Jordan identity is equivalent to $[R_{x}, R_{x^2}] = 0, \forall \ x
\in \J$ where $[\cdot, \cdot]$ denotes the Lie bracket on
$\End(\J)$. The linear maps \[ R: \J \rightarrow \End(\J) \; \text{
  with } R(x) := R_{x}\] \[ \; \text{ and } R^*: \J
\rightarrow \End(\J^*) \; \text{ with } R^*(x)(f) = f\circ R_x,
\forall \ x \in \J, f \in \J^*,\] are called respectively the {\em
  adjoint representation} and the {\em coadjoint representation} of
$\J$. It is easy to check that they are indeed representations of
$\J$. Recall that there exists a natural non-degenerate bilinear from
$\langle \cdot, \cdot \rangle$ on $\J\oplus\J^*$ defined by $\langle
x,f \rangle := f(x),\ \forall x\in\J,\ f\in\J^*$. For all
$x,y\in\J,f\in\J^*$, one has:
\[ f(xy) = \langle xy,f \rangle = \langle R_x(y),f \rangle = \langle
y,R_x^*(f) \rangle.\] That means that $R_x^*$ is the adjoint map of
$R_x$ with respect to the bilinear form $\langle \cdot, \cdot
\rangle$.

The following proposition gives a characterization of pseudo-Euclidean
Jordan algebras. A proof can be found in \cite{BB08}, Proposition 2.1
or \cite{Bor97}, Proposition 2.4.

\begin{prop}
  Let $\J$ be a Jordan algebra. Then $\J$ is pseudo-Euclidean if, and
  only if, its adjoint representation and coadjoint representation are
  equivalent.
\end{prop}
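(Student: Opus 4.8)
The plan is to realise the equivalence of the two representations by an explicit intertwining isomorphism, the whole point being that such an isomorphism is essentially the same datum as an associative scalar product. The forward direction is a formal check; the only genuinely delicate issue, which I postpone to the end, is \emph{symmetry} of the form in the converse.

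First I would treat the forward implication. Assuming $(\J,B)$ is pseudo-Euclidean, set $\Phi\colon\J\to\J^*$, $\Phi(x)=B(x,\cdot)$. Non-degeneracy of $B$ says exactly that $\Phi$ is a linear isomorphism. To see that $\Phi$ intertwines $R$ and $R^*$, I would evaluate $\Phi\circ R_x$ and $R^*_x\circ\Phi$ on an arbitrary $z\in\J$ and test against $y\in\J$: the first gives $B(xz,y)$ and the second gives $B(z,xy)$, so the intertwining relation $\Phi\circ R_x=R^*_x\circ\Phi$ reduces to $B(xz,y)=B(z,xy)$. This is immediate from commutativity of $\J$ and associativity of $B$, namely $B(xz,y)=B(zx,y)=B(z,xy)$. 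Hence $\Phi$ is an intertwiner isomorphism and $R\simeq R^*$.

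For the converse I would start from an isomorphism $\Phi\colon\J\to\J^*$ with $\Phi\circ R_x=R^*_x\circ\Phi$ for all $x$, and define $B(x,y):=\Phi(x)(y)$. Since $\Phi$ is bijective, $B$ is non-degenerate. Reading the intertwining relation backwards gives $B(xz,y)=B(z,xy)$ for all $x,y,z$; combined with commutativity this is precisely associativity $B(ab,c)=B(a,bc)$. Thus $B$ is a non-degenerate associative form, and the entire remaining difficulty is to produce a \emph{symmetric} one.

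To handle symmetry --- which I expect to be the main obstacle, since an arbitrary intertwiner yields only a possibly non-symmetric $B$ --- I would first note that the transpose $B^t(x,y):=B(y,x)$ is again associative: here commutativity of $\J$ is essential, as $B^t(xy,z)=B(z,xy)=B(z,yx)=B(zy,x)=B(yz,x)=B^t(x,yz)$. Consequently the antisymmetric part $B_a=\tfrac12(B-B^t)$ is an antisymmetric associative form, and the same chain of identities forces $B(zx,y)=B(y,zx)$, i.e. $B_a(\J\J,\J)=0$. When $\J\J=\J$, for instance when $\J$ is unital, this already gives $B_a=0$, so $B=B^t$ is symmetric and we are done. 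In general I would exploit that the radical $\operatorname{rad}(B_s)$ of the symmetric part $B_s=\tfrac12(B+B^t)$ is an ideal: since $B_a$ vanishes on $\J\J$ one checks $\J\J\cap\operatorname{rad}(B_s)=0$ by non-degeneracy of $B$, whence $\operatorname{rad}(B_s)\subseteq\Ann(\J)$, and moreover $v\mapsto B(v,\cdot)$ embeds $\operatorname{rad}(B_s)$ into $\J^*$. This structural confinement of the anti-symmetry to the annihilator is what lets one correct $B_s$ on its radical into a non-degenerate symmetric associative form (as in \cite{Bor97}, \cite{BB08}). Carrying out this last correction is the only nontrivial step; everything else is formal.
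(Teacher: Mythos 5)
The paper does not actually prove this proposition itself --- it only points to \cite{BB08}, Proposition 2.1 and \cite{Bor97}, Proposition 2.4 --- so your attempt has to be judged on its own terms. Everything you do prove is correct: an intertwiner $\Phi$ between $R$ and $R^*$ yields a non-degenerate form $B(x,y)=\Phi(x)(y)$ with $B(xz,y)=B(z,xy)$, hence associative by commutativity; $B^t$ is again associative; the antisymmetric part $B_a$ vanishes on $\J\J\times\J$; and $\operatorname{rad}(B_s)$ is an ideal contained in $\Ann(\J)$ meeting $\J\J$ trivially. You also correctly locate the real difficulty in the symmetry of the form.

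The gap is exactly where you flag it: the ``correction of $B_s$ on its radical'' is asserted, not performed, and it is the entire content of the converse --- at that point you hold a non-degenerate associative form that may fail to be symmetric and a symmetric associative form that may fail to be non-degenerate, and deferring the ``only nontrivial step'' to the very references the paper already cites leaves the proposition unproved. The step is, however, fillable from facts you already have, and you should write it out. Since $V:=\operatorname{rad}(B_s)$ satisfies $V\cap\J\J=\{0\}$, choose a complement $U$ of $V$ in $\J$ with $\J\J\subseteq U$; then $B_s(U,V)=0$ and $B_s$ restricted to $U\times U$ is non-degenerate. Pick any non-degenerate symmetric form $\gamma_0$ on $V$ and let $\gamma(u+v,u'+v'):=\gamma_0(v,v')$. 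Because every product lies in $\J\J\subseteq U$, both $\gamma(xy,z)$ and $\gamma(x,yz)$ vanish, so $\gamma$ is trivially associative; and $B_s+\gamma$ is block-diagonal for $\J=U\oplus V$ with non-degenerate blocks, hence is a non-degenerate symmetric associative form. With that paragraph added your argument becomes a complete, self-contained proof (and arguably more than the paper provides); without it, it is an outline with its key step missing.
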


We will need some special subspaces of an arbitrary algebra $\J$:

\begin{defn}\label{1.11}
Let $\J$ be an algebra.
\begin{enumerate}
\item The subspace \[(\J, \J, \J) := \spa\{(x,y,z) \mid x,y, z\in
  \J\}\] is the {\em associator} of $\J$.

\item The subspaces \[\LAnn(\J):= \{x\in \J \mid x\J = 0\}, \] \[
  \RAnn(\J):= \{x\in \J \mid \J x = 0\} \text{ and } \] \[\Ann(\J):=
  \{x\in \J \mid x\J = \J x = 0\}\] are respectively the {\em
    left-annulator}, the {\em right-annulator} and the {\em annulator}
  of $\J$. Certainly, if $\J$ is commutative then these subspaces
  coincide.

\item The subspace \[N(\J):= \{x\in \J \mid (x,y,z) = (y,x,z) =
  (y,z,x) = 0, \forall y,z \in \J \}\] is the {\em nucleus} of $\J$.

\end{enumerate}
\end{defn}

The proof of the Proposition below is straightforward and we omit it.

\begin{prop} If $(\J, B)$ is a pseudo-Euclidean Jordan algebra then
  \begin{enumerate}
  \item the nucleus $N(\J)$ coincide with the {\bf center} $\Zb(\J)$
    of $\J$ where $\Zb(\J) = \{ x \in N(\J) \mid xy = yx, \forall y
    \in \J\}$, that is, the set of all elements $x$ that commute and
    associate with all elements of $\J$. Therefore \[ N(\J)= \Zb(\J) =
    \{x\in \J \mid (x,y,z) =0, \forall y,z\in \J\}.\]
    \item $\Zb(\J)^\perp = (\J, \J, \J)$.
    \item $\left( \Ann(\J) \right)^\perp = \J^2$.
 \end{enumerate}
\end{prop}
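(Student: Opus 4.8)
The plan is to reduce the entire proposition to a single structural identity together with the non-degeneracy of $B$. First I would record the basic consequence of the associativity and symmetry of $B$: for all $x,y,z,w\in\J$,
\[ B((x,y,z),w) = -B(x,(y,z,w)). \]
This follows by expanding $(x,y,z)=(xy)z-x(yz)$ and repeatedly sliding factors across $B$, namely $B((xy)z,w)=B(xy,zw)=B(x,y(zw))$ and $B(x(yz),w)=B(x,(yz)w)$, so that $B((x,y,z),w)=B(x,\,y(zw)-(yz)w)=-B(x,(y,z,w))$. Combined with non-degeneracy, this identity is the engine that converts statements of the form ``$(x,\cdot,\cdot)=0$'' into orthogonality statements and back, which is exactly what each of the three items requires.

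For item (1), the equality $N(\J)=\Zb(\J)$ is immediate: since $\J$ is commutative, the condition $xy=yx$ in the definition of $\Zb(\J)$ is vacuous, so $\Zb(\J)=N(\J)$. The substantive claim is that $N(\J)=\{x\mid (x,y,z)=0,\ \forall y,z\}$; one inclusion is trivial from the definition of the nucleus, and for the other I must show that $(x,y,z)=0$ for all $y,z$ forces $(y,x,z)=0$ and $(y,z,x)=0$ as well. Here I would apply the displayed identity: $B((y,x,z),w)=-B(y,(x,z,w))=0$ for every $w$, since $x$ associates in the first slot, whence $(y,x,z)=0$ by non-degeneracy; then $B((y,z,x),w)=-B(y,(z,x,w))=0$, since the middle-slot associator was just shown to vanish, giving $(y,z,x)=0$.

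For item (2), I would prove $\Zb(\J)=(\J,\J,\J)^\perp$ and then dualize. If $x\in\Zb(\J)=N(\J)$ then $B(x,(y,z,w))=-B((x,y,z),w)=0$, so $x\perp(\J,\J,\J)$; conversely, if $x\perp(\J,\J,\J)$ then $B((x,y,z),w)=-B(x,(y,z,w))=0$ for all $w$, and non-degeneracy gives $(x,y,z)=0$, i.e.\ $x\in N(\J)=\Zb(\J)$. Since $B$ is non-degenerate and $\J$ finite-dimensional, $(W^\perp)^\perp=W$, so taking orthogonals yields $\Zb(\J)^\perp=(\J,\J,\J)$. Item (3) is even shorter and uses only associativity of $B$: one has $x\in\Ann(\J)$ iff $xy=0$ for all $y$, iff $B(xy,z)=0$ for all $y,z$ (non-degeneracy), iff $B(x,yz)=0$ for all $y,z$, iff $x\perp\J^2$; hence $\Ann(\J)=(\J^2)^\perp$ and $\Ann(\J)^\perp=\J^2$.

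The computations are genuinely routine, as the authors note, so there is no serious obstacle; the only point requiring care is the middle-slot associator in item (1). The symmetry available from commutativity alone, namely $(a,b,c)=-(c,b,a)$, merely exchanges the two outer arguments and cannot by itself move a zero into the central position. It is therefore essential to invoke the $B$-identity (and non-degeneracy) rather than to argue purely combinatorially from commutativity.
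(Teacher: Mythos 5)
Your proof is correct, and since the paper explicitly omits the proof as straightforward, there is no competing argument to compare it with; the identity $B((x,y,z),w)=-B(x,(y,z,w))$ together with non-degeneracy is exactly the standard engine for all three items, and you rightly flag the middle-slot associator in (1) as the one place where commutativity alone (via $(a,b,c)=-(c,b,a)$) does not suffice and the bilinear form must be invoked.
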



Just as in \cite{DPU10} where we have defined reduced quadratic Lie
algebras, we can define here:

\begin{defn}
A pseudo-Euclidean Jordan algebra $(\J,B)$ is {\em reduced} if
\begin{enumerate}
    \item $\J\neq \{0\}$,
    \item $\Ann(\J)$ is totally isotropic, that means $B(x,y)=0$ for all $x,y\in\Ann(\J)$.
\end{enumerate}
\end{defn}

\begin{prop}\label{prop-red}
  Let $\J$ be non-Abelian pseudo-Euclidean Jordan algebra. Then $\J =
  \zk\oplusp\lk$, where $\zk\subset \Ann(\J)$ and $\lk$ is reduced.
\end{prop}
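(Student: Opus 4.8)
The plan is to peel off a non-degenerate piece of $\Ann(\J)$ and show that the orthogonal complement is reduced. If $B$ vanishes identically on $\Ann(\J)$, then $\Ann(\J)$ is already totally isotropic and $\J$ itself is reduced (it is nonzero, being non-Abelian); in that case I take $\zk = \{0\}$ and $\lk = \J$. So I may assume $B$ does not vanish on $\Ann(\J)$.

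First I would decompose $\Ann(\J)$ using the radical of the restricted form. Write $\rk := \{x\in\Ann(\J)\mid B(x,\Ann(\J)) = 0\}$ for the radical of $B|_{\Ann(\J)\times\Ann(\J)}$; by definition $B$ vanishes on $\rk$, so $\rk$ is totally isotropic. I then choose a complement $\zk$ of $\rk$ in $\Ann(\J)$ on which $B|_{\zk\times\zk}$ is non-degenerate, which exists by the standard structure theory of symmetric bilinear forms. Since $\zk\subseteq\Ann(\J)$ we have $\zk\J = \J\zk = \{0\}$, so $\zk$ is a non-degenerate ideal, and Lemma \ref{lem1.3} yields the orthogonal decomposition $\J = \zk\oplusp\lk$ with $\lk := \zk^\bot$ a non-degenerate ideal, hence a pseudo-Euclidean Jordan algebra in its own right. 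Because $\J$ is non-Abelian while $\zk$ annihilates everything, $\lk\neq\{0\}$.

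The heart of the argument is to identify $\Ann(\lk)$ and to show it is totally isotropic. I would first observe that all products in $\J$ are carried by $\lk$: since $\zk\subset\Ann(\J)$, for $z+l\in\zk\oplus\lk$ one has $(z+l)\J = l\lk$, so $z+l\in\Ann(\J)$ if and only if $l\in\Ann(\lk)$. This gives $\Ann(\J) = \zk\oplus\Ann(\lk)$ and in particular $\Ann(\lk) = \Ann(\J)\cap\lk$. The remaining point is the equality $\Ann(\lk) = \rk$. One inclusion is immediate: $\rk$ is orthogonal to all of $\Ann(\J)\supseteq\zk$, so $\rk\subseteq\zk^\bot = \lk$, whence $\rk\subseteq\Ann(\J)\cap\lk = \Ann(\lk)$. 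The reverse inclusion follows by comparing the two decompositions $\Ann(\J) = \zk\oplus\rk = \zk\oplus\Ann(\lk)$ inside $\lk$, using $\zk\cap\lk = \{0\}$. Since $\rk$ is totally isotropic, so is $\Ann(\lk)$, and therefore $\lk$ is reduced, which completes the proof.

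The main obstacle I anticipate is precisely this bookkeeping around $\Ann(\lk)$: one must be careful that the annihilator computed inside the subalgebra $\lk$ coincides with the intersection $\Ann(\J)\cap\lk$, which relies crucially on $\zk$ lying in $\Ann(\J)$ so that no products escape into the $\zk$ summand. Once that identification is pinned down, the total isotropy of $\Ann(\lk)$ is inherited directly from that of the radical $\rk$.
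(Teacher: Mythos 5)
Your proposal is correct and follows essentially the same route as the paper: your radical $\rk$ of $B|_{\Ann(\J)\times\Ann(\J)}$ coincides with the paper's $\zk_0 = \Ann(\J)\cap\J^2$ (because $\Ann(\J)^\perp = \J^2$), and both arguments then split off a complement $\zk$ on which $B$ is non-degenerate, invoke Lemma \ref{lem1.3} to get $\J=\zk\oplusp\lk$, and identify $\Ann(\lk)$ with this totally isotropic radical.
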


\begin{proof}
  The proof is completely similar to Proposition 6.7 in \cite{PU07}.
  Let $\zk_0 = \Ann(\J)\cap \J^2$, $\zk$ is a complementary subspace
  of $\zk_0$ in $\Ann(\J)$ and $\lk = \zk^\bot$. If $x$ is an element
  in $\zk$ such that $B(x,\zk) = 0$ then $B(x,\J^2) = 0$ since
  $\Ann(\J) = (\J^2)^\perp$. As a consequence, $B(x,\zk_0) = 0$ and
  therefore $B(x,\Ann(\J)) = 0$. That implies $x\in\J^2$. Hence, $x=0$
  and the restriction of $B$ to $\zk$ is non-degenerate. Moreover,
  $\zk$ is an ideal then by Lemma \ref{lem1.3}, the restriction of
  $B$ to $\lk$ is also a non-degenerate and that $\zk \cap \lk =
  \{0\}$.

  Since $\J$ is non-Abelian then $\lk$ is non-Abelian and $\lk^2 =
  \J^2$. Moreover, $\zk_0=\Ann(\lk)$ and the result follows.
\end{proof}

Next, we will define some extensions of a Jordan algebra and introduce
the notion of a {\em double extension} of a pseudo-Euclidean Jordan
algebra \cite{BB08}.








\begin{defn}\label{4.1.7}
  Let $\J_1$ and $\J_2$ be Jordan algebras and $\pi: \J_{1}
  \rightarrow \End(\J_{2})$ be a representation of $\J_1$ on
  $\J_2$. We call $\pi$ an {\em admissible representation} if it
  satisfies the following conditions:
\begin{enumerate}
\item $\pi(x^2)(yy') + 2(\pi(x)y')(\pi(x)y) +(\pi(x)y')y^2 +
  2(yy')(\pi(x)y)\\ = 2\pi(x)(y'(\pi(x)y)) + \pi(x)(y'y^2) +
  (\pi(x^2)y')y + 2(y'(\pi(x)y))y,$

\item $(\pi(x)y)y^2 = (\pi(x)y^2)y$,

\item $\pi(xx')y^2 + 2(\pi(x')y)(\pi(x)y) =
       \pi(x)\pi(x')y^2 + 2(\pi(x')\pi(x)y)y$,
\end{enumerate}

for all $x,x'\in \J_{1}, y, y' \in \J_{2}$. In this case, the vector
space $\J = \J_1\oplus\J_2$ with the product defined
by: \[(x+y)(x'+y') = xx'+\pi(x)y'+\pi(x')y+yy', \ \forall x,x'\in
\J_1, y,y'\in \J_2 \] becomes a Jordan algebra.
\end{defn}

\begin{defn}
  Let $(\J, B)$ be a pseudo-Euclidean Jordan algebra and $C$ be an
  endomorphism of $\J$. We say that $C$ is {\em symmetric} if
\[
B(C(x),y) = B(x, C(y)), \forall x,y \in \J.
\]

Denote by $\End_{s}(\J)$ the space of symmetric endomorphisms of $\J$.

\end{defn}

The definition below was introduced in \cite{BB08}, Theorem 3.8.

\begin{defn} \label{defde}

  Let $(\J_{1}, B_{1})$ be a pseudo-Euclidean Jordan algebra, $\J_{2}$
  be an arbitrary Jordan algebra and $\pi: \J_{2}
  \rightarrow \End_{s}(\J_{1})$ be an admissible
  representation. Define a symmetric bilinear map $\varphi:
  \J_{1}\times \J_{1} \rightarrow \J_{2}^*$ by: $\varphi(y,y')(x) =
  B_{1}(\pi(x)y,y'), \forall x\in \J_{2}, y,y'\in \J_{1}$. Consider
  the vector space \[\Jb = \J_{2}\oplus \J_{1}\oplus \J_{2}^*\]
  endowed with the product:
\[
(x+y+f)(x'+y'+f') = xx' + yy' +\pi(x)y' + \pi(x')y + f'\circ R_{x} +
f\circ R_{x'} + \varphi(y,y')
\]
for all $x, x' \in \J_{2}, y, y'\in \J_{1}, f, f' \in \J_{2}^*$. Then
$\Jb$ is a Jordan algebra. Moreover, define a bilinear form $B$ on
$\Jb$ by:
\[
B(x+y+f,x'+y'+f') = B_{1}(y,y') + f(x') + f'(x), \forall x, x' \in
\J_{2}, y, y'\in \J_{1}, f, f' \in \J_{2}^*.
\] Then $\Jb$ is a {\bf pseudo-Euclidean Jordan algebra}. The Jordan
algebra $(\Jb, B)$ is called the {\em double extension} of $\J_{1}$ by
$\J_{2}$ by means of $\pi$.
\end{defn}

\begin{rem}\label{rem1.18}
  If $\gamma$ is an associative bilinear form (not necessarily
  non-degenerate) on $\J_2$ then $\Jb$ is again pseudo-Euclidean
  thanks to the bilinear form
  \[B_\gamma(x+y+f,x'+y'+f') = \gamma(x,x')+ B_{1}(y,y') + f(x') +
  f'(x)\] for all $x$, $x' \in \J_{2}$, $y$, $y'\in \J_{1}$, $f$, $f'
  \in \J_{2}^*$.
\end{rem}

\section{Jordanian double extension of a quadratic vector space}

Let $\CC c$ be a one-dimensional Jordan algebra. If $c^2 \neq 0$ then
$c^2 = \lambda c$ for some nonzero $\lambda\in \CC$. Replace $c:=
\frac{1}{\lambda}c$, we obtain $c^2= c$. Therefore, there exist only
two one-dimensional Jordan algebras: one Abelian and one simple. Next,
we will study 
double extensions of a quadratic vector space by these algebras.

Let us start with $(\qk, B_{\qk})$ a {\bf quadratic vector space},
that is, $B_{\qk}$ is a non-degenerate symmetric bilinear form on the
vector space $\qk$. We consider $(\tk = \spa \{x_{1}, y_{1}\},
B_{\tk})$ a 2-dimensional quadratic vector space with the bilinear form $B_{\tk}$
defined by
\[
B_\tk(x_1, x_1) = B_\tk(y_1, y_1) = 0, \ B_\tk(x_1,
  y_1) = 1.
\]
Let $C: \qk \rightarrow \qk$ be a nonzero symmetric map and consider
the vector space \[\J = \qk \oplusp \tk\] equipped with a product
defined by
\begin{eqnarray*}
  &(x + \lambda x_1 + \mu y_1)&( y + \lambda' x_1 + \mu' y_1)  := \\ &&\mu C(y) + \mu'C(x) + B_\qk( C(x), y) x_1 + \e \left( \left( \lambda\mu' + \lambda'\mu \right) x_1 + \mu \mu' y_1\right),
  \end{eqnarray*}
  $\e \in \{0,1 \}$, for all $x, y \in \qk, \lambda, \mu, \lambda',
  \mu' \in \CC$.

  \begin{prop}\label{prop2.1} Keep the notation just above.

  \begin{enumerate}

  \item Assume $\e = 0$. Then $\J$ is a Jordan algebra if, and only
    if, $C^3 = 0$. In this case, we call $\J$ a {\bf nilpotent double
      extension} of $\qk$ by $C$.

  \item Assume $\e = 1$. Then $\J$ is a Jordan algebra if, and only
    if, $3C^2 = 2C^3 + C$.  Moreover, $\J$ is pseudo-Euclidean with
    the bilinear form $B = B_\qk + B_\tk$. In this case, we call $\J$ a
    {\bf diagonalizable double extension} of $\qk$ by $C$.
  \end{enumerate}

  \end{prop}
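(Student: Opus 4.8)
The plan is to verify the Jordan identity directly on the defining product, using the associator formulation \eqref{I'}, namely $(x,y,x^2)=0$ for all $x,y\in\J$. Since both the commutativity and the bilinearity are built into the product, the only thing to check is that the product is commutative (immediate from the symmetric expression, using $B_\qk(C(x),y)=B_\qk(x,C(y))=B_\qk(C(y),x)$ because $C$ is symmetric) and that the Jordan identity holds. I would introduce a generic element $X = x+\lambda x_1 + \mu y_1$ and first compute $X^2$ explicitly, then compute both $(XY)X^2$ and $X(YX^2)$ for a second generic $Y = y + \lambda' x_1 + \mu' y_1$, and compare. The whole computation is polynomial in the scalars $\lambda,\mu,\lambda',\mu'$ and linear in the $\qk$-components, so I expect the Jordan identity to reduce to a finite list of operator identities on $C$ that must hold as endomorphisms of $\qk$.

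First I would record the square: in the case $\e=0$ one gets $X^2 = 2\mu C(x) + B_\qk(C(x),x)\,x_1$, while for $\e=1$ there is in addition the $\tk$-contribution $2\lambda\mu\,x_1 + \mu^2 y_1$. Substituting into $(X,Y,X^2)$ and collecting terms, the coefficients of the various monomials in $\mu,\mu'$ will produce expressions such as $C^3(x)$, $C^2(x)$, and $C(x)$ paired against $B_\qk$. The key point is that after using the associativity of $B_\qk$ with respect to $C$ (symmetry of $C$) to move $C$ across the form, the scalar identities that appear will all be consequences of a single endomorphism identity: $C^3 = 0$ when $\e=0$, and $3C^2 = 2C^3 + C$ when $\e=1$. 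Conversely, each of these identities is forced by choosing suitable $x,y$ and reading off the coefficient of an independent monomial, which gives the "only if" direction. I would organize the verification by degree in $\mu$ so that the highest-degree term isolates the top power of $C$.

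For part (2), beyond the Jordan identity I must check that $\J$ is pseudo-Euclidean with $B = B_\qk + B_\tk$. Associativity of $B$ is the identity $B(XY,Z)=B(X,YZ)$; I would verify it on the generators, reducing to the defining relations $B_\tk(x_1,y_1)=1$ together with the symmetry of $C$ under $B_\qk$, exactly as in the general $T^*$- and double-extension framework of Definition \ref{defde}. Non-degeneracy of $B$ follows since $B_\qk$ is non-degenerate by hypothesis and $B_\tk$ is a hyperbolic plane, so $B$ is the orthogonal sum of two non-degenerate forms.

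The main obstacle I anticipate is purely bookkeeping: the associator $(X,Y,X^2)$ expands into many terms once $\e=1$, because the $\tk$-part of the product is no longer trivial and mixes with the $\qk$-part through the coefficients $\lambda\mu'+\lambda'\mu$ and $\mu\mu'$. The danger is in correctly tracking the $x_1$- and $y_1$-components, since these feed back through $C$ only via $y_1$ (recall $y_1\star x = C(x)$ while $x_1\star x=0$). I would therefore treat the $\e=1$ case by writing $\J = \qk\oplusp\tk$ and checking the identity block by block, using that $\tk$ is a subalgebra isomorphic to the two-dimensional simple-plus-nilpotent Jordan algebra, so that the $\tk$-internal part of the Jordan identity holds automatically and only the cross terms between $\qk$ and $\tk$ require real work.
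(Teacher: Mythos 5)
Your proposal is correct and follows essentially the same route as the paper: the authors also verify the Jordan identity by expanding $(XY)X^2$ and $X(YX^2)$ for generic elements $X=x+\lambda x_1+\mu y_1$, $Y=y+\lambda'x_1+\mu'y_1$, reading off that the identity is equivalent to $C^3=0$ when $\e=0$ and to $3C^2=2C^3+C$ when $\e=1$, with the pseudo-Euclidean claim handled exactly as you describe (associativity of $B=B_\qk+B_\tk$ on generators plus non-degeneracy as an orthogonal sum). No substantive difference in method.
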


  \begin{proof}\hfill
    \begin{enumerate}

\item Let $x$, $y \in \qk$, $\lambda$, $\mu$, $\lambda'$, $\mu' \in
  \CC$. One has
\[
((x + \lambda x_1 + \mu y_1)( y + \lambda' x_1 + \mu' y_1))(x +
\lambda x_1 + \mu y_1)^2 = 2\mu B_\qk(C^2 (\mu y +\mu' x), C(x)) x_{1}
\]
and \[(x + \lambda x_1 + \mu y_1)(( y + \lambda' x_1 + \mu' y_1)(x +
\lambda x_1 + \mu y_1)^2) = 2\mu^2\mu' C^3(x)\] \[+
2\mu\mu'B_\qk(C(x),C^2(x))x_{1}.\] Therefore, $\J$ is a Jordan algebra
if and only if $C^3 = 0$.

\item The result is achieved by checking directly the equality
  (\ref{I}) for $\J$.

\end{enumerate}

\end{proof}

\subsection{Nilpotent double extensions} \hfill

Consider $\J_1:=\qk$ an Abelian algebra, $\J_2: = \CC y_1$ the
nilpotent one-dimensional Jordan algebra, $\pi(y_1):=C$ and identify
$\J_2^*$ with $\CC x_1$. Then by Definition \ref{defde}, $\J =
\J_{2}\oplus \J_{1}\oplus \J_{2}^*$ is a pseudo-Euclidean Jordan
algebra with a bilinear form $B$ given by $B := B_\qk + B_\tk$.  In
this case, $C$ obviously satisfies the condition $C^3=0$.

An immediate corollary of the definition is:

\begin{cor}\label{cor2.2}
  If $\J= \qk \oplusp (\CC x_1 \oplus \CC y_1)$ is the nilpotent
  double extension of $\qk$ by $C$ then \[ y_1x = C(x),
  xy=B(C(x),y)x_1 \text{ and } y_1y_1 = x_1 \J = 0, \forall x\in \qk.
  \] As a consequence, $\J^2 = \im(C)\oplus\CC x_1$ and $Ann(\J) =
  \ker(C)\oplus\CC x_1$.
\end{cor}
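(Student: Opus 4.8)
The plan is to unwind the definitions in Corollary \ref{cor2.2} using the general product on $\J = \qk \oplusp \tk$ with $\e = 0$, and then read off the two structural consequences ($\J^2$ and $\Ann(\J)$) directly from the resulting multiplication table.

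Let me look at the product formula. With $\e = 0$ it reads
\[(x + \lambda x_1 + \mu y_1)( y + \lambda' x_1 + \mu' y_1) = \mu C(y) + \mu'C(x) + B_\qk(C(x), y) x_1.\]
The products in the corollary follow by specializing: to get $y_1 x$ I set $\mu = 1$, $\lambda = 0$ (first factor $y_1$) and the second factor $= x$ (so $\lambda' = \mu' = 0$), which gives $C(x)$. For $xy$ I take both factors in $\qk$ ($\mu = \mu' = 0$), giving $B(C(x),y)x_1$; note here I should remark that $B$ restricted to $\qk$ is $B_\qk$, so $B(C(x),y) = B_\qk(C(x),y)$. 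For $y_1 y_1$ I set $\mu = \mu' = 1$ and $x = y = \lambda = \lambda' = 0$, and since $\e = 0$ the only surviving terms are $C(0) + C(0) = 0$. Finally $x_1 \J = 0$ because $x_1$ corresponds to $\lambda = 1$ with everything else zero, and the product formula has no term that survives when the only nonzero coefficient in the first factor is $\lambda$. These are all one-line substitutions.

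For the two consequences, the key observation is that every nonzero product lands in $\im(C) \oplus \CC x_1$: indeed $y_1 x = C(x) \in \im(C)$ and $xy = B_\qk(C(x),y)x_1 \in \CC x_1$, while all other basis products vanish. Hence $\J^2 \subseteq \im(C) \oplus \CC x_1$; conversely both $\im(C)$ (realized as $y_1 x$) and $\CC x_1$ (realized as $xy$ for suitable $x,y$, using non-degeneracy of $B_\qk$ to find $y$ with $B_\qk(C(x),y) \neq 0$ whenever $C(x) \neq 0$) lie in $\J^2$, giving equality. For the annulator, I would compute $z \J$ for a general $z = x + \lambda x_1 + \mu y_1$ and determine when it vanishes: the contributions are $\mu C(\,\cdot\,)$ (from pairing $y_1$ with $\qk$), $C(x)$-type terms, and $B_\qk(C(x),\cdot)x_1$, so $z \in \Ann(\J)$ forces $\mu = 0$ and $C(x) = 0$, i.e.\ $x \in \ker(C)$, with $\lambda$ free; hence $\Ann(\J) = \ker(C) \oplus \CC x_1$ (commutativity makes left and right annulators coincide, so the single-sided check suffices).

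I do not anticipate a real obstacle here, as everything reduces to substitution into a single bilinear formula; the only point requiring a little care is the two inclusions proving $\J^2 = \im(C) \oplus \CC x_1$ and $\Ann(\J) = \ker(C) \oplus \CC x_1$, where the nontrivial direction uses the non-degeneracy of $B_\qk$ to guarantee that $B_\qk(C(x), \cdot)$ is a nonzero functional precisely when $C(x) \neq 0$. I would state that non-degeneracy explicitly at the point of use so the equalities (rather than mere inclusions) are justified.
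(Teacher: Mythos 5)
Your proposal is correct and matches the paper, which states this corollary as an immediate consequence of the definition and gives no further argument; your substitutions into the $\e=0$ product formula and the two reverse inclusions (using that $C$ is assumed nonzero and $B_\qk$ is non-degenerate, so that $x_1\in\J^2$ and $\mu=0$ is forced in the annulator computation) are exactly the details the paper leaves implicit.
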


\begin{rem}\label{rem2.3}
  In this case, $\J$ is $k$-step nilpotent, $k\leq 3$ since
  $R_x^k(\J)\subset \im(C^k)\oplus\CC x_1$.
\end{rem}

\begin{defn}
  Let $(V,B)$ and $(V',B')$ be two quadratic vector spaces. An {\em
    isometry} is a bijective map $A: V \rightarrow V'$ that satisfies
\[B'(A(v),A(w)) = B(v,w),\ \forall u,v\in V.
\]
The group of isometries of $V$ is denoted by $\OO(V,B)$ (or simply
$\OO(V)$). In the case $(\J,B)$ and $(\J',B')$ are pseudo-Euclidean
Jordan algebras, if there exists a Jordan algebra isomorphism $A$
between $\J$ and $\J'$ such that it is also an isometry then we say
$\J$, $\J'$ are {\em i-isomorphic} and $A$ is an {\em i-isomorphism}.
\end{defn}

\begin{prop} \label{prop2.3}
  Let $(\qk, B)$ be a quadratic vector space. Let $\J = \qk \oplusp
  (\CC x_1 \oplus \CC y_1)$ and $\J' = \qk \oplusp (\CC x'_1 \oplus
  \CC y'_1)$ be nilpotent double extensions of $\qk$, by symmetric
  maps $C$ and $C'$ respectively. Then:
\begin{enumerate}
\item there exists a Jordan algebra isomorphism $A$ between $\J$ and
  $\J'$ such that $A(\qk \oplus \CC x_1) = \qk \oplus \CC x'_1$ if, and
  only if, there exists an invertible map $P \in \End(\qk)$ and a
  nonzero $\lambda \in \CC$ such that $\lambda C' = P C P^{-1}$ and
  $P^* P C = C$, where $P^*$ is the adjoint map of $P$ with respect to
  $B$.

\item there exists a Jordan algebra i-isomorphism $A$ between $\J$ and $\J'$ such that $A(\qk \oplus \CC x_1) = \qk
  \oplus \CC x'_1$ if, and only if, there exists a nonzero $\lambda \in
  \CC$ such that $C$ and $\lambda C'$ are conjugate by an isometry
  $P\in \OO(\qk)$.
\end{enumerate}
\end{prop}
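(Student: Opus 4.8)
The plan is to let the hypothesis $A(\qk\oplus\CC x_1)=\qk\oplus\CC x_1'$ do the structural work, by writing $A$ in triangular form adapted to the flag $\CC x_1\subset\qk\oplus\CC x_1\subset\J$ and reading off all constraints from the product rules recorded in Corollary \ref{cor2.2}. Concretely, for $x\in\qk$ I would write
\[
A(x)=P(x)+\phi(x)x_1',\quad A(x_1)=p_0+\rho x_1',\quad A(y_1)=\beta y_1'+v_0+\gamma x_1',
\]
with $P\in\End(\qk)$, $\phi\in\qk^*$, $p_0,v_0\in\qk$ and $\rho,\beta,\gamma\in\CC$. Bijectivity of $A$ together with the flag condition forces $P$ invertible, $\rho\neq0$ and $\beta\neq0$ (these come from the block-triangular shape of $A$ on $\qk\oplus\CC x_1$ and from $A(y_1)\notin\qk\oplus\CC x_1'$).

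Next I would impose $A(uv)=A(u)A(v)$ on the spanning set $\{x,x_1,y_1\}$. Since $x_1$ annihilates everything and the only nonzero products are $xy=B(C(x),y)x_1$ and $y_1x=C(x)$, this reduces to three computations. The identity $A(xy)=A(x)A(y)$ yields $p_0=0$ (because $C\neq0$ makes $B(C(\cdot),\cdot)$ nontrivial) and $\rho\,B(C(x),y)=B(C'P(x),P(y))$, i.e. $\rho C=P^*C'P$. The identity $A(y_1x)=A(y_1)A(x)$ yields $PC=\beta C'P$ on $\qk$ together with $\phi(C(x))=B(C'(v_0),P(x))$, while $A(y_1^2)=0$ gives $v_0\in\ker C'$, collapsing the latter to $\phi|_{\im C}=0$. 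The two surviving relations are $\rho C=P^*C'P$ and $PC=\beta C'P$. Eliminating $C'$ gives $\lambda C'=PCP^{-1}$ with $\lambda:=\beta$ and $P^*PC=\rho\beta\,C$. For the forward direction of (1) this last equation is not yet the desired $P^*PC=C$, so I would rescale $P\mapsto tP$ with $t^2=(\rho\beta)^{-1}$ (legitimate as $\rho\beta\neq0$): this leaves the conjugation $\lambda C'=PCP^{-1}$ untouched and produces a pair satisfying both $\lambda C'=PCP^{-1}$ and $P^*PC=C$. Conversely, given such a pair I set $A(x)=P(x)$, $A(x_1)=\lambda^{-1}x_1'$, $A(y_1)=\lambda y_1'$ and verify the two relations directly, so this $A$ is the required isomorphism.

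For part (2) I would run the same computation and additionally require that $A$ be an isometry for $B_\J=B+B_\tk$, testing $B_{\J'}(A\cdot,A\cdot)=B_\J(\cdot,\cdot)$ on pairs from $\{x,x_1,y_1\}$. The pair $(x,y)$ in $\qk$ gives $B(P^*Px,y)=B(x,y)$, hence immediately $P\in\OO(\qk)$; the pair $(x_1,y_1)$ gives $\rho\beta=1$; the remaining pairs only constrain the auxiliary data $\phi,v_0,\gamma$ and are satisfiable, being satisfied by the normalized choice $\phi=0,\,v_0=\gamma=0$. Since $P\in\OO(\qk)$ forces $P^*P=\Id$, the isomorphism relation $P^*PC=\rho\beta C$ is automatically consistent with $\rho\beta=1$, and $\beta C'=PCP^{-1}$ reads precisely as ``$C$ and $\lambda C'$ are conjugate by the isometry $P$'' with $\lambda=\beta$. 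The converse uses the same explicit $A$ as in (1), which is now an isometry exactly because $P\in\OO(\qk)$.

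The step I expect to be the real (if modest) obstacle is the bookkeeping that pins down the general form of $A$: one must check that $A(x_1)$ has no $\qk$-component and that $A(y_1)$ carries no obstruction beyond $v_0\in\ker C'$, since it is exactly these facts that make $\rho C=P^*C'P$ and $PC=\beta C'P$ the \emph{complete} list of constraints. The only genuinely non-mechanical point is recognizing that the naive output $P^*PC=\rho\beta C$ must be normalized by a scalar rescaling of $P$ to reach the clean statement $P^*PC=C$, and that in the isometric case this normalization is forced (not chosen), being dictated by $\rho\beta=1$.
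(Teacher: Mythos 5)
Your proposal is correct and follows essentially the same route as the paper's proof: write $A$ in block-triangular form adapted to the flag, extract the two relations $\rho C=P^*C'P$ and $PC=\beta C'P$ from multiplicativity on the products of Corollary \ref{cor2.2}, and rescale $P$ by $(\rho\beta)^{-1/2}$ to normalize $P^*PC=C$. The only cosmetic difference is that the paper gets $A(x_1)\in\CC x_1'$ directly from $x_1\in\J^2$ rather than from the relation $A(xy)=A(x)A(y)$, and it leaves the auxiliary data $\phi, v_0,\gamma$ implicit; your bookkeeping of those is consistent with what the paper does.
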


\begin{proof}\hfill
\begin{enumerate}
\item Assume $A: \J \rightarrow \J'$ be an isomorphism such that
  $A(\qk \oplus \CC x_1) = \qk \oplus \CC x'_1$. Since $x_1 \in \J^2$,
  then there exist $x,y \in \J$ such that $xy = x_1$ (by Proposition
  \ref{prop2.1}). Therefore $A(x_1) = A(x)A(y) \in (\qk \oplus \CC
  x'_1)(\qk \oplus \CC x'_1) = \CC x'_1$. That means $A(x_1) = \mu
  x'_1$ for some nonzero $\mu \in \CC$. Write $A|_\qk = P + \beta
  \otimes x'_1$ with $P \in \End(\qk)$ and $\beta \in \qk^*$. If $x\in
  \ker(P)$ then $ A \left( x - \dfrac{1}{\mu} \beta(x) x_1 \right) =
  0,$ so $x = 0$ and therefore, $P$ is invertible.  For all $x,y \in
  \qk$, one has \[ \mu B(C(x),y)x'_1 = A(xy)=A(x)A(y) =
  B(C'(P(x)),P(y))x'_1. \] So we obtain $P^*C'P = \mu C$.  Assume
  $A(y_1) = y + \delta x'_1 + \lambda y'_1 $, with $y\in \qk$. For all
  $x\in\qk$, one has \[ P(C(x)) + \beta (C(x))x'_1 = A(y_1x) =
  A(y_1)A(x) = \lambda C'(P(x)) + B(C'(y), P(x))x'_1. \] Therefore,
  $\lambda C' = PCP^{-1}$. Combine with $P^*C'P = \mu C$ to get $P^*PC
  = \lambda\mu C$. Replace $P$ by $\dfrac{1}{ (\mu \lambda)^{\frac12}}
  P$ to obtain $\lambda C' = PCP^{-1}$ and $P^*PC = C$.

Conversely, define $A: \J \rightarrow \J'$ by $A(y_1) = \lambda y'_1$,
$A(x) = P(x), \forall x\in \qk$ and $A(x_1) = \frac{1}{\lambda}x'_1$
then it is easy to check $A$ is an isomorphism.

\item If $A: \J \rightarrow \J'$ is an i-isomorphic then the
  isomorphism $P$ in the proof of (1) is also an isometry. Hence $P
  \in \OO(\qk)$. Conversely, define $A$ as in (1) then it is obvious
  that $A$ is an i-isomorphism.
\end{enumerate}

\end{proof}

\begin{prop}
  Let $(\qk, B)$ be a quadratic vector space, $\J = \qk \oplusp (\CC
  x_1 \oplus \CC y_1)$, $\J' = \qk \oplusp (\CC x'_1 \oplus \CC y'_1)$
  be nilpotent double extensions of $\qk$, by symmetric maps $C$ and
  $C'$ respectively. Assume that $\rank(C')\geq 3$. Let $A$ be an
  isomorphism between $\J$ and $\J'$. Then $A(\qk \oplus \CC x_1) =
  \qk \oplus \CC x'_1$.
\end{prop}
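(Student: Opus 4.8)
The plan is to show that $\qk\oplusp\CC x_1$ is a \emph{characteristic} subspace of $\J$, namely the unique codimension-one ideal whose square is one-dimensional, and that the analogous description holds in $\J'$; since any Jordan algebra isomorphism carries codimension-one ideals to codimension-one ideals and preserves the dimension of the square, $A$ will then be forced to send $\qk\oplus\CC x_1$ to $\qk\oplus\CC x'_1$. First I would record that $\rank(C)=\rank(C')$: from $A(\J^2)=(\J')^2$ and the identities $\J^2=\im(C)\oplus\CC x_1$, $(\J')^2=\im(C')\oplus\CC x'_1$ of Corollary \ref{cor2.2} one gets $\rank(C)+1=\dim(\J^2)=\dim((\J')^2)=\rank(C')+1$, so in particular $\rank(C)\ge 3$ as well.

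Next I would classify the codimension-one ideals of $\J$. Since $\J$ is nilpotent (Remark \ref{rem2.3}), every one-dimensional quotient of $\J$ is nilpotent, hence Abelian; therefore every codimension-one ideal $I$ contains $\J^2$, and conversely any codimension-one subspace containing $\J^2=\im(C)\oplus\CC x_1$ is an ideal. Such ideals thus correspond to hyperplanes of $\J/\J^2\cong(\qk/\im(C))\oplus\CC\overline{y_1}$, and split into two families: those containing $\overline{y_1}$, which have the form $I=W\oplus\CC x_1\oplus\CC y_1$ for a hyperplane $W\supseteq\im(C)$ of $\qk$; and those not containing $\overline{y_1}$, which have the form $I_\psi=\{x+\lambda x_1+\psi(x)y_1 : x\in\qk,\ \lambda\in\CC\}$ for a linear form $\psi\in\qk^*$ vanishing on $\im(C)$. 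The ideal $\qk\oplus\CC x_1$ is precisely $I_0$.

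Then I would compute $\dim(I^2)$ in each case from the product of Proposition \ref{prop2.1}. For $I_0=\qk\oplus\CC x_1$ the only nonzero products are $xy=B(C(x),y)x_1$, whence $I_0^2=\CC x_1$ and $\dim(I_0^2)=1$. For $I_\psi$ with $\psi\ne 0$, the $\qk$-component of $I_\psi^2$ is all of $\im(C)$ and $x_1\in I_\psi^2$, giving $\dim(I_\psi^2)=\rank(C)+1\ge 4$. For $I=W\oplus\CC x_1\oplus\CC y_1$ one has $C(W)\subseteq I^2$ with $\dim(C(W))\ge\rank(C)-1$, and $x_1\in I^2$ because the symmetric form $(w,w')\mapsto B(C(w),w')$ has rank $\rank(C)\ge 3$ and so remains nonzero when restricted to the hyperplane $W$; hence $\dim(I^2)\ge\rank(C)\ge 3$. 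Consequently $\qk\oplus\CC x_1$ is the unique codimension-one ideal of $\J$ with $\dim(I^2)=1$, and likewise $\qk\oplus\CC x'_1$ is the unique such ideal of $\J'$ (here $\rank(C')\ge 3$ is used). Applying this to the codimension-one ideal $A(\qk\oplus\CC x_1)$ of $\J'$, whose square satisfies $\dim(A(\qk\oplus\CC x_1)^2)=\dim((\qk\oplus\CC x_1)^2)=1$, forces $A(\qk\oplus\CC x_1)=\qk\oplus\CC x'_1$.

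I expect the main obstacle to be the lower bound $\dim(I^2)\ge 3$ in the family $I=W\oplus\CC x_1\oplus\CC y_1$, and this is exactly where the hypothesis $\rank\ge 3$ is indispensable: one must argue that restricting the symmetric form $(w,w')\mapsto B(C(w),w')$ to a hyperplane cannot annihilate it, its rank dropping by at most $2$, so that $x_1\in I^2$, while $C(W)$ supplies at least $\rank(C)-1$ further dimensions. I would also note that this estimate uses only $\rank(C)\ge 3$ and not $C^2\ne 0$, so no separate treatment of the case $C^2=0$ is required.
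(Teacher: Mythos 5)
Your proof is correct, and it reaches the conclusion by a genuinely different route from the paper's. The paper argues by a direct contradiction on a single element: if some $x\in\qk$ had $A(x)=y+\beta x'_1+\gamma y'_1$ with $\gamma\neq 0$, then $A(x)(q+\lambda x'_1)=\gamma C'(q)+B(C'(y),q)x'_1$ shows that $A(x)(\qk\oplus\CC x'_1)$ has dimension at least $\rank(C')\geq 3$, whereas this space is contained in $A(x\,A^{-1}(\qk\oplus\CC x'_1))\subseteq A(x\J)\subseteq A(\CC C(x)\oplus\CC x_1)$, of dimension at most $2$. You instead exhibit $\qk\oplus\CC x_1$ as a \emph{characteristic} subspace --- the unique codimension-one ideal with one-dimensional square --- after classifying all codimension-one ideals as preimages of hyperplanes of $\J/\J^2$. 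Both arguments consume the hypothesis $\rank(C')\geq 3$ in essentially the same dimension count (a $3$ beating a $2$): in the paper it is $\dim(\gamma\,\im(C'))\geq 3$ versus $\dim(\CC C(x)\oplus\CC x_1)\leq 2$, in yours it is $\dim(C'(W))\geq\rank(C')-1\geq 2$ versus $\dim(I_0^2)=1$. What each buys: the paper's proof is shorter and needs only one multiplication operator; yours is longer but makes the invariance under arbitrary isomorphisms (indeed automorphisms) conceptually transparent and classifies the codimension-one ideals along the way. Two minor streamlinings of your write-up: for the family $I_\psi$ with $\psi\neq 0$, the surjectivity of the projection of $I_\psi^2$ onto $\im(C)$ already gives $\dim(I_\psi^2)\geq\rank(C)\geq 3$, so the (correct but more delicate) claim $x_1\in I_\psi^2$ is not needed; similarly, in the family $W\oplus\CC x_1\oplus\CC y_1$ the bound $\dim(C(W))\geq\rank(C)-1\geq 2$ alone already rules these ideals out, so the restriction-of-the-form argument for $x_1\in I^2$ can be omitted.
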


\begin{proof}
  We assume that there is $x\in \qk$ such that $A(x) = y + \beta x'_1
  + \gamma y'_1$, where $y\in \qk, \beta, \gamma \in \CC, \gamma \neq
  0$. Then for all $q \in \qk$ and $\lambda \in \CC$, we have
\[
A(x)(q + \lambda x'_1) = \gamma C'(q) + B(C'(y),q )x'_1.\] Therefore,
$\dim({A(x)(\qk \oplus \CC x'_1)}) \geq 3$. But $A$ is an isomorphism,
hence \[ A(x)(\qk \oplus \CC x'_1) \subset A(xA^{-1}(\qk \oplus \CC
x'_1)) \subset A(x(\qk \oplus \CC x_1 \oplus \CC y_1)) \subset A(\CC
C(x) \oplus \CC x_1). \]
This is a contradiction. Hence $A(\qk \oplus \CC x_1) = \qk \oplus \CC x'_1$.
\end{proof}

\subsection{Diagonalizable double extensions} \hfill

\begin{lem}\label{cor2.7}
  Let $\J = \qk \oplusp (\CC x_1 \oplus \CC y_1)$ be the diagonalizable double
  extension of $\qk$ by $C$. Then
\[
y_1 y_1 = y_1, y_1 x_1 = x_1, y_1 x = C(x), xy = B(C(x),y)x_1 \text{ and } x_1 x = x_1 x_1 = 0, \forall x\in \qk.
\]
\end{lem}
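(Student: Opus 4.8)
The plan is to obtain each of the six products by direct substitution into the defining product of the diagonalizable double extension, namely the general formula of Section 2 specialized to $\e = 1$. Since $\tk = \spa\{x_1, y_1\}$ and $\qk$ sit inside $\J$ as the coordinate subspaces (the base component together with the scalars $\lambda, \mu$), every product appearing in the statement is just the value of that single formula at a suitable choice of the parameters $(x, \lambda, \mu)$ for the first factor and $(y, \lambda', \mu')$ for the second. No structural argument is needed.

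Concretely, I would proceed in three groups. First, the products internal to $\tk$: setting $x = y = 0$ and $(\mu, \mu') = (1,1)$ (all else zero) leaves only the term $\mu\mu' y_1$, giving $y_1 y_1 = y_1$; taking the first factor $y_1$ and the second $x_1$ (so $\mu = 1$, $\lambda' = 1$, all else zero) leaves only $(\lambda\mu' + \lambda'\mu)x_1 = x_1$, giving $y_1 x_1 = x_1$; while $x_1 x_1$ (both factors $x_1$, so $\mu = \mu' = 0$) vanishes since every surviving term then carries a vanishing factor. Second, the mixed products $y_1 x$ and $x_1 x$ for $x \in \qk$: with first factor $y_1$ ($\mu = 1$) and second factor $x \in \qk$ (all Greek coefficients zero) only $\mu C(y) = C(x)$ survives, giving $y_1 x = C(x)$; with first factor $x_1$ ($\lambda = 1$, $\mu = 0$) and second factor $x \in \qk$ every term vanishes, giving $x_1 x = 0$. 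Third, the product of two elements $x, y \in \qk$ (all Greek coefficients zero) leaves only $B_\qk(C(x), y) x_1$.

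The only point requiring a word of care is the identification of the bilinear form in the statement: the formula produces $B_\qk(C(x), y) x_1$, whereas the lemma writes $B(C(x), y) x_1$. These agree because $C(x), y \in \qk$ and $B = B_\qk + B_\tk$ restricts to $B_\qk$ on $\qk$. Beyond this bookkeeping there is no genuine obstacle: the statement is simply the explicit transcription of the product in the basis-adapted coordinates, and the fact that $\J$ is a pseudo-Euclidean Jordan algebra for $\e = 1$ — hence that these formulas indeed define a legitimate algebra — has already been recorded in Proposition \ref{prop2.1}(2).
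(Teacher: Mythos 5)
Your proof is correct and matches the paper's (implicit) treatment: the paper states this lemma without proof precisely because it is the direct transcription of the defining product formula at $\e=1$, which is exactly the substitution you carry out. All six evaluations check out, and your remark that $B$ restricts to $B_\qk$ on $\qk$ is the right (minor) bookkeeping point.
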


\medskip

Note that $x_1 \notin \Ann(\J)$. Let $x\in \qk$. Then $ x\in \Ann(\J)$
if and only if $x\in \ker(C)$. Moreover, $\J^2 = \im(C)\oplus (\CC x_1
\oplus \CC y_1)$. Therefore $\J$ is reduced if, and only if,
$\ker(C)\subset \im(C)$.

Let $x\in \im(C)$. Then there exists $y\in \qk$ such that
$x=C(y)$. Since $3C^2 = 2C^3 + C$, one has $3C(x)-2C^2(x) =
x$. Therefore, if $\J$ is reduced then $\ker(C) = \{0\}$ and $C$ is
invertible. That implies that $3C - 2C^2 = \Id$ and we have the
following proposition:

\begin{prop}\label{prop2.8}
  Let $(\qk, B)$ be a quadratic vector space. Let $\J = \qk \oplusp
  (\CC x_1 \oplus \CC y_1)$ and $\J' = \qk \oplusp (\CC x'_1 \oplus
  \CC y'_1)$ be diagonalizable double extensions of $\qk$, by invertible maps
  $C$ and $C'$ respectively. Then there exists a Jordan algebra
  isomorphism $A$ between $\J$ and $\J'$ if and only if there exists
  an isometry $P$ such that $C' = P C P^{-1}$. In this case, $\J$ and
  $\J'$ are also i-isomorphic.
\end{prop}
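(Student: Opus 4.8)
The plan is to prove both implications by extracting from any abstract isomorphism a canonical block structure, and then to observe that the converse construction already produces an i-isomorphism (which yields the final assertion for free). Throughout I work in the reduced case, so $C$ and $C'$ are invertible and satisfy $3C-2C^2=\Id$.

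First I would record the intrinsic subspaces of a reduced diagonalizable extension. From $3C-2C^2=\Id$ one gets $(2C-\Id)(C-\Id)=0$, so $C$ is diagonalizable on $\qk$ with $\qk=\qk_1\oplusp\qk_{1/2}$, the eigenspaces for the eigenvalues $1$ and $\tfrac12$; being eigenspaces of a $B$-symmetric map for distinct eigenvalues, they are $B$-orthogonal and each nondegenerate. Using the product table of Lemma \ref{cor2.7}, a direct check of the generating associators gives $(\J,\J,\J)=\qk_{1/2}\oplus\CC x_1$, and since $\Zb(\J)=(\J,\J,\J)^\perp$ one obtains $\Zb(\J)=\qk_1\oplus\CC x_1$. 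Consequently $\CC x_1=\Zb(\J)\cap(\J,\J,\J)$ and $\qk\oplus\CC x_1=\Zb(\J)+(\J,\J,\J)$, both of which are intrinsic and hence preserved by every Jordan isomorphism. This is the key point, and also the step most likely to hide a computational slip, so I would verify the associator computation with care.

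Now let $A\colon\J\to\J'$ be any Jordan isomorphism. By the previous paragraph $A(x_1)=\mu x'_1$ with $\mu\neq 0$, and $A(\qk\oplus\CC x_1)=\qk\oplus\CC x'_1$, so I may write $A|_\qk=P+\beta\otimes x'_1$ with $P\in\End(\qk)$, $\beta\in\qk^*$, and $A(y_1)=w'+\nu' x'_1+\rho' y'_1$. Because $y_1^2=y_1$, the element $A(y_1)$ is a nonzero idempotent of $\J'$; a one-line computation of idempotents (using $3C'-2(C')^2=\Id$) shows that every nonzero idempotent has $y'_1$-coefficient equal to $1$, whence $\rho'=1$. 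Imposing $A(xy)=A(x)A(y)$ for $x,y\in\qk$ yields $\mu C=P^*C'P$, while imposing $A(xy_1)=A(x)A(y_1)$ yields, on the $\qk$-component, $PC=\rho'C'P=C'P$.

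Finally I would combine these relations: substituting $C'P=PC$ into $\mu C=P^*C'P$ gives $\mu C=P^*PC$, and the invertibility of $C$ forces $P^*P=\mu\Id$; in particular $P$ is invertible and $C'=PCP^{-1}$. Setting $Q:=\mu^{-1/2}P$ yields $Q^*Q=\Id$, i.e.\ $Q\in\OO(\qk)$, together with $C'=QCQ^{-1}$, which proves the forward implication. For the converse, given an isometry $P$ with $C'=PCP^{-1}$ I would define $A$ by $A|_\qk=P$, $A(x_1)=x'_1$, $A(y_1)=y'_1$ and check against Lemma \ref{cor2.7} that $A$ is simultaneously a Jordan homomorphism and an isometry (the homomorphism check on $\qk\times\qk$ reduces to $B(Cx,y)=B(C'Px,Py)$, which holds since $C'P=PC$ and $P$ is an isometry). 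As this $A$ is already an i-isomorphism, the existence of any isomorphism produces an i-isomorphism, establishing the last assertion.
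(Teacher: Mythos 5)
Your overall strategy matches the paper's: reduce to showing that any isomorphism $A$ sends $\CC x_1$ to $\CC x'_1$ and $\qk\oplus\CC x_1$ to $\qk\oplus\CC x'_1$, then extract $P$ from $A|_\qk$, derive $\mu C=P^*C'P$ and $PC=C'P$ from multiplicativity, and rescale. Where you differ is in how that first invariance is obtained. The paper gets it by brute force: writing $A(x_1)=y+\beta x'_1+\gamma y'_1$, the relation $A(x_1)^2=0$ forces $\gamma=0$, and then $A(x)^2=B(C(x),x)A(x_1)$ forces the $y'_1$-component of $A(x)$ to vanish. You instead identify $\CC x_1$ and $\qk\oplus\CC x_1$ as the intersection and the sum of the two intrinsic subspaces $\Zb(\J)$ and $(\J,\J,\J)$, using the eigenspace decomposition $\qk=\qk_1\oplusp\qk_{1/2}$. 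Your associator computation is correct when $C\neq\Id$: the only nonzero generating associators are $(y_1,y_1,x)=(C-C^2)(x)$ and $(y_1,x,y)=B((C^2-C)(x),y)x_1$, and since $C-C^2=\tfrac12(\Id-C)$ has image $\qk_{1/2}$, one indeed gets $(\J,\J,\J)=\qk_{1/2}\oplus\CC x_1$ and $\Zb(\J)=\qk_1\oplus\CC x_1$. This is arguably more conceptual, and it gives the invariance for \emph{every} isomorphism without any case analysis on components. The remaining steps (idempotency forcing the $y'_1$-coefficient of $A(y_1)$ to be $1$, the two multiplicativity relations, $P^*P=\mu\Id$, and the converse construction being an isometry) are all sound and agree with the paper.

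The one genuine flaw is the degenerate case $C=\Id$ (equivalently $\qk_{1/2}=\{0\}$). There $y_1$ is a unit, the algebra is associative, $(\J,\J,\J)=\{0\}$ and $\Zb(\J)=\J$, so your claimed identities $\CC x_1=\Zb(\J)\cap(\J,\J,\J)$ and $\qk\oplus\CC x_1=\Zb(\J)+(\J,\J,\J)$ are simply false, and the forward implication as you argue it collapses. The statement itself survives: if $C=\Id$ and $\J\cong\J'$, then $\J'$ is also associative, which forces $(C')^2=C'$ and hence $C'=\Id$ by invertibility, so $C'=PCP^{-1}$ with $P=\Id$. You should either dispose of this case separately in one line as above, or note that your characterization only applies when $\qk_{1/2}\neq\{0\}$ for both algebras (isomorphism invariance of the associator ideal already shows the two cases cannot mix). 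The paper's component-wise argument avoids this issue because it never uses $x_1\in(\J,\J,\J)$.
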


\begin{proof}
  Assume $\J$ and $\J'$ isomorphic by $A$. Firstly, we will show that
  $A(\qk \oplus \CC x_1) = \qk \oplus \CC x'_1$. Indeed, if $A(x_1) =
  y + \beta x'_1 + \gamma y'_1$, where $y \in \qk, \beta, \gamma \in
  \CC$, then
\[
0 = A(x_1 x_1) = A(x_1)A(x_1) = 2\gamma C'(y) +(2\beta \gamma + B(C'(y),y))x'_1 + \gamma^2 y'_1.
\]
Therefore, $\gamma = 0$. Similarly, if there exists $x\in\qk$ such
that $A(x) = z + \alpha x'_1 + \delta y'_1$, where $z \in \qk, \alpha,
\delta \in \CC$. Then
\[
B(C(x),x)A(x_1) = A(x x) = A(x)A(x) = 2\delta C'(y) +(2\alpha \delta +
B(C'(z),z))x'_1 + \delta^2 y'_1.
\]
That implies $\delta = 0$ and $A(\qk \oplus \CC x_1) = \qk \oplus \CC
x'_1$.

The rest of the proof follows exactly the proof of Proposition
\ref{prop2.3}, one has $A(x_1) = \mu x'_1$ for some nonzero $\mu \in
\CC$ and there is an isomorphism $P$ of $\qk$ such that $A|_\qk = P
+\beta\otimes x'_1$, where $\beta \in \qk^*$. Similarly as in the
proof of Proposition \ref{prop2.3}, one also has $P^* C'P = \mu C$,
where $P^*$ is the adjoint map of $P$ with respect to $B$. Assume
$A(y_1) = \lambda y'_1 + y + \delta x_1$. Since $A(y_1)A(y_1)=A(y_1)$,
one has $\lambda = 1$ and therefore $C' = PCP^{-1}$. Replace $P:=
\dfrac{1}{ (\mu)^{\frac12}} P$ to get $P^* P C = C$. However, since
$C$ is invertible then $P^* P = \Id$. That means that $P$ is an
isometry of $\qk$.

Conversely, define $A: \J \rightarrow \J'$ by $A(x_1) = x'_1$, $A(y_1)
= y'_1$ and $A(x) = P(x), \forall x\in \qk$ then A is an
i-isomorphism.
\end{proof}

An invertible symmetric endomorphism of $\qk$ satisfying $3C - 2C^2 = \Id$ is
diagonalizable by an orthogonal basis of eigenvectors with eigenvalues
1 and $\frac{1}{2}$ (see Appendix). Therefore, we have the following
corollary:

\begin{cor}\label{cor2.9}
  Let $(\qk, B)$ be a quadratic vector space. Let $\J = \qk \oplusp
  (\CC x_1 \oplus \CC y_1)$ and $\J' = \qk \oplusp (\CC x'_1 \oplus
  \CC y'_1)$ be diagonalizable double extensions of $\qk$, by invertible maps
  $C$ and $C'$ respectively. Then $\J$ and $\J'$ are isomorphic if ,
  and only if, $C$ and $C'$ have same spectrum.
\end{cor}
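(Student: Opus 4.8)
The plan is to reduce Corollary \ref{cor2.9} directly to Proposition \ref{prop2.8}. By that proposition, $\J$ and $\J'$ are isomorphic if and only if there is an isometry $P \in \OO(\qk)$ with $C' = PCP^{-1}$, so the task is to show that, for the specific class of symmetric invertible maps arising here (namely those satisfying $3C - 2C^2 = \Id$), conjugacy by an isometry is equivalent to equality of spectra. One direction is trivial: if $C' = PCP^{-1}$ for any invertible $P$, then $C$ and $C'$ are similar and hence have the same spectrum. All the content is in the converse, where I must promote an abstract equality of spectra to conjugacy by an \emph{orthogonal} map.

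The key structural input, which the paper defers to the Appendix, is that a symmetric (with respect to $B$) invertible endomorphism $C$ with $3C - 2C^2 = \Id$ is diagonalizable in a $B$-orthogonal basis of eigenvectors, with all eigenvalues lying in $\{1, \tfrac12\}$ (the roots of the scalar equation $3t - 2t^2 = 1$). Granting this, the eigenspaces $V_1 = \ker(C - \Id)$ and $V_{1/2} = \ker(C - \tfrac12 \Id)$ are mutually $B$-orthogonal, and the restriction of $B$ to each is non-degenerate. The spectrum of $C$ is therefore completely encoded by the single integer $\dim V_1$ (equivalently $\dim V_{1/2} = \dim\qk - \dim V_1$), so that $C$ and $C'$ have the same spectrum precisely when $\dim V_1 = \dim V_1'$ and $\dim V_{1/2} = \dim V_{1/2}'$.

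First I would fix $B$-orthogonal bases diagonalizing $C$ and $C'$ respectively. The main step is then to produce an isometry $P$ carrying the eigenspace decomposition of $C$ to that of $C'$. Because $B$ restricted to each eigenspace is non-degenerate and the ground field is $\CC$, any two non-degenerate symmetric bilinear forms on spaces of equal dimension are isometric; so when $\dim V_1 = \dim V_1'$ there is an isometry $V_1 \to V_1'$, and likewise $V_{1/2} \to V_{1/2}'$. Taking the $B$-orthogonal direct sum $P = P_1 \oplusp P_{1/2}$ of these two isometries yields an element $P \in \OO(\qk)$, and since $P$ sends each eigenspace of $C$ onto the eigenspace of $C'$ with the same eigenvalue, one checks immediately that $PCP^{-1} = C'$. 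Invoking Proposition \ref{prop2.8} then gives the desired isomorphism.

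The main obstacle is precisely the diagonalization claim: controlling that a $B$-symmetric map over $\CC$ is orthogonally diagonalizable is not automatic, since a symmetric bilinear form over $\CC$ may vanish on nonzero vectors and eigenspaces could a priori be totally isotropic. This is why the spectral condition $3C - 2C^2 = \Id$ is essential: it forces $C$ to be a root of a polynomial with distinct roots, so $C$ is diagonalizable, and $B$-symmetry then forces distinct eigenspaces to be $B$-orthogonal, whence each is non-degenerate. Once this Appendix fact is in hand the remaining assembly of $P$ from isometries of the eigenspaces is routine linear algebra over $\CC$, and the corollary follows without further difficulty.
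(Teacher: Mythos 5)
Your proof is correct and follows essentially the same route the paper intends: the paper derives the corollary from Proposition \ref{prop2.8} together with the Appendix lemma that a symmetric invertible $C$ with $3C-2C^2=\Id$ is orthogonally diagonalizable with eigenvalues $1$ and $\tfrac12$, leaving the assembly of the isometry from eigenspace data implicit. You have simply written out that routine assembly (orthogonality and non-degeneracy of the eigenspaces, isometry of equidimensional non-degenerate symmetric forms over $\CC$), correctly reading ``same spectrum'' as equality of eigenvalue multiplicities.
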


\begin{ex}\label{2.9}
  Let $\CC x$ be one-dimensional Abelian algebra, $\J = \CC x \oplusp
  (\CC x_1 \oplus \CC y_1)$ and $\J' = \CC x\oplusp (\CC x'_1 \oplus
  \CC y'_1)$ be diagonalizable double extensions of $\CC x$ by $C = \Id$ and
  $C' = \frac{1}{2}\Id$. In particular, the product on $\J$ and $\J'$
  are defined by:
\[ y_1^2=y_1, y_1x=x, y_1x_1 = x_1, x^2 = x_1;
\]
\[(y'_1)^2=y'_1, y'_1x=\frac{1}{2}x, y_1x_1 = x_1, x^2 = \frac{1}{2}x_1.
\]
Then $\J$ and $\J'$ are not isomorphic. Moreover, $\J'$ has no unit
element.
\end{ex}

\begin{rem}
  The i-isomorphic and isomorphic notions are not coincident in general. For
  example, the Jordan algebras $\J =\CC e$ with $e^2=e$, $B(e,e)=1$
  and $\J' = \CC e'$ with $e'e' =e'$, $B(e',e')=a\neq 1$ are
  isomorphic but not i-isomorphic.
\end{rem}

\section{Pseudo-Euclidean 2-step nilpotent Jordan algebras}

Quadratic 2-step nilpotent Lie algebras are characterized up to
isometric isomorphisms and up to isomorphisms in \cite{Ova07}. There is
a similar natural property in the case of pseudo-Euclidean 2-step
nilpotent Jordan algebras.

\subsection{2-step nilpotent Jordan algebras} \hfill

Let us redefine 2-step nilpotent Jordan algebras in a more convenient
way:

\begin{defn}\label{defn3.1}
  An algebra $\J$ over $\CC$ with a product $(x, y) \mapsto xy$ is
  called {\em 2-step nilpotent Jordan algebra} if it satisfies $xy =
  yx$ and $(xy)z = 0$ for all $x, y, z \in \J$. Sometimes, we use {\bf
    2SN-Jordan Algebra} as an abbreviation.
\end{defn}

The method of double extension is a fundamental tool used in
describing algebras that are endowed with an associative
non-degenerate bilinear form. This method is based on two principal
notions: central extension and semi-direct sum of two algebras. In the
next part, we will recall some definitions given in Section 3 of
\cite{BB08} but with a restricting condition for pseudo-Euclidean
2-step nilpotent Jordan algebras.


\begin{prop}\label{prop-un}
  Let $\J$ be a 2SN-Jordan algebra, $V$ be a vector space, $\varphi:
  \J \times\J \rightarrow V$ be a bilinear map and $\pi: \J
  \rightarrow \End(V)$ be a representation. Let \[ \Jb =\J\oplus V\]
  equipped with the following product:
  \[(x+u)(y+v) = xy + \pi(x)(v) + \pi(y)(u)+ \varphi(x,y), \forall
  x,y\in \J, u,v\in V.\] Then $\Jb$ is a 2SN-Jordan algebra if and
  only if for all $x,y,z\in \J$:

\begin{enumerate}

\item $\varphi$ is symmetric and $\varphi(xy,z) + \pi(z)(\varphi(x,y))= 0$,

\item $\pi(xy) = \pi(x)\pi(y) = 0$.

\end{enumerate}
\end{prop}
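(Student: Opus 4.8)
The plan is to verify the defining condition of a 2SN-Jordan algebra directly from Definition \ref{defn3.1}: the product on $\Jb$ must be commutative, and every triple product $(ab)c$ must vanish. I will treat commutativity first (this is quick) and then the associativity-type vanishing condition, which is where the two stated conditions on $\varphi$ and $\pi$ will emerge. Throughout I write elements of $\Jb$ as $a = x+u$, $b = y+v$, $c = z+w$ with $x,y,z \in \J$ and $u,v,w \in V$.

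\emph{Commutativity.} The product is
\[
(x+u)(y+v) = xy + \pi(x)(v) + \pi(y)(u) + \varphi(x,y).
\]
Since $\J$ is a 2SN-Jordan algebra, $xy = yx$; the middle two terms are visibly symmetric in the two factors; so the whole product is commutative if and only if $\varphi(x,y) = \varphi(y,x)$ for all $x,y$, i.e. $\varphi$ is symmetric. This gives the symmetry half of condition (1).

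\emph{The triple product.} Given commutativity, $\Jb$ is 2SN if and only if $\big((x+u)(y+v)\big)(z+w) = 0$ for all arguments. First I compute $p := (x+u)(y+v) = xy + \big(\pi(x)(v) + \pi(y)(u) + \varphi(x,y)\big)$, noting its $\J$-component is $xy$ and its $V$-component is the parenthesized vector. Multiplying $p$ by $z+w$ and using the product formula once more, the $\J$-component of the result is $(xy)z$, which already vanishes because $\J$ is 2SN. The $V$-component is
\begin{equation}\label{eq:plan-Vcomp}
\pi(xy)(w) + \pi(z)\big(\pi(x)(v) + \pi(y)(u) + \varphi(x,y)\big) + \varphi(xy, z).
\end{equation}
I will separate \eqref{eq:plan-Vcomp} by the type of argument it acts on. Setting $u=v=w=0$ isolates $\pi(z)(\varphi(x,y)) + \varphi(xy,z) = 0$, the second half of condition (1). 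Next, taking $w\neq 0$ with $u=v=0$ and $x,y$ arbitrary forces $\pi(xy)(w)=0$, i.e. $\pi(xy)=0$; and taking $v\neq 0$ (resp. $u\neq 0$) with the others zero forces $\pi(z)\pi(x)=0$ (resp. $\pi(z)\pi(y)=0$), i.e. $\pi(z)\pi(x)=0$ for all $x,z$, which is exactly $\pi(x)\pi(y)=0$. Together these give condition (2), and \eqref{eq:plan-Vcomp} vanishes identically under (1) and (2). Conversely, assuming (1) and (2), every term of \eqref{eq:plan-Vcomp} is zero, so $\Jb$ is 2SN.

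\emph{What to watch.} There is no genuine obstacle here — the argument is a bookkeeping computation — but the one point requiring care is the logical direction that extracts the three independent vanishing statements from \eqref{eq:plan-Vcomp}. Because $u,v,w$ and $x,y,z$ range independently, one may set subsets of the variables to zero to decouple the four terms of \eqref{eq:plan-Vcomp}; I must make sure each of $\pi(xy)=0$, $\pi(x)\pi(y)=0$, and the $\varphi$-relation is derived in isolation, rather than only their sum being zero. Once the terms are decoupled, recombining them shows sufficiency, completing the equivalence.
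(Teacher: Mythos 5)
Your proof is correct. The paper in fact states Proposition \ref{prop-un} without proof, and your direct verification is exactly the computation the authors leave implicit; it follows the same pattern as their written proof of the analogous Proposition \ref{prop3.7} (expand the triple product, observe that the $\J$-component $(xy)z$ vanishes automatically, and decouple the remaining $V$-terms by letting $u$, $v$, $w$ vary independently). The one point you flag --- that each of $\pi(xy)=0$, $\pi(x)\pi(y)=0$ and the $\varphi$-relation must be extracted separately rather than only their sum --- is handled properly by your specialization argument, since the $V$-component is linear in each of $u$, $v$, $w$ separately.
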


\begin{defn}\label{defn3.2}
  If $\pi$ is the trivial representation in Proposition \ref{prop-un},
  the Jordan algebra $\Jb$ is called the {\em 2SN-central extension}
  of $\J$ by $V$ (by means of $\varphi$).
\end{defn}

Remark that in a 2SN-central extension $\Jb$, the annulator
$\Ann(\Jb)$ contains the vector space $V$.

\begin{prop}
  Let $\J$ be a 2SN-Jordan algebra. Then $\J$ is a 2SN-central
  extension of an Abelian algebra.
\end{prop}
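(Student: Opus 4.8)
The plan is to exhibit $\J^2$ as the central fiber and a vector-space complement as the Abelian base. First I would record the structural consequence of 2-step nilpotency: the identity $(xy)z = 0$ for all $x,y,z$ says exactly that $\J^2 \cdot \J = 0$, and commutativity gives $\J \cdot \J^2 = 0$ as well, so $\J^2 \subseteq \Ann(\J)$. In particular every element of $\J^2$ annihilates all of $\J$.

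Next I would set $V := \J^2$ and choose any vector-space complement $W$ so that $\J = W \oplus V$ as vector spaces. Endow $W$ with the zero product, making it an Abelian (in particular 2SN-)Jordan algebra $A$, and define $\varphi \colon A \times A \to V$ by $\varphi(x,y) := xy$, the product of $\J$ restricted to $W \times W$; this lands in $\J^2 = V$ by construction. Since the product of $\J$ is commutative, $\varphi$ is symmetric, and because $A$ is Abelian the remaining hypothesis $\varphi(xy,z) = 0$ of Proposition \ref{prop-un} holds vacuously. Taking $\pi$ trivial, Proposition \ref{prop-un} and Definition \ref{defn3.2} then produce the 2SN-central extension $\Jb = A \oplus V$ of $A$ by $V$ by means of $\varphi$, with product $(x+u)(y+v) = \varphi(x,y)$.

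It remains to identify $\Jb$ with $\J$. I would use the tautological linear map $\Phi \colon \Jb \to \J$, $\Phi(x+u) = x+u$, coming from the fixed decomposition $\J = W \oplus V$; it is a linear isomorphism. To check it is an algebra map, I expand the product in $\J$ of two elements $x+u$ and $y+v$ with $x,y \in W$ and $u,v \in V$: since $u,v \in V \subseteq \Ann(\J)$, the three cross terms $xv$, $uy$, $uv$ all vanish, leaving $(x+u)(y+v) = xy = \varphi(x,y)$. This matches the product in $\Jb$ exactly, so $\Phi$ is a Jordan algebra isomorphism and $\J$ is a 2SN-central extension of the Abelian algebra $A$.

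The argument has essentially no obstacle; the only point requiring care is the first step, namely recognizing that 2-step nilpotency forces $\J^2 \subseteq \Ann(\J)$, since this is precisely what makes the cross terms disappear and lets an arbitrary vector-space complement serve as the Abelian base.
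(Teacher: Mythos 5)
Your proof is correct and follows essentially the same route as the paper: both take $V=\J^2$, let the product of $\J$ itself define $\varphi$, and invoke the central-extension construction with trivial $\pi$. The only cosmetic difference is that the paper uses the quotient $\hk=\J/\J^2$ as the Abelian base where you use a chosen vector-space complement $W$; your version has the small merit of making explicit the key fact $\J^2\subseteq\Ann(\J)$, which is what makes the paper's $\varphi$ well defined on the quotient.
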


\begin{proof}
  Set $\hk:=\J/\J^2$ and $V:=\J^2$. Define $\varphi:
  \hk\times\hk\rightarrow V$ by $\varphi(p(x),p(y)) = xy, \forall
  x,y\in \J$, where $p:\J\rightarrow \hk$ is the canonical
  projection. Then $\hk$ is an Abelian algebra and $\J\cong \hk\oplus
  V$ is the 2SN-central extension of $\hk$ by means of $\varphi$.
\end{proof}

\begin{rem}
  It is easy to see that if $\J$ is a 2SN-Jordan algebra, then the
  coadjoint representation $R^*$ of $\J$ satisfies the condition on
  $\pi$ in Proposition \ref{prop-un} (2). For a trivial $\varphi$, we
  conclude that $\J\oplus \J^*$ is also a 2SN-Jordan algebra with
  respect to the coadjoint representation.
\end{rem}

\begin{defn}\label{defn3.6}
  Let $\J$ be a 2SN-Jordan algebra, $V$ and $W$ be two vector
  spaces. Let $\pi: \J \rightarrow \End(V)$ and $\rho: \J
  \rightarrow \End(W)$ be representations of $\J$. The {\em direct
    sum} $\pi\oplus \rho: \J \rightarrow \End(V\oplus W)$ of $\pi$ and
  $\rho$ is defined by
  \[ (\pi\oplus \rho)(x)(v+w) = \pi(x)(v) + \rho(x)(w), \forall x\in
  \J, v\in V, w\in W. \]
\end{defn}

\begin{prop}\label{prop3.7}
  Let $\J_1$ and $\J_2$ be 2SN-Jordan algebras and $\pi: \J_{1}
  \rightarrow \End(\J_{2})$ be a linear map. Let \[ \J = \J_{1}\oplus
  \J_{2}.\] Define the following product on $\J$:
\[
(x+y)(x'+y') = xx' + \pi(x)(y')+ \pi(x')(y) + yy', \forall x, x' \in
\J_{1}, y, y' \in \J_{2}.
\]
Then $\J$ is a 2SN-Jordan algebra if and only if $\pi$ satisfies:
\begin{enumerate}
    \item $\pi(xx') = \pi(x)\pi(x') = 0,$
    \item $\pi(x)(yy') = (\pi(x)(y))y' = 0$,
\end{enumerate}
for all $x, x' \in \J_{1}, y, y' \in \J_{2}$.

In this case, $\pi$ satisfies the conditions of Definition \ref{4.1.7}, it is called a {\bf 2SN-admissible
  representation} of $\J_1$ in $\J_2$ and we say that $\J$ is the {\bf
  semi-direct sum} of $\J_2$ by $\J_1$ by means of $\pi$.
\end{prop}

\begin{proof}
For all $x, x', x'' \in \J_{1}, y, y', y'' \in \J_{2}$, one has:
\begin{eqnarray*}
((x+y)(x'+y'))(x''+y'') = \pi(xx')(y'') + \pi(x'')(\pi(x)(y') +
\pi(x')(y) + yy') \\ + (\pi(x)(y') + \pi(x')(y))y''.
\end{eqnarray*}

Therefore, $\J$ is 2-step nilpotent if, and only if, $\pi(xx')$,
$\pi(x)\pi(x')$, $\pi(x)(yy')$ and $(\pi(x)y)y'$ are zero, $\forall x,
x'\in \J_{1}, y, y'\in \J_{2}$.
\end{proof}

\begin{rem}\hfill\label{rem3.8}

\begin{enumerate}

\item The adjoint representation of a 2SN-Jordan algebra is an
  2SN-admissible representation.

\item Consider the particular case of $\J_{1} = \CC c$ a
  one-dimensional algebra. If $\J_{1}$ is 2-step nilpotent then $c^2
  =0$. Let $D:= \pi(c)\in \End(\J_{2})$. The vector space $\J = \CC
  c \oplus \J_2$ with the product:
  \[
  (\alpha c +x)(\alpha' c +x') = \alpha D(x') + \alpha' D(x) + xx',
  \forall x,x' \in \J_{2}, \alpha, \alpha' \in \CC.
\]
is a 2-step nilpotent if and only if $D^2 = 0$, $D(xx') = D(x)x' = 0,
\forall x,x' \in \J_{2}$.
\item Let us slightly change (2) by fixing $x_0\in\J_2$ and setting
  the product on $\J = \CC c \oplus \J_2$ as follows:
  \[(\alpha c +x)(\alpha' c +x') = \alpha D(x') + \alpha' D(x) + xx' +
  \alpha\alpha'x_0,
  \]
  for all $x,x' \in \J_{2}, \alpha, \alpha' \in \CC$. Then $\J$ is a
  2SN-Jordan algebra if, and only if:
  \[D^2(x) = D(xx') = D(x)x' = D(x_0) = x_0x = 0,\ \forall
  x,x'\in\J_2.
  \]
  In this case, we say $(D,x_0)$ a {\em 2SN-admissible pair} of $\J_2$.
\end{enumerate}
\end{rem}

Next, we see how to obtain a 2SN-Jordan algebra from a
pseudo-Euclidean one.

\begin{prop}
  Let $(\J, B)$ be a 2-step nilpotent pseudo-Euclidean Jordan algebra
  (or {\bf 2SNPE-Jordan algebra} for short), $\hk$ be another
  2SN-Jordan algebra and $\pi: \hk \rightarrow \End_{s}(\J)$ be a
  linear map. Consider the bilinear map $\varphi: \J\times \J
  \rightarrow \hk^*$ defined by $\varphi(x,y)(z) = B(\pi(z)(x),y),
  \forall x,y \in \J, z\in \hk$. Let \[\Jb = \hk\oplus \J \oplus
  \hk^*.\] Define the following product on $\Jb$:
  \[ (x+y+f)(x'+y'+f') = xx' + yy' +\pi(x)(y') + \pi(x')(y) + f'\circ
  R_{x} + f\circ R_{x'} + \varphi(y,y') \] for all $x, x' \in \hk, y,
  y'\in \J, f, f' \in \hk^*$. Then $\Jb$ is a 2SN-Jordan algebra if
  and only if $\pi$ is a 2SN-admissible representation of $\hk$ in
  $\J$. Moreover, $\Jb$ is pseudo-Euclidean with the bilinear form
\[
\overline{B}(x+y+f,x'+y'+f') = B(y,y') + f(x') + f'(x), \forall x, x'
\in \hk, y, y'\in \J, f, f' \in \hk^*.
\]

In this case, we say that $\Jb$ is a {\bf 2-step nilpotent double
  extension} (or {\bf 2SN-double extension}) of $\J$ by $\hk$ by means
of $\pi$.
\end{prop}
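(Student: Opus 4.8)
The plan is to deduce almost everything from the general double–extension machinery and reduce the genuinely new content to one computation of the associator. I would first observe that the product is automatically commutative: $xx'$ and $yy'$ are symmetric since $\hk$ and $\J$ are commutative, the cross terms $\pi(x)(y')+\pi(x')(y)$ and $f'\circ R_x+f\circ R_{x'}$ are symmetric in the two arguments, and $\varphi$ is symmetric because $\pi$ takes values in $\End_s(\J)$ and $B$ is symmetric, so that $\varphi(y,y')(z)=B(\pi(z)(y),y')=B(y,\pi(z)(y'))=\varphi(y',y)(z)$. For the pseudo–Euclidean assertion I would note that $\overline{B}$ is symmetric and non–degenerate (the pairing of $\hk$ with $\hk^*$ is perfect and $B|_{\J\times\J}$ is non–degenerate), while its associativity is exactly the identity built into Definition \ref{defde}: our $\Jb$ is the specialization of that construction with $\J_1=\J$, $\J_2=\hk$ and $\gamma=0$ (Remark \ref{rem1.18}), and a $2$SN–admissible $\pi$ is in particular admissible in the sense of Definition \ref{4.1.7} by Proposition \ref{prop3.7}. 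Thus everything except the $2$–step nilpotency equivalence is inherited from the cited results.

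It then remains to show that $(ab)c=0$ for all $a,b,c\in\Jb$ if and only if $\pi$ satisfies $\pi(xx')=\pi(x)\pi(x')=0$ and $\pi(x)(yy')=(\pi(x)(y))y'=0$. Writing $a=x+y+f$, $b=x'+y'+f'$, I would decompose $ab=A+P+G$ according to $\Jb=\hk\oplus\J\oplus\hk^*$, where $A=xx'$, $P=yy'+\pi(x)(y')+\pi(x')(y)$ and $G=f'\circ R_x+f\circ R_{x'}+\varphi(y,y')$. Multiplying by $c=x''+y''+f''$ and separating components, the $\hk$–part is $Ax''=(xx')x''=0$ since $\hk$ is $2$–step nilpotent, and the $\hk^*$–part collapses too: $f''\circ R_A=0$ because $R_{xx'}=0$ (for $z\in\hk$ one has $z(xx')\in\hk\hk^2=0$), while $G\circ R_{x''}$ and $\varphi(P,y'')$ reduce, after using $(zx'')x=0$ and $\varphi(u,v)(z)=B(\pi(z)(u),v)$, to expressions that vanish once the four identities hold. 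The decisive component is the $\J$–part,
\[
(yy')y''+(\pi(x)y')y''+(\pi(x')y)y''+\pi(xx')(y'')+\pi(x'')(yy')+\pi(x'')\pi(x)(y')+\pi(x'')\pi(x')(y),
\]
whose first summand vanishes because $\J$ is $2$–step nilpotent.

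For sufficiency I would substitute the four identities and check that each remaining summand above, together with each surviving $\hk^*$–term, is killed. For necessity it is enough to exploit the $\J$–component alone: setting $y=y'=0$ isolates $\pi(xx')(y'')$, forcing $\pi(xx')=0$; taking $x=y'=x''=0$ isolates $(\pi(x')y)y''$, giving $(\pi(x)(y))y'=0$; taking $x=x'=0$ isolates $\pi(x'')(yy')$, giving $\pi(x)(yy')=0$; and taking $x=y'=y''=0$ isolates $\pi(x'')\pi(x')(y)$, giving $\pi(x)\pi(x')=0$. Hence the $\J$–component vanishes for all inputs precisely when $\pi$ is $2$SN–admissible in the sense of Proposition \ref{prop3.7}, which closes the equivalence.

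The only real bookkeeping obstacle is the $\hk^*$–component: one must confirm that the three pieces arising from $f''\circ R_A$, $G\circ R_{x''}$ and $\varphi(P,y'')$ all vanish, which hinges on the $2$–step nilpotency of $\hk$ (through $R_{xx'}=0$ and $(zx'')x=0$) and on rewriting the $\varphi$–terms via the defining formula $\varphi(u,v)(z)=B(\pi(z)(u),v)$. I would also verify the associativity of $\overline{B}$ directly rather than merely quoting Definition \ref{defde}, matching terms such as $B(\pi(x)(y'),y'')=\varphi(y',y'')(x)$ and $(f'\circ R_x)(x'')=f'(xx'')$; I note that non–degeneracy of $B$ enters only in the pseudo–Euclidean conclusion and is not needed for the $2$–step nilpotency equivalence itself.
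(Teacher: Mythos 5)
Your proof is correct, but it follows a different route from the paper's. You argue by brute force: decompose $ab$ into its $\hk$-, $\J$- and $\hk^*$-components, expand $(ab)c$, kill the terms that vanish automatically by the $2$-step nilpotency of $\hk$ and $\J$ (and the identity $R_{xx'}=0$ on $\hk$), and then read off the four conditions of Proposition \ref{prop3.7} by specializing the inputs — your isolation argument for necessity (setting various components to zero to extract $\pi(xx')=0$, $(\pi(x)y)y'=0$, $\pi(x)(yy')=0$ and $\pi(x)\pi(x')=0$ one at a time) is explicit and sound, and your verification that commutativity is automatic via the symmetry of $\pi(z)$ is the right observation. The paper instead proves sufficiency structurally: it first forms the $2$SN-central extension $\J\oplus\hk^*$ of $\J$ by $\hk^*$ by means of $\varphi$ (checking the hypothesis $\varphi(yy',y'')=0$ of Proposition \ref{prop-un} from $\pi(z)(yy')=0$ and the symmetry of $\pi(z)$), and then takes the semi-direct sum of $\J\oplus\hk^*$ with $\hk$ by means of $\pi\oplus R^*$, invoking Proposition \ref{prop3.7}; the necessity direction is left as a ``straightforward computation.'' The paper's route buys conceptual economy — the double extension is exhibited as a composite of the two previously established constructions, which is the pattern reused throughout Section 3 — while your route buys self-containedness and, in particular, makes the necessity half fully explicit where the paper only gestures at it. Both treat the pseudo-Euclidean claim the same way, by reduction to Definition \ref{defde}; your offer to also check associativity of $\overline{B}$ by matching $B(\pi(x)(y'),y'')=\varphi(y',y'')(x)$ and $(f'\circ R_x)(x'')=f'(xx'')$ is harmless extra care. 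One small point worth keeping in mind: in your $\hk^*$-component analysis, the vanishing of $\varphi(y,y')\circ R_{x''}$ uses $\pi(zx'')=0$, i.e.\ one of the four conditions, not just the nilpotency of $\hk$ — you do list it among the things that ``vanish once the four identities hold,'' which is accurate, but it should not be lumped with the terms killed purely by $(zx'')x=0$.
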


\begin{proof}
  If $\Jb$ is 2-step nilpotent then the product is commutative and
  $((x+y+f)(x'+y'+f'))(x''+y''+f'') = 0$ for all $x,x',x'' \in \hk, y,
  y', y'' \in \J, f, f',f'' \in \hk^*$. By a straightforward
  computation, one has that $\pi$ is a 2SN-admissible representation
  of $\hk$ in $\J$.

  Conversely, assume that $\pi$ is a 2SN-admissible representation of
  $\hk$ in $\J$. First, we set the extension $\J\oplus\hk^*$ of $\J$
  by $\hk^*$ with the product:
  \[(y+f)(y'+f') = yy'+\varphi(y,y'),\ \forall y,y'\in\J, \
  f,f'\in\hk^*.\] Since $\pi(z)\in\End_{s}(\J)$ and $\pi(z)(yy') = 0,\
  \forall z\in\hk, y,y'\in\J,$ then one has $\varphi$ symmetric and
  $\varphi(yy',y'') = 0$ for all $y,y',y''\in \J$. By Definition
  \ref{defn3.2}, $\J\oplus\hk^*$ is a 2SN-central extension of $\J$ by
  $\hk^*$.

  Next, we consider the direct sum $\pi\oplus R^*$ of two
  representations: $\pi$ and $R^*$ of $\hk$ in $\J\oplus\hk^*$ (see
  Definition \ref{defn3.6}). By a straightforward computation, we
  check that $\pi\oplus R^*$ satisfies the conditions of Proposition
  \ref{prop3.7} then the semi-direct sum of $\J \oplus \hk^*$ by $\hk$ by means
  of $\pi \oplus R^*$ is 2-step nilpotent. Finally, the product
  defined in $\Jb$ is exactly the product defined by the semi-direct
  sum in Proposition \ref{prop3.7}. Therefore we obtain the necessary
  and sufficient conditions.

  As a consequence of Definition \ref{defde}, $\overline{B}$ is an
  associative scalar product of $\Jb$, then $\Jb$ is a 2SNPE-Jordan
  algebra.
\end{proof}

The notion of 2SN-double extension {\bf does not characterize} all
2SNPE-Jordan algebras: there exist 2SNPE-Jordan algebras that can be
not described in term of 2SN-double extensions, for example, the
2SNPE-Jordan algebra $\J = \CC a \oplus \CC b$ with $a^2 =b$ and
$B(a,b) = 1$, zero otherwise. Therefore, we need a better
characterization given by the Proposition below, its proof is a matter
of a simple calculation.

\begin{prop}
  Let $(\J, B)$ be a 2SNPE-Jordan algebra, $(D,x_{0})\in \End_s(\J)
  \times \J$ be a 2SN-admissible pair with $B(x_0,x_0)=0$ and
  $(\tk=\CC x_1\oplus \CC y_1,B_\tk)$ be a quadratic vector space
  satisfying
\[B_\tk(x_1,x_1)=B_\tk(y_1,y_1)=0, \ B_\tk(x_1,y_1) = 1.
\]
Fix $\alpha$ in $\CC$ and consider the vector space \[\Jb =
\J\oplusp\tk\] equipped with the product
\[y_1\star y_1 = x_0 + \alpha x_1, \ y_1\star x = x\star y_1 = D(x)+
B(x_0,x)x_1, \ x\star y = xy + B(D(x),y)x_1
\]
and $x_1\star \Jb = \Jb\star x_1 = 0, \forall x,y\in \J$. Then $\Jb$
is a 2SNPE-Jordan algebra with the bilinear form $\overline{B} = B +
B_\tk$.

In this case, $(\Jb,\overline{B})$ is called a {\bf generalized double
  extension} of $\J$ by means of $(D,x_0,\alpha)$.
\end{prop}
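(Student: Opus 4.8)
The plan is to verify directly the three defining properties of a 2SNPE-Jordan algebra in the sense of Definition \ref{defn3.1}, together with the pseudo-Euclidean requirement: commutativity of $\star$, the identity $(\Jb\star\Jb)\star\Jb=0$, and that $\overline B$ is symmetric, non-degenerate and associative. Before anything else I would record two structural observations that organize the whole computation. First, $x_1$ lies in $\Ann(\Jb)$: every product in which $x_1$ appears vanishes by definition. Second, none of the defining products has a $y_1$-component, so $\Jb\star\Jb\subseteq \J\oplus\CC x_1$, and more precisely its $\J$-part lies in $\J^2+\im(D)+\CC x_0$. Both facts will repeatedly let me discard terms.

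Commutativity and the symmetry and non-degeneracy of $\overline B$ are quick. For $x,y\in\J$ one has $x\star y=xy+B(D(x),y)x_1$ and $y\star x=yx+B(D(y),x)x_1$; these agree because $\J$ is commutative and $D$ is symmetric, so $B(D(x),y)=B(x,D(y))=B(D(y),x)$, while the remaining products are visibly symmetric. Symmetry of $\overline B=B+B_\tk$ is immediate, and since $B$ is non-degenerate on $\J$ and $B_\tk$ is the hyperbolic form on $\tk$ with $\J\perp\tk$, the form $\overline B$ is non-degenerate.

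The 2-step nilpotency is the step I expect to absorb essentially all of the hypotheses. Using that $x_1$ annulates, it suffices to show $a\star z=0$ for every $z\in\Jb$ and every $a$ in each of the spanning pieces $\J^2$, $\im(D)$, $\CC x_0$ of the $\J$-part of $\Jb\star\Jb$. Writing $z=y+\gamma x_1+sy_1$ gives $a\star z=ay+sD(a)+\bigl(B(D(a),y)+sB(x_0,a)\bigr)x_1$, so I must kill $ay$, $D(a)$ and $B(x_0,a)$ in each case. For $a\in\J^2$ this uses $\J^2\J=0$ (2-step nilpotency of $\J$), $D(\J^2)=0$, and $B(x_0,\J^2)=0$, the last coming from $x_0\in\Ann(\J)=(\J^2)^\perp$ (the pseudo-Euclidean identity $(\Ann(\J))^\perp=\J^2$). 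For $a=D(b)\in\im(D)$ it uses $D(b)y=0$, $D^2=0$, and $B(x_0,D(b))=B(D(x_0),b)=0$ since $D$ is symmetric and $D(x_0)=0$. For $a=x_0$ it uses $x_0\J=0$, $D(x_0)=0$, and crucially the hypothesis $B(x_0,x_0)=0$, which is exactly what makes $(y_1\star y_1)\star y_1=B(x_0,x_0)x_1$ vanish. Thus all the admissibility conditions on $(D,x_0)$ together with the two isotropy facts conspire precisely to give $(\Jb\star\Jb)\star\Jb=0$; keeping every such term accounted for is the main care-point, though no genuine obstruction arises.

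Finally, associativity of $\overline B$ reduces by trilinearity to checking $\overline B(p\star q,r)=\overline B(p,q\star r)$ on generators $p,q,r\in\{\,\J\text{-elements}\,\}\cup\{x_1,y_1\}$. All cases involving $x_1$ are trivial: since products never have a $y_1$-component, $\overline B(p\star q,x_1)=0=\overline B(p,q\star x_1)$, and symmetrically when $p=x_1$ or $q=x_1$. For the remaining $\{\,x,y_1\,\}$-type generators I would use the normalizations $\overline B(x_1,y_1)=1$, $\overline B(x_1,x_1)=\overline B(y_1,y_1)=0$ and $\overline B(\tk,\J)=0$ to read off the relevant components, so that $\overline B(\,\cdot\,,y_1)$ returns the $x_1$-coefficient and $\overline B(\,\cdot\,,z)$ with $z\in\J$ returns the $\J$-part paired via $B$; each case then collapses to associativity of $B$ on $\J$, symmetry of $D$, or symmetry of $B$. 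As an alternative one could instead realize $\Jb$ as a one-dimensional central extension by $\CC x_1$ of the 2SN-algebra $\CC y_1\oplus\J$ of Remark \ref{rem3.8}(3), reducing nilpotency to that of the latter plus the vanishing of the associated cocycle on $(\Jb\star\Jb)\times\Jb$; this repackages, but does not shorten, the same verifications.
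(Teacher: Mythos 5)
Your verification is correct, and it is essentially the ``simple calculation'' that the paper explicitly omits (the text introduces this proposition with the remark that ``its proof is a matter of a simple calculation'' and gives no argument), so there is no proof in the paper to compare against. Your organization is sound: reducing $2$-step nilpotency to showing $a\star z=0$ for $a$ ranging over the three spanning pieces $\J^2$, $\im(D)$, $\CC x_0$ of the $\J$-component of $\Jb\star\Jb$ correctly isolates where each admissibility condition and each isotropy hypothesis (in particular $B(x_0,x_0)=0$, which is exactly what kills $(y_1\star y_1)\star y_1$) is used, and the associativity check for $\overline{B}$ on generators is complete. Your closing alternative --- building $\Jb$ as a central extension by $\CC x_1$ of the algebra $\CC y_1\oplus\J$ of Remark \ref{rem3.8}(3) --- is the analogue of how the paper does prove the preceding (non-generalized) $2$SN-double extension proposition, so either route is consistent with the paper's methods.
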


\begin{prop}\label{prop3.11}
  Let $(\J, B)$ be a 2SNPE-Jordan algebra. If $\J$ is non-Abelian then
  it is obtained from an Abelian algebra by a sequence of generalized
  double extensions.
\end{prop}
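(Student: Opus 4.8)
The plan is to induct on $\dim(\J)$, the crucial step being to show that every non-Abelian 2SNPE-Jordan algebra is itself a generalized double extension of a 2SNPE-Jordan algebra whose dimension is smaller by two. Granting this, the induction closes at once: the base of the extension is again a 2SNPE-Jordan algebra of lower dimension, so either it is Abelian and we stop, or it is non-Abelian and we peel off another extension. As the dimension decreases by two at each stage, the process terminates at an Abelian algebra, and reading the steps in reverse exhibits $\J$ as obtained from that Abelian algebra by a finite sequence of generalized double extensions.

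To produce the extension data I would first use $2$-step nilpotency: since every triple product vanishes, $\J^2\subseteq\Ann(\J)$, and because $\Ann(\J)=(\J^2)^\perp$ this forces $\J^2\subseteq(\J^2)^\perp$, i.e. $\J^2$ is totally isotropic. As $\J$ is non-Abelian, pick a nonzero $x_1\in\J^2$; it is isotropic and central. Non-degeneracy of $B$ yields $y_1$ with $B(x_1,y_1)=1$, and after replacing $y_1$ by $y_1-\frac{1}{2}B(y_1,y_1)x_1$ we may assume $B(y_1,y_1)=0$. Then $\tk=\CC x_1\oplus\CC y_1$ is a hyperbolic plane, hence non-degenerate, and by Lemma \ref{lem1.3} we obtain the orthogonal splitting $\J=\J'\oplusp\tk$ with $\J':=\tk^\perp$ non-degenerate. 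This choice of the isotropic central element and its hyperbolic partner is the one genuinely delicate point, and it is exactly where total isotropy of $\J^2$ is indispensable.

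Next I would recover $(D,x_0,\alpha)$ and the product on $\J'$ by projecting. For $x,y\in\J'$ each of $y_1\star x$, $y_1\star y_1$, $x\star y$ lies in $\J^2\subseteq\Ann(\J)$; for such an element $w$ the $y_1$-component is $B(w,x_1)$, and associativity of $B$ together with $x_1\in\Ann(\J)$ makes it vanish (for instance $B(y_1\star x,x_1)=B(y_1,x\star x_1)=0$). Defining $D(x)$, $x\cdot y$, $x_0$ as the respective $\J'$-projections, the products take precisely the generalized-double-extension shape; that the $x_1$-coefficients equal $B(x_0,x)$ and $B(D(x),y)$ follows from $B(y_1\star x,y_1)=B(x,y_1\star y_1)$ and the computation of $B(x\star y,y_1)$, and the symmetry of $D$ follows because $B(D(x),y)$ and $B(D(y),x)$ both equal $B(x\star y,y_1)$.

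It then remains to check the hypotheses of the generalized double extension construction. That $(\J',\cdot,B|_{\J'})$ is a 2SNPE-Jordan algebra is inherited by projection (commutativity, vanishing of triple products, and associativity of $B|_{\J'}$ all descend). The admissibility relations of Remark \ref{rem3.8}, namely $D^2=D(x\cdot y)=D(x)\cdot y=D(x_0)=x_0\cdot x=0$, drop out of the vanishing of triple products in $\J$: for example $0=y_1\star(y_1\star x)=D^2(x)+B(x_0,D(x))x_1$ gives $D^2=0$, while $0=(y_1\star y_1)\star x$ gives $x_0\cdot x=0$ and $D(x_0)=0$. Finally $B(x_0,x_0)=0$ because $x_0\in\J^2$ is totally isotropic. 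The main obstacle is thus not any single hard computation but the organization of the second step, making the right isotropic choice so that everything projects cleanly; the remaining identities are routine bookkeeping with associativity of $B$ and nilpotency.
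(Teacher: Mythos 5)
Your proof is correct and follows essentially the same route as the paper: choose an isotropic element $x_1$ of the annihilator lying in $\J^2$, complete it to a hyperbolic plane $\tk=\CC x_1\oplus\CC y_1$, split $\J=W\oplusp\tk$, and recover $(D,x_0,\alpha)$ from the projections, deriving the admissibility identities from the vanishing of triple products. The only cosmetic differences are that the paper obtains its isotropic annihilator element by first passing to the reduced ideal of Proposition \ref{prop-red} while you observe directly that $\J^2\subseteq\Ann(\J)=(\J^2)^{\perp}$ is totally isotropic, and that you make the terminating induction and the verification of $B(x_0,x_0)=0$ explicit where the paper leaves them implicit.
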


\begin{proof}
  Assume that $(\J, B)$ is a 2SNPE-Jordan algebra and $\J$ is non-Abelian. By
  Proposition \ref{prop-red}, $\J$ has a reduced ideal $\lk$ that is
  still 2-step nilpotent. That means $\lk^2 \neq \lk$, so
  $\Ann(\lk)\neq \{0\}$. Therefore, we can choose nonzero $x_1\in
  \Ann(\lk)$ such that $B(x_1,x_1)=0$. Then there exists an isotropic
  element $y_1\in\J$ such that $B(x_1,y_1)=1$. Let $\J = (\CC
  x_1\oplus \CC y_1)\oplusp W$, where $W = (\CC x_1\oplus \CC
  y_1)^\bot$. We have that $\CC x_1$ and $x_1^\bot = \CC x_1\oplus W$
  are ideals of $\J$ as well.

  Let $x,y\in W$, $xy = \beta(x,y)+\alpha(x,y)x_1$, where
  $\beta(x,y)\in W$ and $\alpha(x,y)\in \CC$. It is easy to check that
  $W$ with the product $W\times W\rightarrow W$, $(x,y)\mapsto
  \beta(x,y)$ is a 2SN-Jordan algebra. Moreover, it is also
  pseudo-Euclidean with the bilinear form $B_W = B\ |_{W\times W}$.

  Now, we show that $\J$ is a generalized double extension of
  $(W,B_W)$. Indeed, let $x\in W$ then $y_1 x = D(x)+\varphi(x)
  x_1$, where $D$ is an endomorphism of $W$ and $\varphi \in
  W^*$. Since $y_1 (y_1 x) = y_1 (xy)= (y_1 x)y = 0, \forall x,y\in W$
  we get $D^2 (x) = D(x)y = D(xy) = 0, \forall x,y\in W$. Moreover,
  $B(y_1 x,y) = B(x,y_1 y) = B(y_1 ,xy),\forall x,y\in W$ implies that
  $D\in \End_s(W)$ and $\alpha(x,y) = B_W (D(x),y), \forall x,y\in W$.

  Since $B_W$ is non-degenerate and $\varphi \in W^*$ then there
  exists $x_0\in W$ such that $\varphi = B_W(x_0,.)$. Assume that
  $y_1y_1 = \mu y_1 + y_0 + \lambda x_1$. The equality $B(y_1y_1,x_1)
  = 0$ implies $\mu = 0$. Moreover, $y_0 = x_0$ since $B(y_1x,y_1) =
  B(x,y_1y_1),\forall x\in W$. Finally, $D(x_0) = 0$ is obtained by
  $y_1^3=0$ and this is enough to conclude that $\J$ is a generalized
  double extension of $(W,B_W)$ by means of $(D,x_0,\lambda)$.
\end{proof}

\subsection{$T^*$-extensions of pseudo-Euclidean 2-step nilpotent}
\hfill

Given a 2SN-Jordan algebra $\J$ and a symmetric bilinear map
$\theta:\J\times\J\rightarrow \J^*$ such that $R^*(z)(\theta(x,y))+\theta(xy,z) = 0,\ \forall x,y,z\in\J$, then by Proposition
\ref{prop-un}, $\J\oplus\J^*$ is also a 2SN-algebra. Moreover, if
$\theta$ is cyclic (that is, $\theta(x,y)(z) = \theta(y,z)(x), \forall
x,y,z\in\J$), then $\Jb$ is a pseudo-Euclidean Jordan algebra with the
bilinear form defined by
\[B(x+f,y+g) = f(y)+g(x), \ \forall x,y\in \J, f,g\in\J^*.
\]

In a more general framework, we can define:

\begin{defn}
  Let $\ak$ be a complex vector space and $\theta:
  \ak\times\ak\rightarrow \ak^*$ a cyclic symmetric bilinear map. Assume
  that $\theta$ is non-degenerate, i.e. if $\theta(x,\ak)=0$ then $x =
  0$. Consider the vector space $\J:= \ak\oplus \ak^*$ equipped the
  product
\[(x+f)(y+g) = \theta(x,y)\]
and the bilinear form
\[B(x+f,y+g) = f(y)+g(x)\] for all $x+f,y+g\in \J$. Then $(\J,B)$ is a
2SNPE-Jordan algebra and it is called the {\em $T^*$-extension} of
$\ak$ by $\theta$.
\end{defn}

\begin{lem}
  Let $\J$ be a $T^*$-extension of $\ak$ by $\theta$. If $\J\neq
  \{0\}$ then $\J$ is reduced.
\end{lem}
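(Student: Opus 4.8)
The plan is to identify the annulator $\Ann(\J)$ explicitly and then verify that it is totally isotropic with respect to $B$. Since $\J\neq\{0\}$ is assumed, only the second condition in the definition of reduced needs to be checked.

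First I would show that $\Ann(\J) = \ak^*$. The inclusion $\ak^* \subseteq \Ann(\J)$ is immediate: for any $f\in\ak^*$ and any $y+g\in\J$, the product $(0+f)(y+g) = \theta(0,y) = 0$, since the product on $\J$ depends only on the $\ak$-components. For the reverse inclusion, suppose $x+f\in\Ann(\J)$. Then for every $y\in\ak$ we have $(x+f)(y+0) = \theta(x,y) = 0$, so $\theta(x,\ak) = 0$. The non-degeneracy hypothesis on $\theta$ (namely, $\theta(x,\ak)=0$ forces $x=0$) then gives $x=0$, hence $x+f = f\in\ak^*$.

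Next I would check that $\ak^*$ is totally isotropic. For $f,g\in\ak^*$, viewed as the elements $0+f$ and $0+g$ of $\J$, the bilinear form evaluates to $B(f,g) = f(0)+g(0) = 0$. Thus $B$ vanishes identically on $\Ann(\J)=\ak^*$, which is precisely the statement that $\Ann(\J)$ is totally isotropic.

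There is no genuine obstacle here; the only point requiring care is the identification of $\Ann(\J)$, which rests entirely on the non-degeneracy of $\theta$. Combining the two steps with the hypothesis $\J\neq\{0\}$ yields that $\J$ is reduced.
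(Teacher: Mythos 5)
Your argument is correct and is essentially the paper's own proof spelled out in full: the paper simply asserts that non-degeneracy of $\theta$ gives $\Ann(\J)=\ak^*$, which is totally isotropic by the definition of $B$, and you have supplied exactly the verification of both halves. No issues.
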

\begin{proof}
  Since $\theta$ is non-degenerate, it is easy to check that $\Ann(\J)
  = \ak^*$ is totally isotropic by the above definition.
\end{proof}
\begin{prop}\label{prop3.14}
  Let $(\J,B)$ be a 2SNPE-Jordan algebra. If $\J$ is reduced then $\J$
  is isometrically isomorphic to some $T^*$-extension.
\end{prop}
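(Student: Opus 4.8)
The plan is to exhibit a Lagrangian ideal of $\J$ lying inside the annihilator and then to recognize $\J$ as the $T^*$-extension built from it (the Jordan analogue of Bordemann's $T^*$-recognition, cf. \cite{Bor97}). The key observation is that reducedness forces $\J^2=\Ann(\J)$. Indeed, since $\J$ is $2$-step nilpotent we have $(xy)z=0$ for all $x,y,z$, and by commutativity $z(xy)=0$ as well; hence every product lies in $\Ann(\J)$ and $\J^2\subseteq\Ann(\J)$. On the other hand $\bigl(\Ann(\J)\bigr)^\perp=\J^2$, so $\Ann(\J)=(\J^2)^\perp$, and reducedness says $\Ann(\J)$ is totally isotropic, i.e. $\Ann(\J)\subseteq(\Ann(\J))^\perp=\J^2$. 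Combining the two inclusions yields $\J^2=\Ann(\J)$, and therefore $\J^2=(\J^2)^\perp$: the ideal $\J^2$ is a maximal totally isotropic (Lagrangian) subspace of dimension $\tfrac12\dim\J$, and it is annihilated by all of $\J$.

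Next I would set $\ak^*:=\J^2$ and choose a complementary totally isotropic subspace. Since $\J^2$ is Lagrangian for the non-degenerate form $B$, a standard Witt/hyperbolic decomposition over $\CC$ provides a subspace $\ak$ with $\J=\ak\oplus\J^2$, $B|_{\ak\times\ak}=0$, and $B$ restricting to a perfect pairing between $\ak$ and $\J^2$. This pairing identifies $\J^2$ with the dual space $\ak^*$ via $w\mapsto B(w,\cdot)|_{\ak}$. Under this identification I define $\theta:\ak\times\ak\to\ak^*$ by $\theta(x,y)(z)=B(xy,z)$ for $x,y,z\in\ak$; that is, $\theta(x,y)$ is the functional corresponding to $xy\in\J^2$.

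It then remains to check that $\theta$ has the required properties and that the obvious map is an i-isomorphism. Symmetry of $\theta$ is commutativity of $\J$; the cyclic condition $\theta(x,y)(z)=\theta(y,z)(x)$ is precisely associativity of $B$, since $B(xy,z)=B(x,yz)=B(yz,x)$; and non-degeneracy follows because $\theta(x,\ak)=0$ means $B(xy,\ak)=0$ for all $y\in\ak$, whence $xy=0$ by the perfect pairing, i.e. $x\ak=0$; as $x$ also annihilates $\J^2=\Ann(\J)$, we get $x\J=0$, so $x\in\Ann(\J)\cap\ak=\J^2\cap\ak=\{0\}$. Finally I would define $\Phi:\J\to\ak\oplus\ak^*$ by $\Phi|_\ak=\id$ and $\Phi(w)=B(w,\cdot)|_\ak$ for $w\in\J^2$, and verify directly that $\Phi$ is linear and bijective, carries the product of $\J$ to $\theta$ (all cross terms vanish because $\J^2\subseteq\Ann(\J)$, leaving $(x+w)(y+w')=xy$), and intertwines $B$ with the canonical form $f(y)+g(x)$ of the $T^*$-extension. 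The only non-formal ingredient is the first paragraph's identity $\J^2=\Ann(\J)$, which is exactly where the reducedness hypothesis enters; once this Lagrangian ideal is in hand, the construction of $\ak$, $\theta$, and $\Phi$ is routine linear algebra, so I expect no serious obstacle beyond the verifications above.
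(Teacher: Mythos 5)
Your proof is correct and follows essentially the same route as the paper: establish $\Ann(\J)=\J^2$ from reducedness, split off a complementary Lagrangian $\ak$, identify $\J^2$ with $\ak^*$ via $B$, and read off $\theta$ from the product. The only difference is one of care: the paper takes an arbitrary complement and then asserts it is maximal totally isotropic, whereas you correctly note that a Witt-type decomposition is needed to choose $\ak$ totally isotropic (without which the map would not be an isometry), and you also verify the non-degeneracy of $\theta$ explicitly; both are welcome additions rather than deviations.
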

\begin{proof}
  Assume $\J$ is a reduced 2SNPE-Jordan algebra. Then one has
  $\Ann(\J) = \J^2$, so $\dim(\J^2) = \frac{1}{2}\dim(\J)$. Let $\J =
  \Ann(\J)\oplus \ak$, where $\ak$ is a complementary subspace of
  $\Ann(\J)$ in $\J$. Then $\ak\cong \J/\J^2$ as an Abelian
  algebra. Since $\ak$ and $\Ann(\J)$ are maximal totally isotropic
  subspaces of $\J$, we can identify $\Ann(\J)$ to $\ak^*$ by the
  isomorphism $\varphi:\Ann(\J)\rightarrow \ak^*$, $\varphi(x)(y) =
  B(x,y),\forall x\in \Ann(\J), y\in\ak$. Define $\theta:
  \ak\times\ak\rightarrow \ak^*$ by $\theta(x,y)=\varphi(xy),\forall
  x,y\in \ak$.

  Now, set $\alpha: \J\rightarrow \ak\oplus \ak^*$ by $\alpha(x) =
  p_1(x) + \varphi(p_2(x)),\forall x\in \J$, where $p_1: \J\rightarrow
  \ak$ and $p_2: \J\rightarrow \Ann(\J)$ are canonical
  projections. Then $\alpha$ is isometrically isomorphic.
\end{proof}
\begin{prop}\label{prop3.15}
  Let $\J_1$ and $\J_2$ be two $T^*$-extensions of $\ak$ by $\theta_1$
  and $\theta_2$ respectively. Then:

\begin{enumerate}

\item there exists a Jordan algebra isomorphism between $\J_1$ and
  $\J_2$ if and only if there exist an isomorphism $A_1$ of $\ak$ and
  an isomorphism $A_2$ of $\ak^*$ satisfying:
  \[A_2(\theta_1(x,y)) = \theta_2(A_1(x),A_1(y)), \forall x,y\in\ak.\]

\item there exists a Jordan algebra i-isomorphism between $\J_1$ and
  $\J_2$ if and only if there exists an isomorphism $A_1$ of $\ak$
\[\theta_1(x,y) = \theta_2(A_1(x),A_1(y))\circ A_1, \forall x,y\in\ak.\]
\end{enumerate}
\end{prop}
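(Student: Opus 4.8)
The plan is to exploit the fact that in any $T^*$-extension $\J=\ak\oplus\ak^*$ the subspace $\ak^*$ is intrinsic: by the preceding Lemma it equals $\Ann(\J)$ (and also $\J^2$). Since any Jordan algebra isomorphism $F\colon \J_1\to \J_2$ carries $\Ann(\J_1)$ onto $\Ann(\J_2)$, it must satisfy $F(\ak^*)=\ak^*$. Thus $F$ is automatically ``block lower-triangular'' with respect to the decomposition $\ak\oplus\ak^*$: writing $F(x)=A_1(x)+\beta(x)$ for $x\in\ak$ (with $A_1(x)\in\ak$, $\beta(x)\in\ak^*$) and $F(f)=A_2(f)\in\ak^*$ for $f\in\ak^*$, the map $A_1$ is precisely the isomorphism induced by $F$ on the quotient $\J_i/\Ann(\J_i)\cong\ak$, hence invertible, and $A_2=F|_{\ak^*}$ is an isomorphism of $\ak^*$.

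For part (1), I would compute $F$ on a product of two elements of $\ak$. On one hand $F(xy)=F(\theta_1(x,y))=A_2(\theta_1(x,y))$ since $\theta_1(x,y)\in\ak^*$; on the other hand, because the $\ak^*$-components annihilate in $\J_2$, one gets $F(x)F(y)=(A_1(x)+\beta(x))(A_1(y)+\beta(y))=\theta_2(A_1(x),A_1(y))$, so the off-diagonal term $\beta$ plays no role. Equating yields the stated identity $A_2(\theta_1(x,y))=\theta_2(A_1(x),A_1(y))$. Conversely, given $A_1,A_2$ satisfying this identity, I would set $F:=A_1\oplus A_2$ (that is, $F(x+f)=A_1(x)+A_2(f)$) and verify directly, by the same two computations, that $F$ is an algebra homomorphism; it is bijective because $A_1$ and $A_2$ are.

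For part (2) the extra input is the isometry condition $B(F(u),F(v))=B(u,v)$. Testing it on a pair $x\in\ak$, $g\in\ak^*$ gives $g(x)=A_2(g)(A_1(x))$ for all such $x,g$, i.e. $A_2(g)\circ A_1=g$, so $A_2$ is forced to be the inverse transpose of $A_1$, namely $A_2(g)=g\circ A_1^{-1}$; the remaining diagonal pair $x,y\in\ak$ only imposes a constraint on $\beta$ and may be discarded. Substituting $g=\theta_1(x,y)$ into $A_2(g)\circ A_1=g$ and using the part-(1) identity produces exactly $\theta_1(x,y)=\theta_2(A_1(x),A_1(y))\circ A_1$. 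For the converse I would start from such an $A_1$, define $A_2$ by $A_2(g)=g\circ A_1^{-1}$ and $F=A_1\oplus A_2$; part (1) shows $F$ is an algebra isomorphism, while a one-line computation gives $B(F(x+f),F(y+g))=f(y)+g(x)=B(x+f,y+g)$, so $F$ is an i-isomorphism.

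The only genuine subtlety, and the step I would treat most carefully, is the off-diagonal component $\beta\colon\ak\to\ak^*$ of a general isomorphism. One must confirm that $\beta$ is harmless for the multiplicative identities (which it is, precisely because $\ak^*=\Ann(\J)$) and, in part (2), that the relation tying $A_2$ to $A_1$ can be read off from the mixed pairing $B(x,g)$ alone, without controlling $\beta$. This is what guarantees that the clean converse constructions with $\beta=0$ are already sufficient.
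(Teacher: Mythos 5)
Your proposal is correct and follows essentially the same route as the paper: decompose the isomorphism using the fact that $\ak^*=\Ann(\J_i)$ is preserved, observe that the off-diagonal component $\beta$ (the paper's $A_1'$) drops out of all products, and extract the relation $A_2(g)=g\circ A_1^{-1}$ from the mixed pairing in the isometric case. The only cosmetic difference is your justification that $A_1$ is invertible (via the induced map on $\J/\Ann(\J)$, versus the paper's direct kernel computation), which is equally valid.
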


\begin{proof}\hfill
\begin{enumerate}
\item Let $A: \J_1\rightarrow \J_2$ be a Jordan algebra
  isomorphism. Since $\ak^* = \Ann(\J_1) = \Ann(\J_2)$ is stable by
  $A$ then there exist linear maps $A_1:\ak\rightarrow \ak$,
  $A'_1:\ak\rightarrow \ak^*$ and $A_2:\ak^*\rightarrow \ak^*$ such
  that:
  \[A(x+f) = A_1(x) + A'_1(x) + A_2(f), \ \forall x+f\in \J_1.
\]
Since $A$ is an isomorphism one has $A_2$ also isomorphic. We show that
$A_1$ is an isomorphism of $\ak$. Indeed, if $A_1(x_0) = 0$ with some
$x_0\in\ak$ then $A(x_0) = A'_1(x_0)$ and
\[ 0=A(x_0)\J_2 = A(x_0A^{-1}(\J_2)) = A(x_0\J_1).\] That implies
$x_0\J_1 = 0$ and so $x_0\in\ak^*$. That means $x_0=0$, i.e. $A_1$ is
an isomorphism of $\ak$.

For all $x$ and $y\in\ak$, one has $A(xy) = A(\theta_1(x,y)) =
A_2(\theta_1(x,y))$ and
\[A(x)A(y) = (A_1(x)+A'_1(x))(A_1(y)+A'_1(y)) = A_1(x)A_1(y) =
\theta_2(A_1(x),A_1(y)).\] Therefore, $A_2(\theta_1(x,y)) =
\theta_2(A_1(x),A_1(y)), \forall x,y\in\ak$.

Conversely, if there exist an isomorphism $A_1$ of $\ak$ and an
isomorphism $A_2$ of $\ak^*$ satisfying:
\[A_2(\theta_1(x,y)) = \theta_2(A_1(x),A_1(y)), \forall x,y\in\ak,\]
then we define $A:\J_1\rightarrow \J_2$ by $A(x+f) =
A_1(x)+A_2(f),\forall x+f\in\J_1$. It is easy to see that $A$ is a
Jordan algebra isomorphism.
\item Assume $A: \J_1\rightarrow \J_2$ is a Jordan algebra
  i-isomorphism then there exist $A_1$ and $A_2$ defined as in
  (1). Let $x\in\ak, f\in\ak^*$, one has:
  \[B'(A(x),A(f)) = B(x,f) \Rightarrow A_2(f)(A_1(x)) = f(x).
\]
Hence, $A_2(f) = f\circ A_1^{-1}, \forall f\in\ak^*$. Moreover,
$A_2(\theta_1(x,y)) = \theta_2(A_1(x),A_1(y))$ implies that
\[\theta_1(x,y)) = \theta_2(A_1(x),A_1(y))\circ A_1, \forall x,y\in\ak.\]

Conversely, define $A(x+f) = A_1(x)+ f\circ A_1^{-1},\forall
x+f\in\J_1$ then $A$ is an i-isomorphism.
\end{enumerate}

\end{proof}

\begin{ex}\label{ex3.18}
  We keep the notations as above. Let $\J'$ be the $T^*$-extension of
  $\ak$ by $\theta' = \lambda\theta,\lambda\neq 0$ then $\J$ and $\J'$
  is i-isomorphic by $A: \J\rightarrow \J'$ defined by
  \[ A(x+f) = \frac{1}{\alpha} x+\alpha f,\forall x+f\in\J.\] where
  $\alpha\in\CC,\ \alpha^3 = \lambda$.
\end{ex}

\bigskip

For a non-degenerate cyclic symmetric map $\theta$ of $\ak$, define a
trilinear form
\[I(x,y,z) = \theta(x,y)z, \forall x,y,z\in \ak.
\]
Then $I\in \Sb ^3(\ak)$, the space of symmetric trilinear forms on
$\ak$. The non-degenerate condition of $\theta$ is equivalent to
$\frac{\partial I}{\partial p}\neq 0, \forall p\in \ak^*$.

Conversely, let $\ak$ be a complex vector space and $I\in \Sb ^3(\ak)$
such that $\frac{\partial I}{\partial p}\neq 0$ for all $p\in \ak^*$. Define $\theta:
\ak \times \ak \rightarrow \ak^*$ by $\theta(x,y):= I(x,y,.), \forall
x,y\in \ak$ then $\theta$ is symmetric and non-degenerate. Moreover,
since $I$ is symmetric, then $\theta$ is cyclic and we obtain a
reduced 2SNPE-Jordan algebra
$T_{\theta}^*(\ak)$ defined by $\theta$. Therefore, there is a
one-to-one map from the set of all $T^*$-extensions of a complex
vector space $\ak$ onto the subset $\{I\in \Sb ^3(\ak) \mid
\frac{\partial I}{\partial p}\neq 0, \forall p\in \ak^*\}$, such
elements are also called {\em non-degenerate}.

\begin{cor}
  Let $\J_1$ and $\J_2$ be $T^*$-extensions of $\ak$ with respect to
  $I_1$ and $I_2$ non-degenerate. Then $\J$ and $\J'$ are i-isomorphic
  if and only if there exists an isomorphism $A$ of $\ak$ such that
\[I_{1}(x,y,z) = I_{2}(A(x), A(y), A(z)), \forall x,y,z\in \ak.\]
\end{cor}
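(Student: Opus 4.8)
The plan is to reduce the statement to Proposition \ref{prop3.15}(2) by unwinding the correspondence between cyclic symmetric maps $\theta_i$ and symmetric trilinear forms $I_i$ recorded just above. Recall that $\J_1$ and $\J_2$ are the $T^*$-extensions associated to $\theta_i(x,y) = I_i(x,y,\cdot)$, so that by definition $I_i(x,y,z) = \theta_i(x,y)(z)$ for all $x,y,z\in\ak$ and $i=1,2$.

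First I would invoke Proposition \ref{prop3.15}(2): there is a Jordan algebra i-isomorphism between $\J_1$ and $\J_2$ if and only if there exists an isomorphism $A$ of $\ak$ satisfying
\[\theta_1(x,y) = \theta_2(A(x),A(y))\circ A, \quad \forall x,y\in\ak.\]
This is an identity of elements of $\ak^*$, hence equivalent to its evaluation at every $z\in\ak$.

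Next I would evaluate both sides at an arbitrary $z\in\ak$. The left-hand side gives $\theta_1(x,y)(z) = I_1(x,y,z)$, while the right-hand side gives $\theta_2(A(x),A(y))(A(z)) = I_2(A(x),A(y),A(z))$. Thus the condition of Proposition \ref{prop3.15}(2) is equivalent to
\[I_1(x,y,z) = I_2(A(x),A(y),A(z)), \quad \forall x,y,z\in\ak,\]
which is exactly the asserted condition. Since the quantifier over $z$ is absorbed into the trilinear identity, the two formulations carry the same isomorphism $A$, and the corollary follows.

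I expect no real obstacle here: the only point deserving care is that the equality $\theta_1(x,y) = \theta_2(A(x),A(y))\circ A$ in $\ak^*$ must be read pointwise against all $z\in\ak$ to produce the trilinear identity, and conversely that the trilinear identity recovers the functional equality. Both directions are immediate from the definition $I_i(x,y,z)=\theta_i(x,y)(z)$ together with the established one-to-one correspondence between non-degenerate $\theta$ and non-degenerate $I\in\Sb^3(\ak)$, so the corollary is a direct translation of Proposition \ref{prop3.15}(2).
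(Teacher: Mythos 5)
Your proposal is correct and is exactly the argument the paper intends (the corollary is stated without proof precisely because it is this direct translation of Proposition \ref{prop3.15}(2) via the identity $I_i(x,y,z)=\theta_i(x,y)(z)$). Evaluating $\theta_1(x,y)=\theta_2(A(x),A(y))\circ A$ at each $z\in\ak$ gives the trilinear identity and conversely, so there is nothing to add.
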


In particular, $\J$ and $\J'$ are i-isomorphic if and only if there is
a isomorphism $^t{}A$ on $\ak^*$ which induces the isomorphism on $\Sb
^3(\ak)$, also denoted by $^t{}A$ such that $^t{}A(I_1)=I_2$. In this
case, we say that $I_1$ and $I_2$ are {\em equivalent}.

\begin{ex}
  Let $\ak = \CC a$ be one-dimensional vector space then $\Sb^3(\ak) =
  \CC (a^*)^3$. By Example \ref{ex3.18}, $T^*$-extensions of $\ak$ by
  $(a^*)^3$ and $\lambda (a^*)^3,\ \lambda\neq 0$, are i-isomorphic
  (also, these trilinear forms are equivalent). Hence, there is only
  one i-isomorphic class of $T^*$-extensions of $\ak$, that is $\J =
  \CC a \oplus \CC b$ with $a^2 =b$ and $B(a,b) = 1$, the other are
  zero.
\end{ex}

\bigskip

Now, let $\ak = \CC x\oplus\CC y$ be a 2-dimensional vector space then
\[\Sb^3(\ak) = \{a_1(x^*)^3 + a_2(x^*)^2y^* + a_3x^*(y^*)^2 +
a_4(y^*)^3, a_i\in\CC.\]

It is easy to prove that every bivariate homogeneous polynomial of
degree 3 is reducible. Therefore, by a suitable basis choice
(certainly isomorphic), a non-degenerate element $I\in\Sb^3(\ak)$ has the
form $I = ax^*y^*(bx^*+ cy^*),\ a,b\neq 0$. Replace $x^*:=\alpha x^*$
with $\alpha^2 = ab$ to get the form $I_{\lambda} = x^*y^*(x^*+
\lambda y^*),\ \lambda\in\CC$.

Next, we will show that $I_0$ and $I_\lambda,\lambda \neq 0$ are not
equivalent. Indeed, assume the contrary, i.e. there is an isomorphism
$^t{}A$ such that $^t{}A(I_0) = I_\lambda$. We can write
\[^t{}A(x^*) = a_1x^*+ b_1y^*,\ ^t{}A(y^*) = a_2x^*+ b_2y^*, \ a_1,a_2,b_1,b_2\in\CC.\]

Then
\begin{eqnarray*}
  ^t{}A(I_0)=(a_1x^*+ b_1y^*)^2(a_2x^*+ b_2y^*)
  = a_1^2a_2(x^*)^3 + (a_1^2b_2 + 2a_1a_2b_1)(x^*)^2y^* + \\ (2a_1b_1b_2 + a_2b_1^2)x^*(y^*)^2 + b_1^2b_2(y^*)^3.
\end{eqnarray*}

Comparing the coefficients we will get a contradiction. Therefore,
$I_0$ and $I_\lambda,\lambda \neq 0$ are not equivalent.

However, two forms $I_{\lambda_1}$ and $I_{\lambda_2}$ where $\lambda_1,\lambda_2\neq 0$ are equivalent
by the isomorphism $^t{}A$ satisfying $^t{}A(I_{\lambda_1}) = I_{\lambda_2}$ defined by:
\[ ^t{}A(x^*) =\alpha y^*,\ ^t{}A(y^*) = \beta x^*
\] where $\alpha,\beta\in\CC$ such that $\alpha^3 =
\lambda_1\lambda_2^2$, $\beta^3 = \frac{1}{\lambda_1^2\lambda_2}$. This implies
that there are only two i-isomorphic classes of $T^*$-extensions of
$\ak$.

\begin{ex}
  Let $\J_0 = \spa\{x,y,e,f\}$ be a $T^*$-extension of a 2-dimensional
  vector space $\ak$ by $I_0 = (x^*)^2y^*$, with $e = x^*$ and $f=
  y^*$, that means $B(x,e) =B(y,f) = 1$, the other are zero. It is
  easy to compute the product in $\J_0$ defined by $x^2 = f$,
  $xy=e$. Let $I_\lambda= x^*y^*(x^*+ \lambda y^*), \lambda \neq 0$
  and $\J_\lambda = \spa\{x,y,e,f\}$ be another $T^*$-extension of the
  2-dimensional vector space $\ak$ by $I_\lambda$. The products on
  $\J_\lambda$ are $x^2 = f$, $xy = e + \lambda f$ and $yy = \lambda
  e$. These two algebras are neither i-isomorphic nor
  isomorphic. Indeed, if there is $A: \J_0\rightarrow \J_\lambda$ an
  isomorphism. Assume $A(y) = \alpha_1x+\alpha_2y +
  \alpha_3e+\alpha_4f$ then
  \[0 = A(yy)= ( \alpha_1x+\alpha_2y + \alpha_3e+\alpha_4f)^2 =
  \alpha_1^2x^2 + 2\alpha_1\alpha_2xy + \alpha_2^2y^2.\] We obtain
  $(\lambda\alpha_2^2+2\alpha_1\alpha_2)e +
  (2\lambda\alpha_1\alpha_2+\alpha_1^2)f = 0$. Hence, $\alpha_1 =
  \pm\lambda\alpha_2$. Both cases imply $\alpha_1 =\alpha_2 =0$
  (a contradiction).

  We can also conclude that there are only two isomorphic classes of
  $T^*$-extensions of $\ak$.
\end{ex}

\section{Symmetric Novikov algebras}

\begin{defn}\label{defn4.1}
  An algebra $\N$ over $\CC$ with a bilinear product
  $\N\times\N\rightarrow \N$, $(x, y)\mapsto xy$ is called a {\em
    left-symmetric algebra} if it satisfies the identity:
\begin{equation} \label{nov-1}
(xy)z - x(yz) = (yx)z - y(xz), \forall x, y, z \in \N.
\end{equation}
or in terms of associators
\begin{equation*} (x,y,z) = (y,x,z), \forall x, y, z \in \N.
\end{equation*}
It is called a {\em Novikov algebra} if in addition
\begin{equation} \label{nov-2} (xy)z = (xz)y
\end{equation}
holds for all $x, y, z \in \N$. In this case, the commutator $[x, y]
:= xy - yx$ of $\N$ defines a Lie algebra, denoted by $\g (\N)$, which
is called the {\em sub-adjacent Lie algebra} of $\N$. It is known that
$\g (\N)$ is a solvable Lie algebra \cite{Bur06}. Conversely, let $\g$
be a Lie algebra with Lie bracket $[.,.]$. If there exists a bilinear
product $\g \times \g \rightarrow \g, (x,y)\mapsto xy$ that satisfies
(\ref{nov-1}), (\ref{nov-2}) and $[x, y] = xy - yx,\forall x,y\in\J$
then we say that $\g$ admits a {\em Novikov structure}.
\end{defn}

\begin{ex}\label{ex4.2}
  Every 2-step nilpotent algebra $\N$ satisfying $(xy)z=x(yz) = 0,
  \forall x,y,z\in\N$, is a Novikov algebra.
\end{ex}

For $x \in \N$, denote by $L_x$ and $R_x$ respectively the left and
right multiplication operators $L_x (y) = xy$, $R_x(y)= yx$, $\forall
y\in \N$. The condition (\ref{nov-1}) is equivalent to $[L_x, L_y] =
L_{[x, y]}$ and (\ref{nov-2}) is equivalent to $[R_x, R_y] =
0$. In the other words, the left-operators form a Lie algebra and the right-operators commute.

It is easy to check two Jacobi-type identities:
\begin{prop}
  Let $\N$ be a Novikov algebra then for all $x,y,z\in \N$:
\[[x,y]z + [y,z]x + [z,x]y = 0,
\]
\[x[y,z] + y[z,x] + z[x,y] = 0.
\]
\end{prop}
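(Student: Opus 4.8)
The plan is to prove each identity by expanding the commutators into ordinary products and cancelling the resulting six terms, using one of the two defining identities for each. It is worth noting in advance that the two statements are in fact the \emph{same} expression once the left-symmetry \eqref{nov-1} is invoked, so only one genuine cancellation argument is needed.

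For the first identity, I would write $[x,y]z = (xy)z - (yx)z$ and take the cyclic sum, obtaining
\[
[x,y]z + [y,z]x + [z,x]y = (xy)z - (yx)z + (yz)x - (zy)x + (zx)y - (xz)y .
\]
Then I would apply \eqref{nov-2} in the form $(ab)c = (ac)b$ three times, namely $(xy)z = (xz)y$, $(yx)z = (yz)x$ and $(zy)x = (zx)y$. Each of these relations pairs two of the six terms with opposite signs, so the whole sum collapses to $0$. This step uses only the right-commutativity \eqref{nov-2}; intuitively it is the statement that the right multiplications $R_x$ commute.

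For the second identity, I would first record the consequence of left-symmetry: since $(x,y,z) = (y,x,z)$, rearranging gives $x(yz) - y(xz) = (xy)z - (yx)z = [x,y]z$. Expanding $x[y,z] + y[z,x] + z[x,y]$ into its six products and regrouping them into the three pairs $x(yz) - y(xz)$, $y(zx) - z(yx)$ and $z(xy) - x(zy)$, each pair becomes a term $[x,y]z$, $[y,z]x$, $[z,x]y$ respectively. Hence $x[y,z] + y[z,x] + z[x,y] = [x,y]z + [y,z]x + [z,x]y$, which vanishes by the first identity.

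There is no real obstacle here; the only thing to watch is the bookkeeping of the six terms, making sure each is matched with the correct instance of \eqref{nov-1} or \eqref{nov-2} and with the correct sign. Alternatively, one could prove the second identity directly in the same spirit as the first, using \eqref{nov-1} to cancel the six products in pairs; I would prefer the route above since it exhibits cleanly that the two identities coincide.
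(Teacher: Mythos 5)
Your proof is correct: the cyclic sum of $[x,y]z$ cancels in pairs via three applications of the right-commutativity identity (\ref{nov-2}), and the second identity reduces to the first via left-symmetry (\ref{nov-1}) exactly as you describe. The paper omits the proof entirely ("it is easy to check"), and your direct verification is precisely the intended routine computation, with the added observation that the two identities coincide being a nice bonus.
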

\begin{defn}
  Let $\N$ be a Novikov algebra. A bilinear form $B: \N \times \N
  \rightarrow \CC$ is called {\em associative} if \[B(xy, z) = B(x,
  yz), \forall x,y,z \in \N.\] We say that $\N$ is a {\em symmetric
    Novikov algebra} if it is endowed a non-degenerate associative
  symmetric bilinear form $B$.
\end{defn}



Let $(\N , B)$ be a symmetric Novikov algebra and $S$ be a subspace of
$\N$. Denote by $S^\bot$ the set $\{x \in \N \mid B(x,S) = 0\}$. If
$B|_{S\times S}$ is non-degenerate (resp. degenerate) then we say that
$S$ is {\em non-degenerate} (resp. {\em degenerate}).

The proof of Lemma \ref{lem4.6} and Proposition \ref{4.4.5} below is lengthy, but straight forward then we omit it.
\begin{lem}\label{lem4.6}
Let $(\N , B)$ be a symmetric Novikov algebra and $I$ be an ideal of $\N$ then

\begin{enumerate}
\item $I^\bot$ is also an ideal of $\N$ and $II^\bot = I^\bot I =
  \{0\}$

\item If $I$ is non-degenerate then so is $I^\bot$ and $\N = I \oplusp
  I^\bot$.
\end{enumerate}
\end{lem}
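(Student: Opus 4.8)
The plan is to follow the template of Lemma \ref{lem1.3}, the Jordan-algebra version of this statement, adapting it to the non-commutative setting of a Novikov algebra. The one genuinely new ingredient is that, since the product need not be commutative, I must verify separately that $I^\bot$ is stable under both left and right multiplication. The tool that makes this painless is a cyclic form of the associativity of $B$: combining the symmetry of $B$ with the relation $B(uv,w)=B(u,vw)$ applied in both directions, one has $B(uv,w)=B(w,uv)=B(wu,v)=B(v,wu)$, that is, $B(uv,w)=B(v,wu)$ for all $u,v,w\in\N$. Note that this uses only the associativity and symmetry of the scalar product, not the associativity of $\N$ itself.

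For part (1), let $x\in I^\bot$ and $a\in\N$. To see $xa\in I^\bot$, I compute for $i\in I$ that $B(xa,i)=B(x,ai)$ by associativity of $B$; since $I$ is an ideal, $ai\in I$, and $x\in I^\bot$ forces $B(x,ai)=0$. To see $ax\in I^\bot$ I instead use the cyclic identity, $B(ax,i)=B(x,ia)$, and $ia\in I$ again gives $0$. Hence $I^\bot\N\subseteq I^\bot$ and $\N I^\bot\subseteq I^\bot$, so $I^\bot$ is a two-sided ideal. For $II^\bot=\{0\}$, take $i\in I$ and $x\in I^\bot$; for every $z\in\N$ we have $B(ix,z)=B(i,xz)$, and $xz\in I^\bot$ because $I^\bot$ is an ideal, so $B(i,xz)=0$ as $B(I,I^\bot)=\{0\}$ by definition of $I^\bot$; non-degeneracy of $B$ then yields $ix=0$. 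The identity $I^\bot I=\{0\}$ follows symmetrically from $B(xi,z)=B(x,iz)$ with $iz\in I$.

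For part (2), assume $B|_{I\times I}$ is non-degenerate. If $x\in I\cap I^\bot$ then $x\in I$ and $B(x,I)=0$, so $x=0$; thus $I\cap I^\bot=\{0\}$. Because $B$ is non-degenerate on $\N$ one always has $\dim I^\bot=\dim\N-\dim I$, so $I\oplus I^\bot$ exhausts $\N$, and by construction the summands are orthogonal, giving $\N=I\oplusp I^\bot$. Finally $I^\bot$ is non-degenerate: if $x\in I^\bot$ satisfies $B(x,I^\bot)=0$, then together with $B(x,I)=0$ we get $B(x,\N)=0$, whence $x=0$. The only real subtlety, and presumably the reason the computation is described as lengthy, is the bookkeeping forced by non-commutativity: one cannot merge the left- and right-multiplication arguments as in the commutative Jordan case. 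I expect the cyclic identity $B(uv,w)=B(v,wu)$ to be the decisive step that keeps the argument short; everything after it is the same linear-algebra reasoning as in Lemma \ref{lem1.3}.
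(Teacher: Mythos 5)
Your proof is correct. The paper in fact omits the proof of Lemma \ref{lem4.6} (declaring it ``lengthy, but straightforward''), and your argument is exactly the expected one: it adapts the paper's own proof of the Jordan-algebra analogue, Lemma \ref{lem1.3}, to the non-commutative setting, with the cyclic identity $B(uv,w)=B(v,wu)$ (a formal consequence of the symmetry and associativity of $B$ alone) supplying the only extra step needed to handle left and right multiplication separately.
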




\begin{prop} \label{4.4.5} We call the set $C(\N):= \{x\in \N \mid xy
  =yx, \forall y \in \N \}$ the {\bf center} of $\N$ and denote by
  $As(\N) = \{x\in\N\ \mid (x,y,z)=0,\forall y,z\in\N\}$. One has

\begin{enumerate}
\item If $\N$ is a Novikov algebra then $C(\N)\subset
  N(\N)$, where $N(\N)$ is the nucleus of $\N$ defined in Definition \ref{1.11} (3). Moreover, if $\N$ is also commutative then
  $N(\N) = \N = As(\N)$ (that means $\N$ is an associative algebra).
 \item If $(\N, B)$ is a symmetric Novikov algebra then
  \begin{itemize}
    \item[(i)] $C(\N)=[\g(\N), \g(\N)]^\bot$.
    \item[(ii)] $N(\N) = As(\N) = (\N, \N, \N)^\bot$.
    \item[(iii)] $L\Ann(\N) = R\Ann(\N) = \Ann(\N) = (\N \N)^\bot$.
  \end{itemize}
  \end{enumerate}
\end{prop}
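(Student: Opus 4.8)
We must prove two groups of facts. Part (1) concerns an arbitrary Novikov algebra $\N$: that $C(\N)\subset N(\N)$, and that commutativity forces $N(\N)=\N=As(\N)$. Part (2) concerns a \emph{symmetric} Novikov algebra $(\N,B)$ and gives three orthogonality identifications: (i) $C(\N)=[\g(\N),\g(\N)]^\bot$, (ii) $N(\N)=As(\N)=(\N,\N,\N)^\bot$, and (iii) the three annulators coincide with $(\N\N)^\bot$. Let me sketch how I would attack these.

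For part (1), the plan is to start from the defining identities. The left-symmetric law reads $(x,y,z)=(y,x,z)$, so if $x\in C(\N)$ commutes with everything then $xy=yx$ gives $(x,y,z)=(y,x,z)=(x,y,z)$ trivially; the real content is to show $x$ associates. The first step is to use commutativity of $x$ together with the Novikov identity $(xy)z=(xz)y$ and left-symmetry to cancel the three associators $(x,y,z),(y,x,z),(y,z,x)$ in all positions. Concretely I would write out $(x,y,z)$, apply $xy=yx$ to convert $x(yz)$ and $(xy)z$ into symmetric forms, and use $(xz)y=(xy)z$ from \eqref{nov-2} to force the associator to vanish for all slots, yielding $x\in N(\N)$. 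For the commutative case, commutativity plus left-symmetry gives $(x,y,z)=(y,x,z)$, but commutativity also makes the Novikov identity $(xy)z=(xz)y$ symmetric in $y,z$; combining these symmetries of the associator with the Jordan-type closure forces $(x,y,z)=0$ identically, so $N(\N)=As(\N)=\N$.

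For part (2), every statement has the shape ``subspace $=$ orthogonal complement of another subspace,'' and the uniform engine is the associativity of $B$. For (iii), $x\in\Ann(\N)$ means $x\N=\N x=0$; testing against $B$ and using $B(xy,z)=B(x,yz)$ shows $x\in(\N\N)^\bot$, and conversely $x\in(\N\N)^\bot$ gives $B(x,\N\N)=0$, hence $B(\N x,\N)=B(x,\N\N)=0$ and by non-degeneracy $\N x=0$; symmetrically $x\N=0$, so all three annulators collapse. For (ii), the key computation is $B((\N,\N,\N),x)$: using associativity of $B$ one rewrites $B((uv)w-u(vw),x)$ as a difference that vanishes precisely when $x\in As(\N)=N(\N)$ (the equality $N(\N)=As(\N)$ itself follows from part (1) by a symmetric-bilinear-form argument identifying the three associator conditions in the nucleus). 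For (i), I would combine associativity of $B$ with the Jacobi-type identities of the preceding Proposition: $x\in C(\N)$ iff $[x,y]=0$ for all $y$, and $B([x,y],z)=B(x,[y,z])$ (which follows from associativity of $B$ plus symmetry) shows $C(\N)=[\g(\N),\g(\N)]^\bot$.

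The main obstacle will be part (1): showing $C(\N)\subset N(\N)$ and the commutative collapse $N(\N)=As(\N)=\N$ are purely identity-juggling steps that do not use $B$ at all, so they require careful bookkeeping of the three associators $(x,y,z),(y,x,z),(y,z,x)$ under the two Novikov axioms \eqref{nov-1} and \eqref{nov-2}. Once part (1) is in hand, part (2) is comparatively mechanical: each identification is a one-line consequence of the associativity of $B$ and its non-degeneracy, exactly as in Lemma~\ref{lem1.3} and the annulator/nucleus duality already established for pseudo-Euclidean Jordan algebras. This is why the authors can fairly describe the proof as ``lengthy but straightforward'' and omit it.
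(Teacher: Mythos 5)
The paper gives no proof of this proposition (it is declared ``lengthy, but straightforward'' and omitted), so there is nothing to compare your route against; judging the sketch on its own terms, parts (1), (2)(i) and (2)(iii) are correct and go through exactly as you describe. For (1) the computation is the one you indicate: for $x$ central, $(x,y,z)=(xy)z-x(yz)=(yx)z-(yz)x=0$ by (\ref{nov-2}), then $(y,x,z)=0$ by left-symmetry, and $(y,z,x)=(yz)x-y(zx)=(yx)z-y(xz)=(y,x,z)=0$. The commutative case is then immediate from $\N=C(\N)\subset N(\N)$; no ``Jordan-type closure'' is needed or indeed meaningful there.

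The one genuine imprecision is in (2)(ii). The equality $N(\N)=As(\N)$ does \emph{not} follow from part (1): part (1) says nothing about elements that associate without commuting, and in a general Novikov algebra the left- and right-nucleus conditions differ (in the paper's own example $\N=\CC a\oplus\CC b$ with $ba=-a$, one has $(a,y,z)=0$ for all $y,z$ but $(b,b,a)=a\neq 0$). What actually makes (ii) work is the adjoint identity $B((x,y,z),u)=B(x,y(zu))-B(x,(yz)u)=-B(x,(y,z,u))$, valid for any algebra with an associative symmetric form. Read with $u$ as the test vector it says $u\perp(\N,\N,\N)$ iff $(y,z,u)=0$ for all $y,z$ --- the \emph{right}-nucleus condition; read with $x$ as the test vector it says $x\perp(\N,\N,\N)$ iff $(x,y,z)=0$ for all $y,z$, i.e.\ $x\in As(\N)$. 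Your single computation of $B((uv)w-u(vw),x)$ produces only the first reading, which lands on the right nucleus rather than on $As(\N)$ as you claim; you need both readings to get $(\N,\N,\N)^\bot=As(\N)=\{x\mid (\N,\N,x)=0\}$, after which left-symmetry identifies the middle-nucleus condition with the left one and all three conditions coincide, giving $N(\N)=As(\N)=(\N,\N,\N)^\bot$. This is a small and easily repaired gap --- it uses only the tools you already invoke --- but it is the one non-mechanical point in part (2) and should be made explicit.
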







\begin{prop}
Let $\N$ be a Novikov algebra then
\begin{enumerate}
\item $C(\N)$ is a commutative subalgebra.

\item $As(\N)$, $N(\N)$ are ideals.
\end{enumerate}
\end{prop}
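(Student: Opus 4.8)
Throughout I write the associator as $(x,y,z)=(xy)z-x(yz)$ and use the two defining identities of a Novikov algebra: left-symmetry $(x,y,z)=(y,x,z)$ and right-commutativity $(xy)z=(xz)y$.

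For part (1), commutativity of $C(\N)$ is immediate, since any two elements of $C(\N)$ commute because each commutes with all of $\N$. To see that $C(\N)$ is a subalgebra I would take $x,y\in C(\N)$, fix $z\in\N$, and show $(xy)z=z(xy)$. Using right-commutativity together with the centrality of $x$ and $y$,
\[(xy)z=(xz)y=(zx)y=(zy)x=(yz)x,\]
while left-symmetry in the form $(z,x,y)=(x,z,y)$, combined with $zx=xz$ so that the $(zx)y$ and $(xz)y$ terms cancel, gives $z(xy)=x(zy)=x(yz)$. As $x$ is central it commutes with $yz$, hence $(yz)x=x(yz)$, and the two computations meet: $(xy)z=z(xy)$. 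Thus $xy\in C(\N)$.

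For part (2) the engine is the identity
\[(uv,y,z)=(u,y,z)\,v\qquad\text{for all }u,v,y,z\in\N,\]
which I would obtain by applying right-commutativity twice to each of $((uv)y)z$ and $(uv)(yz)$, bringing both into the form $(\cdot)v$ and factoring $v$ out; its left-symmetric companion $(y,uv,z)=(y,u,z)v$ is then immediate. With this, $As(\N)$ is a right ideal, since for $x\in As(\N)$ one has $(xa,y,z)=(x,y,z)a=0$. For the left ideal property I would first verify, using only that every associator with $x$ in the first or second slot vanishes, that $(ax,y,z)=(a,xy,z)$; then left-symmetry and the displayed identity give $(a,xy,z)=(xy,a,z)=(x,a,z)y=0$. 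Hence $As(\N)$ is a two-sided ideal.

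For $N(\N)$ I would write $N(\N)=As(\N)\cap N_r$, where $N_r=\{x\mid (y,z,x)=0,\ \forall y,z\}$ (by left-symmetry the two remaining defining conditions for the nucleus coincide, so only the third-slot condition is extra). Since $As(\N)$ is already an ideal, for $x\in N(\N)$ and $a\in\N$ both $ax$ and $xa$ lie in $As(\N)$, so the first- and second-slot membership conditions are automatic, and only the third-slot condition $(p,q,ax)=(p,q,xa)=0$ remains. A computation parallel to the $As$ case, now invoking the $N_r$ condition $(p,q,x)=0$ (and additionally the $As$ condition for the $xa$ version), collapses both to $(p,q,ax)=(p,q,xa)=(p,q,a)\,x$. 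The crux is therefore to establish $(\N,\N,\N)\cdot N(\N)=0$, i.e.\ that right multiplication by a nucleus element annihilates every associator; this I expect to be the main obstacle, because the permutation identities above merely shuffle $(p,q,a)x$ among equal expressions and never force vanishing, so one must use the full two-sided nucleus condition rather than identities alone. The route I would take is operator-theoretic: writing $(p,q,a)=[R_a,L_p](q)$, the hypothesis $x\in N_r$ makes $R_x$ commute with every $L_p$ and (by right-commutativity) with every $R_a$, so $R_x$ centralizes the multiplication algebra and stabilizes $(\N,\N,\N)$; combining this with the $As$ condition on $x$ and the associativity of the nucleus should force $R_x$ to vanish on $(\N,\N,\N)$, yielding $ax,xa\in N_r$ and completing the proof that $N(\N)$ is an ideal.
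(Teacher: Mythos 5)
Your argument is correct in part (1) and in the $As(\N)$ half of part (2), and it takes a genuinely different route from the paper's; the only place it falls short of a complete proof is the last step for $N(\N)$, which you leave as a sketch. In part (1) you argue directly from the two Novikov identities, which is self-contained (the paper's one-line computation silently uses $C(\N)\subset N(\N)$ from an earlier, unproved proposition to kill the associator $(z,x,y)$). For $As(\N)$, your identity $(uv,y,z)=(u,y,z)v$ is exactly the computation the paper performs inline, and your chain $(ax,y,z)=(a,xy,z)=(xy,a,z)=(x,a,z)y=0$ is a valid repackaging of its second computation. The divergence that matters is in the $N(\N)$ part: you collapse $(p,q,ax)$ and $(p,q,xa)$ to $(p,q,a)x$, which forces you to prove the auxiliary fact $(\N,\N,\N)\cdot N(\N)=0$, whereas the paper expands the same associators differently and obtains $(y,z,xt)=(y,z,tx)=(y,z,x)t+(y,zx,t)$, where each summand vanishes on the spot: $(y,z,x)=0$ because $x$ kills third-slot associators, and $(y,zx,t)=(zx,y,t)=0$ because $zx\in As(\N)$ by the ideal property already established. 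So the paper needs no auxiliary lemma. Your sketched operator-theoretic finish does close, however, and in one line: writing $(p,q,a)=[R_a,L_p](q)$, the third-slot condition on $x$ says $R_x$ commutes with every $L_p$, and right-commutativity says it commutes with every $R_a$, hence
\[
(p,q,a)x \;=\; R_x\left([R_a,L_p](q)\right)\;=\;[R_a,L_p](qx)\;=\;(p,qx,a)\;=\;(qx,p,a)\;=\;0,
\]
the last equality because $qx\in As(\N)$. Nothing in your proposal fails; you should simply replace ``should force'' by this computation, or switch to the paper's choice of expansion, which makes the lemma unnecessary.
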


\begin{proof}\hfill
\begin{enumerate}
\item Let $x,y\in C(\N)$ then $(xy)z = (xz)y=(zx)y = z(xy)
  +(z,x,y)=z(xy),\forall z\in\N$. Therefore, $xy\in C(\N)$ and then
  $C(\N)$ is a subalgebra of $\N$. Certainly, $C(\N)$ is commutative.
\item Let $x\in As(\N),y,z,t\in\N$. By the equality
  \[(xy,z,t) = ((xy)z)t-(xy)(zt) = ((xz)t)y-(x(zt))y = (x,z,t)y = 0,\]
  one has $xy\in As(\N)$. Moreover,
  \[(yx,z,t) = ((yx)z)t-(yx)(zt) = (y(xz))t-y(x(zt)) \]
  \[= (y,xz,t)+y((xz)t) -y(x(zt)) = y(x,z,t) = 0\] since $xz\in
  As(\N)$. Therefore $As(\N)$ is an ideal of $\N$.

Similarly, let $x\in N(\N),y,z,t\in\N$ one has:
\[(y,z,xt) = (yz)(xt)-y(z(xt)) = ((yz)x)t-(yz,x,t)-y((zx)t -(z,x,t))
\]
\[=((yz)x)t - (y(zx))t + (y,zx,t) = (y,z,x)t = 0\] and \[(y,z,tx) =
(yz)(tx)-y(z(tx)) = ((yz)t)x-(yz,t,x)-y((zt)x-(z,t,x))\]
\[=((yz)x)t-y((zx)t) = (y,z,x)t +(y,zx,t) = 0.\] Then $N(\N)$ is also
an ideal of $\N$.
\end{enumerate}

\end{proof}

\begin{lem}\label{4.4.6}
  Let $(\N,B)$ be a symmetric Novikov algebra then $[L_x,L_y] =
  L_{[x,y]} = 0$ for all $x,y\in \N$. Consequently, for a symmetric Novikov algebra, the Lie algebra formed by the left-operators is Abelian.
\end{lem}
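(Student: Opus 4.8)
The plan is to use the associativity and symmetry of $B$ to identify the adjoint of each multiplication operator, and then transport the commutativity of the right multiplications to the left multiplications by passing to adjoints. The key observation is that, with respect to $B$, the adjoint $L_x^*$ of the left multiplication $L_x$ equals the right multiplication $R_x$. Indeed, for all $x,y,z\in\N$ one computes $B(L_x y,z) = B(xy,z) = B(z,xy) = B(zx,y) = B(y,R_x z)$, where the outer equalities use the symmetry of $B$ and the middle one its associativity. Since $B$ is non-degenerate, this shows $L_x^* = R_x$, and taking adjoints once more gives $R_x^* = L_x$.

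Next I would invoke identity (\ref{nov-2}), which in operator form reads $[R_x,R_y] = 0$ for all $x,y\in\N$: the right multiplications commute. Applying the anti-automorphism $T\mapsto T^*$ to this relation and using $R_x^* = L_x$, one gets $0 = [R_x,R_y]^* = (R_xR_y - R_yR_x)^* = L_yL_x - L_xL_y = -[L_x,L_y]$. Hence $[L_x,L_y]=0$ for all $x,y\in\N$.

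Finally, identity (\ref{nov-1}) in operator form states $[L_x,L_y] = L_{[x,y]}$, so the previous step yields $L_{[x,y]} = 0$, that is, $[x,y]z = 0$ for all $x,y,z\in\N$. This proves both equalities in the statement and shows that the Lie algebra of left multiplications is Abelian. The only point requiring genuine care is the adjoint identification $L_x^* = R_x$; once it is in hand, the rest is a one-line consequence of taking adjoints, so I do not expect a serious obstacle beyond correctly tracking the symmetry and associativity of $B$ in that first computation.
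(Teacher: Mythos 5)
Your proof is correct and is essentially the paper's argument in adjoint language: the paper's computation $B([L_x,L_y](z),t)=B((tx)y-(ty)x,z)=0$ is exactly your identity $L_x^*=R_x$ unrolled, with the vanishing coming from the same Novikov identity $[R_x,R_y]=0$ and non-degeneracy of $B$. No gap.
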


\begin{proof}
  It follows the proof of Lemma II.5 in \cite{AB10}. Fix $x,y\in\N$,
  for all $z,t\in\N$ one has
  \[B([L_x,L_y](z),t) = B(x(yz)-y(xz),t) = B((tx)y - (ty)x,z) = 0.\]
  Therefore, $[L_x,L_y] = L_{[x,y]} = 0,\forall x,y\in\N$.
\end{proof}

\begin{cor}\label{cor4.10}
  Let $(\N,B)$ be a symmetric Novikov algebra then the sub-adjacent Lie algebra $\g (\N)$ of $\N$ with the bilinear form
  $B$ becomes a quadratic 2-step nilpotent Lie algebra.
\end{cor}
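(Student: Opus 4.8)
The plan is to establish Corollary \ref{cor4.10} by assembling several facts already proven in the excerpt. The statement asserts that for a symmetric Novikov algebra $(\N,B)$, the sub-adjacent Lie algebra $\g(\N)$ is quadratic and 2-step nilpotent. Being \emph{quadratic} means $B$ is a non-degenerate, symmetric, invariant (associative) bilinear form on $\g(\N)$; being \emph{2-step nilpotent} means $[[\g(\N),\g(\N)],\g(\N)]=0$. The key input is Lemma \ref{4.4.6}, which gives $L_{[x,y]}=0$ for all $x,y\in\N$.

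First I would verify that $B$ is a quadratic form on $\g(\N)$. Non-degeneracy and symmetry are given by hypothesis. For invariance, I must check $B([x,y],z)+B(y,[x,z])=0$, i.e.\ $B([x,y],z)=B(y,[z,x])$. Expanding via the associativity of $B$ for the Novikov product, $B([x,y],z)=B(xy,z)-B(yx,z)=B(x,yz)-B(y,xz)$, and similarly $B(y,[z,x])=B(y,zx)-B(y,xz)$. So invariance reduces to $B(x,yz)=B(y,zx)$, which follows directly from the associativity and symmetry of $B$: $B(x,yz)=B(xy,z)=B(z,xy)=B(zx,y)=B(y,zx)$ using commutativity of $B$ and the associative identity twice. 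Hence $B$ is invariant and $\g(\N)$ is quadratic.

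Next I would derive 2-step nilpotency. By Lemma \ref{4.4.6}, $L_{[x,y]}=0$, which means $[x,y]\,z=0$ for all $x,y,z\in\N$ (left multiplication by any bracket is zero). Then for the Lie bracket, $[[x,y],z]=[x,y]\,z - z\,[x,y]$. The first term vanishes by $L_{[x,y]}=0$. For the second term, I would invoke that $B$ is non-degenerate: for any $t\in\N$, $B(z[x,y],t)=B(z,[x,y]\,t)=B(z,0)=0$ again using $L_{[x,y]}=0$ together with associativity of $B$, so $z[x,y]=0$ as well. Therefore $[[x,y],z]=0$ for all $x,y,z$, which is precisely the 2-step nilpotency of $\g(\N)$.

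The main obstacle, though a mild one, is making the logical dependence on Lemma \ref{4.4.6} fully explicit: the lemma is stated as $[L_x,L_y]=L_{[x,y]}=0$, and I must be careful to extract from $L_{[x,y]}=0$ both that $[x,y]$ annihilates $\N$ on the left and, via the associativity and non-degeneracy of $B$, that it also annihilates on the right, so that the Lie bracket $[[x,y],z]$ genuinely vanishes rather than merely one of its two Novikov-product constituents. Once this two-sided annihilation is secured, the result follows immediately; alternatively one could cite Proposition \ref{4.4.5}(2)(iii), which identifies $\Ann(\N)$ with $(\N\N)^\bot$ and shows the left and right annihilators coincide, to streamline the passage from left-annihilation to two-sided annihilation.
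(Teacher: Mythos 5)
Your proposal is correct and follows essentially the same route as the paper: invariance of $B$ on $\g(\N)$ by direct expansion using associativity, and 2-step nilpotency from Lemma \ref{4.4.6} via the fact that $[x,y]$ annihilates $\N$ on both sides (you prove the left-to-right passage by hand, which is exactly the content of Proposition \ref{4.4.5}(2)(iii) that the paper cites). No gaps.
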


\begin{proof}
  One has
  \[ B([x,y],z) = B(xy - yx, z)= B(x,yz) - B(x,zy) =
  B(x,[y,z]),\forall x,y,z\in \N.
\]
Hence, $\g (\N)$ is quadratic. By Lemma \ref{4.4.6} and 2(iii) of
Proposition \ref{4.4.5} one has $[x,y]\in L\Ann(\N)=\Ann(\N)$, $\forall x,y\in
\N$. That implies $[[x,y],z]\in \Ann(\N)\N = \{0\}, \forall x,y\in
\N$, i.e. $\g (\N)$ is 2-step nilpotent.
\end{proof}

It results that the classification of quadratic 2-step nilpotent Lie
algebras (\cite{Ova07}, \cite{Duo10}) is closely related to the
classification of symmetric Novikov algebras. For instance, by
\cite{DPU10}, every quadratic 2-step nilpotent Lie algebra of
dimension $\leq5$ is Abelian so that every symmetric Novikov algebra
of dimension $\leq5$ is commutative. In general, in the case of
dimension $\geq 6$, there exists a non-commutative symmetric Novikov
algebra by Proposition \ref{4.4.9} below.

\begin{defn}
  Let $\N$ be a Novikov algebra. We say that $\N $ is an {\em
    anti-commutative Novikov algebra} if \[xy = -yx, \forall x,y \in
  \N.\]
\end{defn}
\begin{prop}\label{4.4.9}
  Let $\N$ be a Novikov algebra. Then $\N$ is anti-commutative if, and
  only if, $\N$ is a 2-step nilpotent Lie algebra with the Lie bracket
  defined by $[x, y]:= xy, \forall x, y\in \N$.
\end{prop}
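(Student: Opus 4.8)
The plan is to treat the two implications separately, noting at the outset that the backward direction is essentially tautological: if the product of $\N$ \emph{is} a Lie bracket (i.e. $\N$ is a Lie algebra under $[x,y]:=xy$), then by antisymmetry of any Lie bracket one has $xy=[x,y]=-[y,x]=-yx$, so $\N$ is anti-commutative. No use of the Novikov identities is needed here, and all the real work lies in the forward direction.

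So assume $\N$ is anti-commutative. First I would record that setting $x=y$ in $xy=-yx$ gives $2\,xx=0$, hence $xx=0$ for every $x$ (we work over $\CC$); in particular the product is already alternating, which is the first ingredient needed to view it as a bracket. The heart of the argument is then to show that every triple product vanishes, i.e. $(xy)z=0$ for all $x,y,z\in\N$. The idea is to chase $(xy)z$ around a closed loop, alternately applying anti-commutativity inside the left factor and the Novikov identity (\ref{nov-2}) in the form $(ab)c=(ac)b$, until the expression returns to $-(xy)z$. Concretely, anti-commutativity gives $(xy)z=-(yx)z$, then (\ref{nov-2}) gives $-(yx)z=-(yz)x$, anti-commutativity again produces $(zy)x$, (\ref{nov-2}) produces $(zx)y$, anti-commutativity gives $-(xz)y$, and a final application of (\ref{nov-2}) returns $-(xy)z$. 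Comparing the two ends yields $2(xy)z=0$, whence $(xy)z=0$.

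With $(xy)z=0$ in hand the remaining assertions are immediate. Since this holds for all $x,y,z$, relabeling together with anti-commutativity shows that every iterated product of three elements vanishes. The product is therefore antisymmetric and all triple products are zero, so the Jacobi identity $(xy)z+(yz)x+(zx)y=0$ holds trivially and $\N$ becomes a Lie algebra under $[x,y]:=xy$; moreover $[[x,y],z]=(xy)z=0$ shows this Lie algebra is $2$-step nilpotent, completing the forward direction. I expect the only genuine obstacle to be keeping the signs straight in the loop of the middle paragraph: one must interleave (\ref{nov-2}) and anti-commutativity in exactly the right order so that the chain closes on $-(xy)z$ rather than on $+(xy)z$ (which would give no information). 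Once that alternation is fixed, the computation is self-contained and everything else is formal.
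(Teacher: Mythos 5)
Your proposal is correct, and its overall shape (reduce everything to showing $(xy)z=0$, then observe that antisymmetry plus vanishing triple products gives the Jacobi identity and $2$-step nilpotency for free) matches the paper's. The one genuine difference is \emph{which} Novikov identity does the work. The paper first invokes the standard fact that the commutator of a left-symmetric algebra is a Lie bracket, notes that under anti-commutativity $[x,y]=2xy$ so the product itself is a Lie bracket, and then extracts $(xy)z=0$ from the left-symmetry identity (\ref{nov-1}) combined with the Jacobi identity. You instead run a six-step sign chase using only right-commutativity (\ref{nov-2}) and anti-commutativity, $(xy)z=-(yx)z=-(yz)x=(zy)x=(zx)y=-(xz)y=-(xy)z$, which closes on $-(xy)z$ and forces $(xy)z=0$ in characteristic $\neq 2$. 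Your route is more self-contained: it never needs the sub-adjacent Lie algebra fact, and the Jacobi identity comes out as a trivial consequence rather than being an input. The paper's route is shorter on the page because it leans on machinery already established in Definition \ref{defn4.1}. Your handling of the converse (antisymmetry of any Lie bracket immediately gives $xy=-yx$) is also fine and consistent with how the statement quantifies over Novikov algebras. The only nitpick: you should make explicit, as you partly do, that $xx=0$ (needed for the product to be a genuine Lie bracket, not just an anti-commutative one) follows from $2xx=0$ over $\CC$.
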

\begin{proof}
  Assume that $\N$ is a Novikov algebra such that $xy = -yx, \forall
  x,y \in \N$. Since the commutator $[x,y] = xy -yx = 2xy$ is a Lie
  bracket, so the product $(x,y) \mapsto xy$ is also a Lie
  bracket. The identity (\ref{nov-1}) of Definition \ref{defn4.1} is equivalent to $(xy)z = 0, \forall x,y,z
  \in \N$. It shows that $\N$ is a 2-step nilpotent Lie algebra.

  Conversely, if $\N$ is a 2-step nilpotent Lie algebra then we define
  the product $xy:= [x,y], \forall x,y \in \N$. It is obvious that the
  identities (\ref{nov-1}) and (\ref{nov-2}) of Definition \ref{defn4.1} are satisfied since $(xy)z = 0, \forall x,y,z \in \N$.
\end{proof}

By the above Proposition, the study of anti-commutative Novikov
algebras is reduced to the study of 2-step nilpotent Lie
algebras. Moreover, the formula in this proposition also can be used
to define a 2-step nilpotent symmetric Novikov algebra from a
quadratic 2-step nilpotent Lie algebra. Recall that there exists only
one non-Abelian quadratic 2-step nilpotent Lie algebra of dimension 6
up to isomorphism \cite{DPU10} then there is only one anti-commutative
symmetric Novikov algebra of dimension 6 up to isomorphism. However,
there exist non-commutative symmetric Novikov algebras that are not
2-step nilpotent \cite{AB10}. For instance, let $\N = \g_6\oplusp \CC
c$, where $\g_6$ is the 6-dimensional elementary quadratic Lie algebra
\cite{DPU10} and $\CC c$ is a pseudo-Euclidean simple Jordan algebra
with the bilinear form $B_c(c,c)=1$ (obviously, this algebra is a
symmetric Novikov algebra and commutative). Then $\N $ become a
symmetric Novikov algebra with the bilinear form defined by $B =
B_{\g_6} + B_c$, where $B_{\g_6}$ is the bilinear form on $\g_6$. We
can extend this example for the case $\N = \g\oplusp \J$, where $\g$
is a quadratic 2-step nilpotent Lie algebra and $\J$ is a symmetric
Jordan-Novikov algebra defined below. However, these algebras are
decomposable. An example in the indecomposable case of dimension 7 can
be found in the last part of this paper.

\begin{prop}\label{prop4.13}
  Let $\N$ be a Novikov algebra. Assume that its product is
  commutative, that means $xy = yx, \forall x,y \in \N$. Then the
  identities (\ref{nov-1}) and (\ref{nov-2}) of Definition \ref{defn4.1} are equivalent to the only condition:
  \[(x,y,z) = (xy)z - x(yz) = 0, \forall x,y,z \in \N.
\]

It means that $\N$ is an associative algebra. Moreover, $\N$ is also
a Jordan algebra. In this case, we say that $\N$ is a {\bf
  Jordan-Novikov algebra}. In addition, if $\N$ has a non-degenerate
associative symmetric bilinear form, then we say that $\N$ is a {\em
  symmetric} Jordan-Novikov algebra.
\end{prop}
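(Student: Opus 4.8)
The plan is to establish the asserted equivalence in both directions and then to invoke the already-recorded fact that a commutative associative algebra is automatically a Jordan algebra. Throughout I would use the hypothesis $xy=yx$ freely to rewrite each Novikov axiom as a statement about the products alone, which keeps the bookkeeping transparent.

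First I would dispose of the direction saying that associativity implies the two Novikov identities. If $(x,y,z)=0$ for all $x,y,z\in\N$, then (\ref{nov-1}), which reads $(x,y,z)=(y,x,z)$, becomes $0=0$ and holds trivially. For (\ref{nov-2}) I would chain associativity with commutativity: $(xy)z = x(yz) = x(zy) = (xz)y$, which is exactly the right-commutativity axiom. This direction is routine and uses nothing beyond the two defining properties.

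The substance is the converse: commutativity together with (\ref{nov-1}) and (\ref{nov-2}) forces $(x,y,z)=0$. The key observation is that commutativity collapses each axiom to a single product identity. Since $xy=yx$, the terms $(xy)z$ and $(yx)z$ appearing in (\ref{nov-1}) coincide and cancel, so (\ref{nov-1}) reduces to $x(yz)=y(xz)$. Likewise, applying commutativity to the right-hand side of (\ref{nov-2}) gives $(xy)z=(xz)y=y(xz)$. Combining these two reduced identities yields $(xy)z = y(xz) = x(yz)$, that is $(x,y,z)=0$, which is associativity.

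Finally, once $\N$ is known to be both commutative and associative, it is a Jordan algebra by the observation recorded right after the definition of a Jordan algebra (any commutative algebra with associative product satisfies the Jordan identity), and the remaining clauses about the symmetric case are then mere definitions. The only point to watch is in the converse: one must apply commutativity to the correct factor in each axiom so that the two reduced identities share the common intermediate term $y(xz)$ and close up to give the vanishing of the associator. No deeper computation is needed once this matching of terms is arranged, so I expect this statement to admit a short proof.
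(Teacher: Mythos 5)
Your proof is correct and follows essentially the same route as the paper: the easy direction is the same chain $(xy)z = x(yz) = x(zy) = (xz)y$, and for the hard direction the paper simply cites part (1) of Proposition \ref{4.4.5} (whose proof is omitted as ``straightforward''), which amounts to exactly the computation you spell out --- reducing (\ref{nov-1}) to $x(yz)=y(xz)$ and (\ref{nov-2}) to $(xy)z=y(xz)$ and matching the common term. Your version has the small merit of being self-contained where the paper defers to an omitted argument.
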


\begin{proof}
  Assume $\N$ is a commutative Novikov algebra. By (1) of Proposition
  \ref{4.4.5}, the product is also associative. Conversely, if
  one has the condition:
\[(xy)z - x(yz) = 0, \forall x,y,z \in \N
\]
then (\ref{nov-1}) identifies with zero and (\ref{nov-2}) is obtained by $(yx)z=y(xz),\forall x,y,z \in \N$.
\end{proof}

\begin{ex}\label{ex4.14}
  Recall the pseudo-Euclidean Jordan algebra $\J$ in Example \ref{2.9}
  spanned by $\{x,x_1,y_1\}$, where the commutative product on $\J$ is
  defined by:
\[ y_1^2=y_1, y_1x=x, y_1x_1 = x_1, x^2= x_1.
\]
It is easy to check that this product is also associative. Therefore,
$\J$ is a symmetric Jordan-Novikov algebra with the bilinear form $B$
defined $B(x_1,y_1) = B(x,x)=1$ and the other zero.
\end{ex}
\begin{ex} Pseudo-Euclidean 2-step nilpotent Jordan algebras are
  symmetric Jordan-Novikov algebras.
\end{ex}
\begin{rem}\hfill\label{rem4.16}
\begin{enumerate}
\item By Lemma \ref{4.4.6}, if the symmetric Novikov algebra $\N$ has
  $\Ann(\N) = \{0\}$ then $[x,y] = xy -yx = 0,\forall x,y\in\N$. It
  implies that $\N$ is commutative and then $\N$ is a symmetric
  Jordan-Novikov algebra.
    \item If the product on $\N$ is associative then it may not be
       commutative. An example can be found in the next part.
     \item Let $\N$ be a Novikov algebra with unit element $e$; that
       is $ex=xe=x,\forall x\in \N$. Then $xy = (ex)y = (ey)x = yx,\
       \forall x,y\in\N$ and therefore $\N$ is a Jordan-Novikov
       algebra.
     \item The algebra given in Example \ref{ex4.14} is also a
       Frobenius algebra, that is, a finite-dimensional associative
       algebra with unit element equipped with a non-degenerate
       associative bilinear form.
\end{enumerate}
\end{rem}

A well-known result is that every associative algebra $\N$ is
Lie-admissible and Jordan-admissible; that is, if $(x,y)\mapsto xy$ is
the product of $\N$ then the products \[ [x,y] = xy -yx \qquad \text{
  and } \] \[[x, y]_+ := xy+yx\] define respectively a Lie algebra
structure and a Jordan algebra structure on $\N$. There exist algebras
satisfying each one of these properties. For example, the
non-commutative Jordan algebras are Jordan-admissible \cite{Sch55} or
the Novikov algebras are Lie-admissible. However, remark that a
Novikov algebra may not be Jordan-admissible by the following example:

\begin{ex} \label{explus} Consider the 2-dimensional algebra $\N = \CC
  a\oplus \CC b$ such that $ba=-a$, zero otherwise. Then $\N$ is a
  Novikov algebra \cite{BMH02}. One has $[a,b] = a$ and $[a, b]_+ =
  -a$. For $x\in\N$, denote by $\ad_x^+$ the endomorphism of $\N$
  defined by $\ad_x^+(y)=[x,y]_+ = [y,x]_+,\ \forall y\in\N$. It is
  easy to see that
\[\ad_a^+ = \begin{pmatrix} 0 & -1 \\ 0 & 0 \end{pmatrix}\text{ and } \ \ad_b^+=\begin{pmatrix} -1 & 0 \\ 0 & 0 \end{pmatrix}.\]
Let $x=\lambda a + \mu b\in\N,\ \lambda,\mu\in\CC$, one has $[x,x]_+ =
-2\lambda\mu a$ and therefore:
\[\ad_x^+ = \begin{pmatrix} -\mu & -\lambda \\ 0 & 0 \end{pmatrix}\text{ and } \ \ad^+_{[x,x]_+}=\begin{pmatrix} 0 & 2\lambda\mu \\ 0 & 0 \end{pmatrix}.\]
Since $[\ad_x^+,\ad^+_{[x,x]_+}]\neq 0$ if $\lambda,\mu\neq 0$, then
$\N$ is not Jordan-admissible.
\end{ex}
We will give a condition for a Novikov algebra to be
Jordan-admissible as follows:

\begin{prop}\label{prop4.17}
Let $\N$ be a Novikov algebra satisfying
\begin{equation} \label{V}
(x,x,x) = 0,\forall x\in\N.
\end{equation}
Define on $\N$ the product $[x,y]_+ = xy+yx, \forall x,y\in \N$ then
$\N$ is a Jordan algebra with this product. In this case, it is called
the {\bf associated Jordan algebra} of $\N$ and denoted by $\J(\N)$.
\end{prop}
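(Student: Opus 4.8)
The plan is to verify the Jordan identity (\ref{I}) for the commutative product $[\cdot,\cdot]_+$; commutativity itself is immediate from the definition, so only the identity $(u\cdot v)\cdot u^2 = u\cdot(v\cdot u^2)$ (with all products the new ones) remains. Writing $a\cdot b := [a,b]_+ = ab+ba$ and noting $a\cdot a = 2a^2$, I would expand $(a\cdot b)\cdot(a\cdot a)$ and $a\cdot(b\cdot(a\cdot a))$ in terms of the original Novikov product. After cancelling the common numerical factor, the Jordan identity becomes
\[
(ab)a^2 + (ba)a^2 + a^2(ab) + a^2(ba) = a(ba^2) + a(a^2b) + (ba^2)a + (a^2b)a .
\]

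Next I would convert the difference of the two sides into a sum of associators by pairing the eight monomials, so as to bring the two Novikov axioms to bear. Regrouping is designed to produce the combination $(a,b,a^2) + (a,a^2,b) - (a^2,a,b) - (a^2,b,a)$, where en route the relation $(ba)a^2=(ba^2)a$ coming from (\ref{nov-2}) matches several terms against those on the right. The point of this arrangement is that the left-symmetry (\ref{nov-1}), namely $(x,y,z)=(y,x,z)$, immediately kills the middle pair since $(a,a^2,b)=(a^2,a,b)$; the whole identity therefore collapses to the single equation $(a,b,a^2)=(a^2,b,a)$.

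To finish I would expand these two associators and apply (\ref{nov-2}) in the form $(xy)z=(xz)y$: the leading terms satisfy $(ab)a^2=(aa^2)b$ and $(a^2b)a=(a^2a)b$, so they agree once $aa^2=a^2a$. This leaves only $a(ba^2)=a^2(ba)$, which I would establish by applying (\ref{nov-1}) separately to the triples $(a,b,a^2)$ and $(a^2,b,a)$ and then simplifying the resulting right-hand sides with (\ref{nov-2}); each should collapse to $a^3b-(ba^2)a+ba^3$. The one and only place the hypothesis (\ref{V}) is used is precisely the identification $aa^2=a^2a=:a^3$, which is exactly the assumption $(a,a,a)=0$.

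I expect the main obstacle to be organizational rather than conceptual: keeping the roughly eight quartic monomials straight and selecting the pairing that renders the left-symmetry cancellation transparent. There is no deep structural input beyond the two Novikov axioms and the cube condition, and indeed the fact that (\ref{V}) enters only through $aa^2=a^2a$ is what makes the reduction work. The remaining check is a short, repeated application of (\ref{nov-1}) and (\ref{nov-2}), so once the identity has been distilled to $(a,b,a^2)=(a^2,b,a)$ the proof should close quickly.
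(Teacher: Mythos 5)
Your proposal is correct and follows essentially the same route as the paper: expand the Jordan identity for $[\cdot,\cdot]_+$ into eight quartic monomials, cancel terms using (\ref{nov-2}), recast the remainder as associators so that left-symmetry (\ref{nov-1}) applies, and invoke (\ref{V}) only through $xx^2 = x^2x$. The paper's bookkeeping lands on the final identity $(y,x^2,x) = (y,x,x^2)$ rather than your $(a,b,a^2) = (a^2,b,a)$, but both are routine consequences of the same two axioms plus the cube condition, so the difference is purely organizational.
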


\begin{proof} Let
  $x,y\in\N$ then we can write $x^3 = x^2x = xx^2$. One has
\[[[x, y]_+,[x, x]_+]_+ = [xy+yx,2x^2]_+
\]
\[= 2(xy)x^2 + 2(yx)x^2 + 2x^2(xy) + 2x^2(yx)\]
\[= 2x^3y + 2(yx)x^2 + 2x^2(xy) + 2x^2(yx)\]
and
\[[x, [y, [x, x]_+]_+]_+ = [x,2yx^2+2x^2y]_+\]
\[=2x(yx^2)+2x(x^2y)+2(yx^2)x+2(x^2y)x\]
\[=2x(yx^2)+2x(x^2y) + 2(yx)x^2 + 2x^3y.\] Therefore, $[[x, y]_+,[x,
x]_+]_+ = [x, [y, [x, x]_+]_+]_+$ if and only if $x^2(xy) + x^2(yx) =
x(yx^2)+x(x^2y)$. Remark that we have following identities:
\[x^2(xy) = x^3y -(x^2,x,y) = x^3y -(x,x^2,y),\]
\[x^2(yx) = (x^2y)x - (x^2,y,x) =  x^3y -(y,x^2,x),\]
\[x(yx^2) = (xy)x^2 -(x,y,x^2) = x^3y-(y,x,x^2),\]
\[x(x^2y) = x^3y - (x,x^2,y).\] It means that we have only to check
the formula $(y,x^2,x) = (y,x,x^2)$. It is clear by the identities
(\ref{nov-1}) and (\ref{V}). Then we can conclude that $\J(\N)$ is a
Jordan algebra.
\end{proof}

\begin{cor}
  If $(\N,B)$ is a symmetric Novikov algebra satisfying (\ref{V}) then
  $(\J(\N),B)$ is a pseudo-Euclidean Jordan algebra.
\end{cor}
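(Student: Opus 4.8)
The plan is to observe that the only genuinely new thing requiring verification is the associativity of $B$ with respect to the Jordan product $[\cdot,\cdot]_+$; everything else is already available. First I would invoke Proposition \ref{prop4.17}: since $\N$ is a Novikov algebra satisfying (\ref{V}), the symmetrized product $[x,y]_+ = xy+yx$ makes the underlying vector space into a Jordan algebra, namely $\J(\N)$. The form $B$ is symmetric and non-degenerate by the very definition of a symmetric Novikov algebra, so these two of the three requirements for $(\J(\N),B)$ to be pseudo-Euclidean come for free, and the role of (\ref{V}) is precisely to guarantee that $\J(\N)$ is a Jordan algebra in the first place.

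It then remains to check that $B$ is associative for the Jordan product, i.e. $B([x,y]_+,z) = B(x,[y,z]_+)$ for all $x,y,z\in\N$. I would expand both sides using only the associativity of $B$ for the Novikov product, $B(uv,w)=B(u,vw)$, together with the symmetry of $B$. On the left,
\[B([x,y]_+,z) = B(xy,z) + B(yx,z) = B(x,yz) + B(y,xz).\]
On the right, after rewriting $B(x,zy) = B(zy,x) = B(z,yx) = B(yx,z) = B(y,xz)$, one obtains
\[B(x,[y,z]_+) = B(x,yz) + B(x,zy) = B(x,yz) + B(y,xz).\]
The two expressions coincide, which establishes the associativity of $B$ for the Jordan product.

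Combining these observations shows that $(\J(\N),B)$ is a Jordan algebra endowed with a symmetric, non-degenerate, associative bilinear form, hence a pseudo-Euclidean Jordan algebra. I do not anticipate any real obstacle: the Jordan structure is handed to us directly by Proposition \ref{prop4.17}, and the associativity of $B$ transfers from the Novikov product to its symmetrization by a short symmetry argument. The only step needing slight care is the middle manipulation $B(x,zy)=B(y,xz)$, which uses the Novikov-associativity of $B$ twice along with its symmetry; beyond that, the verification is purely formal.
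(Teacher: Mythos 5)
Your proof is correct and is essentially the paper's own argument: the paper's one-line verification $B([x,y]_+,z)=B(xy+yx,z)=B(x,yz+zy)=B(x,[y,z]_+)$ is exactly the symmetry-plus-associativity manipulation you spell out, with the Jordan structure supplied by Proposition \ref{prop4.17}. No issues.
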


\begin{proof}
  It is obvious since $B([x,y]_+,z) = B(xy+yx,z) = B(x,yz+zy) =
  B(x,[y,z]_+)$, $\forall x,y,z\in \J(\N)$.
\end{proof}

\begin{rem}\label{remplus} Obviously, Jordan-Novikov algebras are
  power-associative but in general this is not true for Novikov
  algebras. Indeed, if Novikov algebras were power-associative then
  they would satisfy (\ref{V}). That would imply they were
  Jordan-admissible. But, that is a contradiction as shown in Example
  \ref{explus}.
\end{rem}

\begin{lem}\label{4.4.16}
Let $\N$ be a Novikov algebra then $[x,yz]_+ = [y,xz]_+,\ \forall x,y,z\in\N$.
\end{lem}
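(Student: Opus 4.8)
The plan is to prove the identity $[x,yz]_+ = [y,xz]_+$ directly by expanding both sides in terms of the Novikov product and reducing everything to the defining axioms (\ref{nov-1}) and (\ref{nov-2}). First I would write out the left-hand side as $[x,yz]_+ = x(yz) + (yz)x$ and the right-hand side as $[y,xz]_+ = y(xz) + (xz)y$, so that the claim becomes the equality
\[
x(yz) + (yz)x = y(xz) + (xz)y.
\]
The natural way to attack this is to group the terms involving left-multiplication and the terms involving right-multiplication separately and match them using the two structural facts recalled in the excerpt: the left-symmetric identity (\ref{nov-1}), which says $(x,y,z)=(y,x,z)$, i.e. $x(yz) - (xy)z = y(xz) - (yx)z$, and the right-commutativity (\ref{nov-2}), which says $(xz)y = (xy)z$.

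Next I would use (\ref{nov-2}) to rewrite the mixed right-multiplication terms. Since $(xz)y = (xy)z$ and $(yz)x = (yx)z$ by (\ref{nov-2}), the right-hand term $(xz)y$ on one side and $(yz)x$ on the other can both be pulled back to the form $(\cdot\,\cdot)z$. Concretely, the target equation becomes
\[
x(yz) + (yx)z = y(xz) + (xy)z,
\]
after substituting $(yz)x = (yx)z$ and $(xz)y = (xy)z$. This last equation is precisely a rearrangement of the left-symmetric identity (\ref{nov-1}): moving terms gives $x(yz) - y(xz) = (xy)z - (yx)z$, which is exactly $(x,y,z) = (y,x,z)$ read through the associator definition. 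Hence the identity follows.

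The step I expect to require the most care is the bookkeeping in applying (\ref{nov-2}): one must correctly identify $(yz)x$ with $(yx)z$ and $(xz)y$ with $(xy)z$, paying attention to which factor plays the role of the repeated right-multiplicand, since (\ref{nov-2}) only permits interchanging the two rightmost factors. Once these substitutions are made consistently, the remaining equality is a direct transcription of the left-symmetry axiom and involves no further manipulation. I do not anticipate any genuine obstacle here — the result is a purely formal consequence of the two Novikov axioms, and no appeal to symmetry, non-degeneracy of $B$, or associativity is needed.
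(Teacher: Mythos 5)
Your proof is correct and is essentially the paper's own argument: both use right-commutativity (\ref{nov-2}) to rewrite $(yz)x=(yx)z$ and $(xz)y=(xy)z$, and then observe that the resulting equality is just the left-symmetry identity (\ref{nov-1}); you merely run the computation from the target equation back to the axioms rather than forward.
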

\begin{proof}
  By (III), for all $x,y,z\in\N$ one has $(xy)z + y(xz) =
  x(yz)+(yx)z$. Combine with (IV), we obtain:
\[(xz)y + y(xz) = x(yz)+(yz)x.\] That means $[x,yz]_+ = [y,xz]_+,\
\forall x,y,z\in\N$.
\end{proof}

\begin{prop}\label{4.4.17}
Let $(\N,B)$ be a symmetric Novikov algebra then following identities:
\begin{enumerate}
  \item $x[y,z] = [y,z]x = 0$. Consequently, $[x,yz]_+ = [x,zy]_+$.
  \item $[x,y]_+z = [x,z]_+y$,
    \item  $[x,yz]_+= [xy, z]_+= x[y, z]_+ = [x,y]_+z$,
    \item $x[y,z]_+ = [y,z]_+x.$
\end{enumerate}
hold for all $x,y,z\in \N$.
\end{prop}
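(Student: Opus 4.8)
The plan is to reduce the whole proposition to two structural facts and then read off the four identities almost mechanically. The first fact is that every commutator annihilates $\N$: by Corollary \ref{cor4.10} together with part 2(iii) of Proposition \ref{4.4.5} one has $[x,y]\in\Ann(\N)$, so $\N\,[x,y]=[x,y]\,\N=\{0\}$ for all $x,y$. This already settles identity (1), since $x[y,z]=[y,z]x=0$; and the stated consequence $[x,yz]_+=[x,zy]_+$ follows at once, because the difference of its two sides is exactly $x[y,z]+[y,z]x$.

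The second and decisive fact is that every product is central, i.e. $\N\N\subseteq C(\N)$. I would prove this from the identification $C(\N)=[\g(\N),\g(\N)]^\bot$ in Proposition \ref{4.4.5} 2(i): for any $u,v,x,y$, associativity of $B$ gives $B(uv,[x,y])=B(u,v[x,y])=0$, since $v[x,y]=0$ by the first fact; hence $uv$ is orthogonal to $[\g(\N),\g(\N)]$, that is $uv\in C(\N)$. Pairing this with the cyclic relation $(xy)z=(yz)x$ — which I would get in one line from the Novikov axioms (\ref{nov-1}), (\ref{nov-2}) and Lemma \ref{4.4.6} — yields associativity: $(x,y,z)=(xy)z-x(yz)=(yz)x-x(yz)=0$, the last equality because the central element $yz$ commutes with $x$.

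With associativity and centrality of products established, the rest is bookkeeping, each identity being reduced to the single expression $2(xy)z=2x(yz)$. For (2), $[x,y]_+z=(xy)z+(yx)z=2(xy)z$ using $(yx)z=(xy)z$ (a restatement of $[x,y]z=0$), and likewise $[x,z]_+y=2(xz)y$, so (2) is precisely axiom (\ref{nov-2}). For (3), I check that all four expressions collapse to $2(xy)z$: $[xy,z]_+=(xy)z+z(xy)=2(xy)z$ since $xy$ is central, $[x,yz]_+=x(yz)+(yz)x=2x(yz)$ since $yz$ is central, $x[y,z]_+=x(yz)+x(zy)=2x(yz)$ since $x[y,z]=0$, and $[x,y]_+z=2(xy)z$ as above, with associativity identifying $2(xy)z$ and $2x(yz)$. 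Finally (4) is immediate: $[y,z]_+=yz+zy$ is a sum of central elements, hence central, so $x[y,z]_+=[y,z]_+x$.

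The main obstacle is the centrality statement $\N\N\subseteq C(\N)$ (equivalently, the associativity of $\N$); this is the only step that genuinely uses the metric structure, through the associativity and symmetry of $B$. Everything afterwards follows formally from $[x,y]\in\Ann(\N)$ and the two Novikov axioms, so I would isolate the centrality/associativity argument first and then treat the four identities uniformly via their common value $2(xy)z$.
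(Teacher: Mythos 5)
Your proof is correct, but it runs in the opposite logical direction from the paper's. The paper establishes the four identities first, by direct manipulations of the invariant form together with Lemma \ref{4.4.16} (for instance, (2) comes from the chain $B([x,y]_+z,t)=B(y,[x,zt]_+)=B(y,[z,xt]_+)=B([z,y]_+x,t)$, and (3) from two similar chains), and only afterwards deduces $\N\N\subset C(\N)$ (Lemma \ref{lem4.23}) and associativity of the product (Proposition \ref{prop4.22}); indeed, the paper's proof of Lemma \ref{lem4.23} explicitly invokes part (4) of this very proposition. You instead prove centrality of products and associativity up front and then read off all four identities from the common value $2(xy)z=2x(yz)$. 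Crucially, your argument is not circular: your derivation of $\N\N\subseteq C(\N)$ uses only $B(uv,[x,y])=B(u,v[x,y])=0$, which rests on Lemma \ref{4.4.6} and parts 2(i), 2(iii) of Proposition \ref{4.4.5}, all available before this proposition; and the cyclic identity $(xy)z=(xz)y=(zx)y=(zy)x=(yz)x$ you need is a correct one-line consequence of (\ref{nov-2}) and $L_{[x,y]}=0$. What your route buys is uniformity --- every identity collapses to the same equation, and Lemma \ref{lem4.23} and Proposition \ref{prop4.22} become immediate corollaries rather than downstream results proved from this proposition; what the paper's route buys is that the identities are obtained by purely formal $B$-duality without ever asserting associativity, which is then presented separately as a stronger, later theorem. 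Both arguments are complete.
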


\begin{proof} Let $x,y,z,t$ be elements $\in \N$,
\begin{enumerate}
\item By Proposition \ref{4.4.5} and Lemma \ref{4.4.6}, $L_{[y,z]} = 0$
  so one has (1).
\item $B([x,y]_+z,t) = B(y,[x,zt]_+) = B(y,[z,xt]_+) =
  B([z, y]_+x,t)$. Therefore, $[x,y]_+z = [z, y]_+x$. Since the
  product $[.,.]_+$ is commutative then $[y,x]_+z = [y,z]_+x$.
    \item By (1) and Lemma \ref{4.4.16}, $[x,yz]_+ = [x,zy]_+ =
       [z,xy]_+ = [xy,z]_+$.

       Since $B$ is associative with respect to the product in $\N$
       and in $\J(\N)$ then
    \[B(t,[xy,z]_+)=B([t,xy]_+,z) = B([t,yx]_+,z) =
     B([y,tx]_+,z) = B(tx,[y,z]_+) = B(t,x[y,z]_+).\] It implies
     that $[xy,z]_+ = x[y,z]_+$. Similarly,
    \[B([x,y]_+z,t) = B(x,[y,zt]_+) = B(x,[y,tz]_+) =
     B(x,[t,yz]_+) = B([x,yz]_+,t).\] So $[x,y]_+z =
     [x,yz]_+$.
    \item By (2) and (3), $x[y, z]_+ = [x,y]_+z = [y,x]_+z = [y,z]_+x$.
\end{enumerate}

\end{proof}

\begin{cor} Let $(\N,B)$ be a symmetric Novikov algebra then
  $(\J(\N),B)$ is a symmetric Jordan-Novikov algebra.
\end{cor}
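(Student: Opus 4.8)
The plan is to verify the three defining features of a symmetric Jordan-Novikov algebra for $(\J(\N),B)$: that the product $[x,y]_+ = xy+yx$ is commutative and associative (so that $\J(\N)$ is a Jordan-Novikov algebra by Proposition \ref{prop4.13}), and that $B$ remains a non-degenerate associative symmetric bilinear form for this new product. Commutativity is immediate from the definition, and non-degeneracy and symmetry of $B$ are inherited verbatim, so the only real content is the associativity of $[\cdot,\cdot]_+$ together with the associativity of $B$ relative to it.

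For the associativity of $B$ I would simply reuse the short computation already appearing in the corollary following Proposition \ref{prop4.17}: since $B$ is symmetric and associative for the Novikov product, $B([x,y]_+,z) = B(xy+yx,z) = B(x,yz) + B(x,zy) = B(x,[y,z]_+)$. So everything reduces to showing $[\cdot,\cdot]_+$ is associative.

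The key step is to feed parts (3) and (4) of Proposition \ref{4.4.17} into both triple products. Expanding in the Novikov product, $[[x,y]_+,z]_+ = [x,y]_+z + z[x,y]_+$; part (4) gives $z[x,y]_+ = [x,y]_+z$, so this equals $2[x,y]_+z$, and part (3) rewrites $[x,y]_+z = x[y,z]_+$. Symmetrically, $[x,[y,z]_+]_+ = x[y,z]_+ + [y,z]_+x$, and part (4) again gives $[y,z]_+x = x[y,z]_+$, so this equals $2x[y,z]_+$. Both sides therefore collapse to $2x[y,z]_+$, proving associativity of $[\cdot,\cdot]_+$.

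Having shown $\J(\N)$ is commutative and associative, Proposition \ref{prop4.13} identifies it as a Jordan-Novikov algebra, and combined with the associative scalar product $B$ this yields that $(\J(\N),B)$ is symmetric Jordan-Novikov. I expect essentially no obstacle here: the substantive work is entirely front-loaded into Proposition \ref{4.4.17}, whose parts (3) and (4) are precisely the mixed-associativity and commutation identities that make the two triple products coincide.
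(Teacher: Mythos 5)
Your argument is correct and follows essentially the same route as the paper: the paper's proof also reduces the associativity of $[\cdot,\cdot]_+$ to a one-line chain of identities from Proposition \ref{4.4.17} (it writes $[[x,y]_+,z]_+ = [2xy,z]_+ = 2[z,xy]_+ = 2[x,yz]_+ = [x,[y,z]_+]_+$, whereas you collapse both sides to $2x[y,z]_+$ via parts (3) and (4) — a cosmetic difference). The compatibility of $B$ with the new product, which you include, is handled in the paper by the separate corollary following Proposition \ref{prop4.17} with exactly the computation you cite.
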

\begin{proof}
  We will show that $[[x,y]_+, z]_+ = [x,[y,z]_+]_+,\ \forall
  x,y,z\in \N$. Indeed, By Proposition \ref{4.4.17} one has
\[[[x,y]_+, z]_+ = [2xy,z]_+ = 2[z,xy]_+ = 2[x,yz]_+ =
[x,[y,z]_+]_+.\] Hence, the product $[.,.]_+$ are both commutative
and associative. That means $\J(\N)$ be a Jordan-Novikov algebra.
\end{proof}

It results that for symmetric Novikov algebras the condition (\ref{V}) is
not necessary. Moreover, we have the much stronger fact as follows:
\begin{prop}\label{prop4.22}
  Let $\N$ be a symmetric Novikov algebra then the product on $\N$ is
  associative, that is $x(yz) = (xy)z,\forall x,y,z\in\N$.
\end{prop}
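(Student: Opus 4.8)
The plan is to show that the associator $(x,y,z) = (xy)z - x(yz)$ vanishes identically, exploiting the interplay between the two Novikov identities, the annihilator structure forced by the scalar product, and the associativity and symmetry of $B$.

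First I would establish that for a symmetric Novikov algebra the associator is \emph{totally symmetric} in its three arguments. The left-symmetry identity (\ref{nov-1}) already gives $(x,y,z) = (y,x,z)$. For the remaining transposition I compute
\[(x,y,z) - (x,z,y) = (xy)z - (xz)y - x(yz) + x(zy) = -x[y,z],\]
where the bracketed difference $(xy)z - (xz)y$ vanishes by the Novikov identity (\ref{nov-2}). Since $x[y,z] = 0$ by Proposition \ref{4.4.17}(1), this yields $(x,y,z) = (x,z,y)$. As the transpositions $(1\,2)$ and $(2\,3)$ generate $\Sk_3$, the associator is symmetric under every permutation of $x,y,z$.

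Next I would turn the associativity and symmetry of $B$ into a duality identity. Expanding and applying $B(ab,c) = B(a,bc)$ twice,
\[B((x,y,z),t) = B(x, y(zt)) - B(x,(yz)t) = -B(x,(y,z,t)).\]
Setting $T(x,y,z,t) := B((x,y,z),t)$, the total symmetry of the associator makes $T$ symmetric in its first three slots, while the duality combined with $B(x,(y,z,t)) = B((y,z,t),x)$ gives the cyclic relation $T(x,y,z,t) = -T(y,z,t,x)$.

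The final and most delicate step is a purely combinatorial argument showing these two symmetries are incompatible unless $T \equiv 0$. Substituting $(x,y,z,t)\mapsto(t,x,y,z)$ into the cyclic relation yields $T(x,y,z,t) = -T(t,x,y,z)$, and rearranging the first three (symmetric) slots on the right-hand side converts this into antisymmetry in the last two slots, $T(x,y,z,t) = -T(x,y,t,z)$. Fixing the first argument, the resulting trilinear form is symmetric in its entries in slots $2,3$ and antisymmetric in its entries in slots $3,4$; chaining these relations around forces $T = -T$, hence $T = 0$ since we work over $\CC$. Non-degeneracy of $B$ then gives $(x,y,z) = 0$ for all $x,y,z$, i.e. $\N$ is associative. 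I expect the bookkeeping in this last step to be the main obstacle, since one must permute only those slots whose symmetry type has actually been established.
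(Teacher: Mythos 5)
Your proof is correct, and it takes a genuinely different route from the paper's. The paper first proves a lemma that $\N\N\subset C(\N)$ (whose proof leans on part (4) of Proposition \ref{4.4.17}, hence on the whole chain (2)--(4) of that proposition), and then concludes in three lines: $(yz)x=x(yz)$ because products are central, $(yz)x=(yx)z$ by the Novikov identity (\ref{nov-2}), and $(yx)z=(xy)z$ because commutators lie in $\Ann(\N)$. You instead avoid the centrality lemma entirely: from (\ref{nov-1}), (\ref{nov-2}) and the single fact $x[y,z]=0$ (which needs only $L_{[y,z]}=0$, i.e.\ Lemma \ref{4.4.6} together with $\LAnn(\N)=\RAnn(\N)$), you get total symmetry of the associator; associativity of $B$ then gives the duality $B((x,y,z),t)=-B(x,(y,z,t))$, and the classical ``symmetric in an overlapping pair, antisymmetric in another pair'' argument kills the quadrilinear form $T$. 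All steps check out: the transpositions $(1\,2)$ and $(2\,3)$ do generate $\Sk_3$, the cyclic sign relation $T(x,y,z,t)=-T(y,z,t,x)$ combined with symmetry of the first three slots does yield antisymmetry in slots $3,4$, and the resulting $2T=0$ forces $T=0$ over $\CC$. The paper's argument is shorter once its lemmas are in place and yields the structural fact $\N\N\subset C(\N)$ as a by-product; yours is more self-contained, uses a weaker input from Proposition \ref{4.4.17}, and produces as by-products the total symmetry of the associator and the anti-invariance of $B((\cdot,\cdot,\cdot),\cdot)$, which refine the identity $(\N,\N,\N)^\perp=N(\N)$ already recorded in Proposition \ref{4.4.5}.
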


\begin{proof}
Firstly, we need the lemma:
\begin{lem}\label{lem4.23}
Let $\N$ be a symmetric Novikov algebra then $\N\N\subset C(\N)$.
\end{lem}
\begin{proof}
  By Lemma \ref{4.4.6}, one has $[x,y] = xy-yx\in \Ann(\N)\subset
  C(\N),\forall x,y\in\N$. Also, by (4) of Proposition \ref{4.4.17},
  $x[y,z]_+ = [y,z]_+ x,\ \forall x,y,z\in\N$, that means $[x,y]_+ =
  xy+yx\in C(\N),\forall x,y\in C(\N)$. Hence, $xy\in C(\N),\forall
  x,y\in C(\N)$, i.e. $\N\N\subset C(\N)$.
\end{proof}
Let $x,y,z\in \N$. By above Lemma, one has $(yz)x = x(yz)$. Combine
with (\ref{nov-2}), $(yx)z=x(yz)$. On the other hand, $[x,y] \in \Ann(\N)$
implies $(yx)z=(xy)z$. Therefore, $(xy)z = x(yz)$.
\end{proof}

A general proof of the above Proposition can be found in \cite{AB10},
Lemma II.4 which holds for all symmetric left-symmetric superalgebras.

By Corollary \ref{cor4.10}, if $\N$ is a symmetric Novikov algebra
then $\g(\N)$ is 2-step nilpotent. However, $\J(\N)$ is not
necessarily 2-step nilpotent, for example the one-dimensional Novikov
algebra $\CC c$ with $c^2 = c$ and $B(c,c) = 1$. If $\N$ is a
symmetric 2-step nilpotent Novikov algebra then $(xy)z = 0, \forall
x,y,z\in\N$. So $[[x,y]_+,z]_+ = 0,\ \forall x,y,z\in\N$. That implies
$\J(\N)$ is also a 2-step nilpotent Jordan algebra. The converse is
also true.

\begin{prop}
  Let $\N$ be a symmetric Novikov algebra. If $\J(\N)$ is a 2-step
  nilpotent Jordan algebra then $\N$ is a 2-step nilpotent Novikov
  algebra.
\end{prop}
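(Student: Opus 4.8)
The plan is to unwind the hypothesis that $\J(\N)$ is $2$-step nilpotent into a statement about the Novikov product, and then collapse that statement using the two structural facts already available: that products lie in the center (Lemma \ref{lem4.23}) and that commutators annihilate (Lemma \ref{4.4.6}). Since $\J(\N)$ is automatically commutative, its being $2$-step nilpotent means exactly that $[[x,y]_+,z]_+ = 0$ for all $x,y,z\in\N$, so the entire task reduces to showing that this single identity forces $(xy)z = 0$.

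First I would expand, for fixed $x,y,z\in\N$,
\[
[[x,y]_+,z]_+ = (xy+yx)z + z(xy+yx) = (xy)z + (yx)z + z(xy) + z(yx).
\]
The key simplification is that by Lemma \ref{lem4.23} we have $xy,\,yx\in\N\N\subset C(\N)$, so both $xy$ and $yx$ commute with $z$; this replaces $z(xy)$ by $(xy)z$ and $z(yx)$ by $(yx)z$, giving $[[x,y]_+,z]_+ = 2(xy)z + 2(yx)z$.

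Next I would use that $[x,y] = xy - yx \in \Ann(\N)$, which follows from Lemma \ref{4.4.6} together with the identity $L\Ann(\N)=\Ann(\N)$ from Proposition \ref{4.4.5}. This yields $(xy-yx)z = 0$, i.e. $(xy)z = (yx)z$. Substituting into the previous line gives $[[x,y]_+,z]_+ = 4(xy)z$, so the hypothesis that $\J(\N)$ is $2$-step nilpotent forces $(xy)z = 0$ for all $x,y,z\in\N$. Since the Novikov product is associative by Proposition \ref{prop4.22}, this also gives $x(yz)=0$, and hence $\N$ is a $2$-step nilpotent Novikov algebra.

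I expect no serious obstacle here; the argument is a short computation rather than a genuinely hard step. The only point requiring care is to apply centrality and annihilation in the correct order, since it is precisely their combination — centrality to fold the four terms into two, annihilation to identify those two with each other — that produces the clean coefficient forcing $(xy)z$ to vanish.
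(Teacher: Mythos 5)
Your proof is correct and follows essentially the same route as the paper: both arguments reduce the hypothesis $[[x,y]_+,z]_+=0$ to the vanishing of $(xy)z$ by combining the centrality of products ($\N\N\subset C(\N)$, equivalently Proposition \ref{4.4.17}(4)) with the fact that $[x,y]=xy-yx\in\Ann(\N)$. The only cosmetic difference is that the paper concludes $[x,y]_+\in\Ann(\N)$ and adds it to $[x,y]\in\Ann(\N)$ to get $xy\in\Ann(\N)$, whereas you collapse the four terms to $4(xy)z$ and then invoke associativity for $x(yz)=0$.
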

\begin{proof}
  Since (4) of Proposition \ref{4.4.17}, if $x,y,z\in\N$ then one has
  \[[[x,y]_+,z]_+= [x,y]_+z+z[x,y]_+ = 2[x,y]_+z = 0.\] It
  means $[x,y]_+ = xy+yx\in \Ann(\N)$. On the other hand, $[x,y] =
  xy-yx\in \Ann(\N)$ then $xy\in \Ann(\N),\forall
  x,y\in\N$. Therefore, $\N$ is 2-step nilpotent.
\end{proof}

By Proposition \ref{4.4.9}, since the lowest dimension of non-Abelian
quadratic 2-step nilpotent Lie algebras is six then examples of
symmetric non-commutative Novikov algebras must be at least six
dimensional. One of those can be found in \cite{ZC07} and it is also
described in term of double extension in \cite{AB10}. We recall this
algebra as follows:

\begin{ex} \label{ex4.25} Firstly, we define the {\bf character
    matrix} of a Novikov algebra $\N=\spa\{e_1,\dots,e_n\}$ by
\[\begin{pmatrix} \sum_k c_{11}^k e_k & \dots & \sum_k c_{1n}^k e_k \\ \vdots & \ddots & \vdots \\
    \sum_k c_{n1}^k e_k & \dots & \sum_k c_{nn}^k e_k \end{pmatrix},
\]

where $c_{ij}^k$ are the {\bf structure constants} of $\N$,
i. e. $e_ie_j= \sum_k c_{ij}^ke_k$.

  Now, let $\N_6$ be a 6-dimensional vector space spanned by
  $\{e_1,...,e_6\}$ then $\N_6$ is a symmetric non-commutative Novikov
  algebras with character matrix
  \[\begin{pmatrix} 0 & 0 & 0 & 0& 0 & 0 \\ 0 & 0 & 0 & 0& 0 & 0 \\ 0
    & 0 & 0 & 0& 0 & 0 \\ 0 & 0 & 0 & 0& e_3 & 0 \\ 0 & 0 & 0 & 0& 0 &
    e_1 \\ 0 & 0 & 0 & e_2 & 0 & 0 \end{pmatrix}
\]
and the bilinear form $B$ defined by:

\[\begin{pmatrix} 0 & 0 & 0 & 1& 0 & 0 \\ 0 & 0 & 0 & 0& 1 & 0 \\ 0 &
  0 & 0 & 0& 0 & 1 \\ 1 & 0 & 0 & 0& 0 & 0 \\ 0 & 1 & 0 & 0& 0 & 0 \\
  0 & 0 & 1 & 0 & 0 & 0 \end{pmatrix}.
\]

Obviously, in this case, $\N_6$ is a 2-step nilpotent Novikov algebra
with $\Ann(\N) = \N\N$. Moreover, $\N_6$ is indecomposable since it is
non-commutative and all of symmetric Novikov algebras up to dimension
5 are commutative.
\end{ex}

We need the following lemma:

\begin{lem}
  Let $\N$ be a non-Abelian symmetric Novikov algebra then $\N =
  \zk\oplusp\lk$, where $\zk\subset \Ann(\N)$ and $\lk$ is a reduced
  symmetric Novikov algebra, that means $\lk\neq \{0\}$ and
  $\Ann(\lk)\subset \lk\lk$.
\end{lem}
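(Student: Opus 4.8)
The plan is to mirror the proof of Proposition \ref{prop-red} almost verbatim, replacing each ingredient about pseudo-Euclidean Jordan algebras by its symmetric Novikov counterpart. The two facts that make the transfer work are Lemma \ref{lem4.6} (a non-degenerate ideal splits off orthogonally) and the identity $\Ann(\N)=(\N\N)^\bot$ from part 2(iii) of Proposition \ref{4.4.5}. The latter already absorbs the non-commutativity, since it records $L\Ann(\N)=R\Ann(\N)=\Ann(\N)$; in particular any subspace $\zk\subset\Ann(\N)$ is automatically a two-sided ideal.

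Concretely, I would set $\zk_0=\Ann(\N)\cap\N\N$, choose $\zk$ to be a complement of $\zk_0$ in $\Ann(\N)$, and put $\lk=\zk^\bot$. First I check that $B|_{\zk\times\zk}$ is non-degenerate: if $x\in\zk$ has $B(x,\zk)=0$, then $B(x,\N\N)=0$ because $x\in\Ann(\N)=(\N\N)^\bot$, hence $B(x,\zk_0)=0$ as well (as $\zk_0\subset\N\N$), so $B(x,\Ann(\N))=0$ and therefore $x\in\N\N$; but then $x\in\Ann(\N)\cap\N\N=\zk_0$, and $\zk\cap\zk_0=\{0\}$ forces $x=0$. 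Since $\zk$ is a non-degenerate ideal (being a subspace of $\Ann(\N)$), Lemma \ref{lem4.6} gives that $\lk=\zk^\bot$ is non-degenerate and $\N=\zk\oplusp\lk$.

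It then remains to recognise $\lk$ as a reduced symmetric Novikov algebra. As $\zk\subset\Ann(\N)$ kills all products, $\lk\lk=\N\N$; because $\N$ is non-Abelian this is nonzero, so $\lk$ is non-Abelian and $\lk\neq\{0\}$. Finally I would prove $\Ann(\lk)=\zk_0$. The inclusion $\zk_0\subset\Ann(\lk)$ is clear, since $\zk_0\subset\Ann(\N)$ annihilates everything and $\zk_0\subset\N\N\subset\zk^\bot=\lk$. For the reverse inclusion, apply Proposition \ref{4.4.5} 2(iii) to the symmetric Novikov algebra $(\lk,B|_\lk)$, giving $\Ann(\lk)=\{w\in\lk\mid B(w,\lk\lk)=0\}$; such a $w$ satisfies $w\perp\N\N$ (as $\lk\lk=\N\N$) and $w\perp\zk$ (as $w\in\lk=\zk^\bot$), hence $w\perp\zk_0$ too, so $w\perp\Ann(\N)=\zk\oplus\zk_0$ and thus $w\in(\Ann(\N))^\bot=\N\N$. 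Together with $w\in\Ann(\N)$ this gives $w\in\zk_0$. Therefore $\Ann(\lk)=\zk_0\subset\N\N=\lk\lk$, which is exactly the reducedness of $\lk$.

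The only genuine deviation from the Jordan proof — and the step I would watch most carefully — is this identification $\Ann(\lk)=\zk_0$: in the non-commutative setting one must keep the left and right annihilators aligned and invoke the orthogonality identity of Proposition \ref{4.4.5} 2(iii) both for $\N$ and for the subalgebra $\lk$. Beyond that bookkeeping the argument is a line-for-line translation of Proposition \ref{prop-red}, so I anticipate no real obstacle.
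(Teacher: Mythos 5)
Your proposal is correct and follows the paper's own argument essentially line for line: the same choice of $\zk_0=\Ann(\N)\cap\N\N$, the same non-degeneracy check via $\Ann(\N)=(\N\N)^\bot$, the same appeal to Lemma \ref{lem4.6}, and the same identification $\Ann(\lk)=\zk_0$. The only difference is that you spell out the step the paper dismisses as ``easy to see'' ($\Ann(\lk)=\zk_0$), which is a welcome addition but not a different approach.
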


\begin{proof}
  Let $\zk_0 = \Ann(\N)\cap\N\N$, $\zk$ is a complementary subspace
  of $\zk_0$ in $\Ann(\N)$ and $\lk = (\zk)^\bot$. If $x$ is an
  element in $\zk$ such that $B(x,\zk) = 0$ then $B(x,\N\N) = 0$ since
  $\Ann(\N)=(\N\N)^\bot$. As a consequence, $B(x,\zk_0) = 0$ and then
  $B(x,\Ann(\N)) = 0$. Hence, $x$ must be in $\N\N$ since
  $\Ann(\N)=(\N\N)^\bot$. It shows that $x=0$ and $\zk$ is
  non-degenerate. By Lemma \ref{lem4.6}, $\lk$ is a non-degenerate ideal and
  $\N = \zk\oplusp\lk$.

  Since $\N$ is non-Abelian then $\lk\neq \{0\}$. Moreover, $\lk\lk =
  \N\N$ implies $\zk_0\subset \lk\lk$. It is easy to see that
  $\zk_0=\Ann(\lk)$ and the lemma is proved.
\end{proof}

\begin{prop}
  Let $\N$ be a symmetric non-commutative Novikov algebras of
  dimension 6 then $\N$ is 2-step nilpotent.
\end{prop}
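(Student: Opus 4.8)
The plan is to reduce the statement to a short dimension count involving the three subspaces $\N\N$, $\Ann(\N)$ and the derived ideal $[\g(\N),\g(\N)]$. First I would use that $\N$ non-commutative is equivalent to $\g(\N)$ being non-Abelian: if $xy=yx$ for all $x,y$ then $\N$ is commutative, so non-commutativity produces a nonzero commutator. By Corollary \ref{cor4.10} the sub-adjacent Lie algebra $\g(\N)$ is then a non-Abelian quadratic $2$-step nilpotent Lie algebra of dimension $6$. Invoking the classification in \cite{DPU10}, the only such algebra up to isomorphism is $\g_6$, and for it $\dim[\g(\N),\g(\N)] = 3$. This is the single external input the whole argument rests on.

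Next I would assemble three facts already available in the text. From the proof of Corollary \ref{cor4.10}, every commutator lies in the annulator, so $[\g(\N),\g(\N)]\subseteq \Ann(\N)$; on the other hand $xy-yx\in\N\N$ trivially, so also $[\g(\N),\g(\N)]\subseteq \N\N$; and by Proposition \ref{4.4.5}(2)(iii) we have $\Ann(\N) = (\N\N)^\bot$, whence non-degeneracy of $B$ gives $\dim\Ann(\N) + \dim(\N\N) = 6$. (One may also recall $\N\N\subseteq C(\N)$ from Lemma \ref{lem4.23} and associativity from Proposition \ref{prop4.22}, though the count below does not strictly need them.)

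Then the count closes at once. From $[\g(\N),\g(\N)]\subseteq \Ann(\N)$ I get $\dim\Ann(\N)\ge 3$, hence $\dim(\N\N)=6-\dim\Ann(\N)\le 3$; from $[\g(\N),\g(\N)]\subseteq \N\N$ I get $\dim(\N\N)\ge 3$. Therefore $\dim(\N\N)=\dim\Ann(\N)=3$, and since the $3$-dimensional subspace $[\g(\N),\g(\N)]$ sits inside both $\N\N$ and $\Ann(\N)$, all three coincide; in particular $\N\N=\Ann(\N)$. Consequently $(\N\N)\N=\Ann(\N)\,\N=\{0\}$, and by associativity $\N(\N\N)=\{0\}$ as well, which is exactly the assertion that $\N$ is $2$-step nilpotent.

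I do not expect a genuine obstacle: once the classification forces $\dim[\g(\N),\g(\N)]=3$, the squeeze $3\le \dim(\N\N)\le 3$ is automatic, so the proof is purely a pigeonhole on dimensions. The only point worth verifying carefully is precisely that $\g_6$ has a $3$-dimensional derived ideal, i.e.\ that the commutators span the same space as $\N\N$; this can be checked directly on the structure constants of the example $\N_6$ in Example \ref{ex4.25}, where $\N_6\N_6=\Ann(\N_6)=\spa\{e_1,e_2,e_3\}$ and the three independent commutators $[e_4,e_5],[e_5,e_6],[e_6,e_4]$ recover $e_3,e_1,e_2$. Should one wish to avoid the classification entirely, the reduction lemma just above permits replacing $\N$ by its reduced part $\lk$, but the dimension bound on $[\g(\N),\g(\N)]$ would remain the crux.
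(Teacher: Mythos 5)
Your proof is correct, and while it shares the paper's starting point (the classification of six-dimensional non-Abelian quadratic $2$-step nilpotent Lie algebras forcing $\g(\N)=\g_6$, together with $\Ann(\N)=(\N\N)^\bot$), the way you close the argument is genuinely different. The paper bounds $\dim(\N\N)\le 3$ via $\N\N\subseteq C(\N)=[\g(\N),\g(\N)]^\bot$, and then obtains the reverse inclusion $\Ann(\N)\subseteq\N\N$ by proving $\N$ is \emph{reduced}: it applies the reduction lemma and the nontrivial fact that every symmetric Novikov algebra of dimension $\le 5$ is commutative to rule out a nonzero non-degenerate piece of the annulator. You bypass reducedness entirely by squeezing the single subspace $[\g(\N),\g(\N)]$ inside both $\Ann(\N)$ (via $L_{[x,y]}=0$ from Lemma \ref{4.4.6} and $\LAnn(\N)=\Ann(\N)$) and $\N\N$ (trivially), so that $\dim\Ann(\N)\ge 3$ and $\dim(\N\N)\ge 3$ force equality everywhere. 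This is shorter and needs strictly less input: in fact you only use $\dim[\g(\N),\g(\N)]\ge 3$, which is the general lower bound for non-Abelian quadratic Lie algebras from \cite{PU07} already invoked in Proposition \ref{prop4.28}, so the full dimension-six classification could be dropped from your argument altogether. What the paper's route buys in exchange is the intermediate statement that $\N$ is reduced, which it reuses in the subsequent dimension-seven analysis; your route does not record that fact, though it follows a posteriori from $\Ann(\N)=\N\N$.
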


\begin{proof}

  Let $\N = \spa\{x_1, x_2,x_3,z_1,z_2,z_3\}$. By \cite{DPU10}, there
  exists only one non-Abelian quadratic 2-step nilpotent Lie algebra
  of dimension 6 (up to isomorphisms) then $\g(\N) = \g_6$. We can
  choose the basis such that $[x_1,x_2] = z_3$, $[x_2,x_3] = z_1$,
  $[x_3,x_1] = z_2$ and the bilinear form $B(x_i,z_i) = 1,\ i=1,2,3$,
  the other are zero.

  Recall that $C(\N) := \{x\in\N\ \mid xy= yx,\forall y\in\N\}$ then
  $C(\N) = \{x\in\N\ |[x,y]=0,\forall y\in\N\}$. Therefore, $C(\N) =
  \spa\{z_1,z_2,z_3\}$ and $\N\N \subset C(\N)$ by Lemma
  \ref{lem4.23}. Consequently, $\dim(\N\N)\leq 3$.

  By the above lemma, if $\N$ is not reduced then $\N = \zk\oplusp\lk$
  with $\zk\subset \Ann(\N)$ is a non-degenerate ideal and $\zk\neq
  \{0\}$. It implies that $\lk$ is a symmetric Novikov algebra having
  dimension $\leq5$ and then $\lk$ is commutative. This is a
  contradiction since $\N$ is non-commutative. Therefore, $\N$ must be
  reduced and $\Ann(\N) \subset \N\N$. Moreover, $\dim(\N\N) +
  \dim(\Ann(\N)) = 6$ so we have $\N\N = \Ann(\N) = C(\N)$. It shows
  $\N$ is 2-step nilpotent.
\end{proof}

In this case, the character matrix of $\N$ in the basis $\{x_1, x_2,x_3,z_1,z_2,z_3\}$ is given by:

\[\begin{pmatrix} A & 0 \\ 0 & 0 \end{pmatrix},
\]
where A is a $3\times3$-matrix defined by the structure constants
$x_ix_j= \sum_k c_{ij}^kz_k$, $1\leq i,j,k\leq 3$, and $B$ has the matrix:
\[\begin{pmatrix} 0 & 0 & 0 & 1& 0 & 0 \\ 0 & 0 & 0 & 0& 1 & 0 \\ 0 &
  0 & 0 & 0& 0 & 1 \\ 1 & 0 & 0 & 0& 0 & 0 \\ 0 & 1 & 0 & 0& 0 & 0 \\
  0 & 0 & 1 & 0 & 0 & 0 \end{pmatrix}.\]

Since $B(x_ix_j,x_r) = B(x_i,x_jx_r)=B(x_j,x_rx_i)$ then one has
$c_{ij}^r = c_{jr}^i = c_{ri}^j$, $1\leq i,j,k\leq 3$.

Next, we give some simple properties for symmetric Novikov algebras as follows:

\begin{prop}\label{prop4.28}
Let $\N$ be a symmetric non-commutative Novikov algebra. If $\N$ is reduced then
\[
3\leq \dim(\Ann(\N))\leq \dim(\N\N) \leq\dim(\N) -3.\]
\end{prop}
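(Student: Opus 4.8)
The plan is to turn all three inequalities into the single estimate $\dim(\Ann(\N)) \ge 3$. First I would use that $B$ is non-degenerate and that $\Ann(\N) = (\N\N)^\bot$ (Proposition \ref{4.4.5}, 2(iii)) to get the dimension relation $\dim(\Ann(\N)) + \dim(\N\N) = \dim(\N)$. This makes the two outer inequalities $3 \le \dim(\Ann(\N))$ and $\dim(\N\N) \le \dim(\N)-3$ equivalent, so only the left one needs to be proved. The middle inequality is then immediate from the hypothesis: $\N$ reduced means $\Ann(\N)$ is totally isotropic, i.e. $\Ann(\N) \subset \Ann(\N)^\bot = \N\N$, whence $\dim(\Ann(\N)) \le \dim(\N\N)$.

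So the whole content is the bound $\dim(\Ann(\N)) \ge 3$. Set $\g := \g(\N)$ and $\g' := [\g,\g]$. By Corollary \ref{cor4.10}, $(\g,B)$ is a quadratic $2$-step nilpotent Lie algebra, and the proof of that corollary shows $[x,y] \in \Ann(\N)$ for all $x,y$, so $\g' \subset \Ann(\N)$; it therefore suffices to prove $\dim(\g') \ge 3$. Since $\N$ is non-commutative, $\g' \ne \{0\}$. My plan here is to exploit the invariant alternating $3$-form of the quadratic Lie algebra: put $\Omega(x,y,z) := B([x,y],z)$, which is antisymmetric in $x,y$ and, by invariance and symmetry of $B$, also cyclic, hence alternating, $\Omega \in \Lambda^3\g^*$. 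I would then identify its radical with the center $C(\N) = (\g')^\bot$ (Proposition \ref{4.4.5}, 2(i)): the condition $\Omega(x,\cdot,\cdot)=0$ forces $[x,y]\perp\g$, hence $[x,y]=0$ for all $y$. Thus $\Omega$ descends to an alternating $3$-form on $\g/(\g')^\bot$, which is nonzero (otherwise $B(\g',\g)=0$ would give $\g'=\{0\}$) and lives on a space of dimension $\dim(\g/(\g')^\bot) = \dim(\g')$. Since a nonzero alternating $3$-form can exist only on a space of dimension at least $3$, this yields $\dim(\g') \ge 3$, and a fortiori $\dim(\Ann(\N)) \ge 3$.

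The genuinely substantive step is this last one, namely producing three independent directions inside $\g'$; everything else is bookkeeping with orthogonal complements. I expect the main obstacle to be the clean verification that $\Omega$ is alternating with radical exactly $(\g')^\bot$, so that its nonzero descent forces $\dim(\g')\ge 3$. This is the Novikov-algebra incarnation of the fact, already used above, that the smallest non-Abelian quadratic $2$-step nilpotent Lie algebra is $\g_6$, whose derived ideal is $3$-dimensional; alternatively one could simply invoke the classification in \cite{DPU10} to conclude $\dim(\g')\ge 3$ directly.
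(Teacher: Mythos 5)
Your proof is correct, and its skeleton coincides with the paper's: everything reduces, via $\Ann(\N)=(\N\N)^\perp$ and the reducedness hypothesis, to a single dimension bound whose source is the non-Abelian quadratic $2$-step nilpotent Lie algebra $\g(\N)$. The differences are in the routing and in how the key fact $\dim([\N,\N])\ge 3$ is obtained. The paper bounds $\dim(\N\N)$ from above first, using Lemma \ref{lem4.23} ($\N\N\subset C(\N)$) together with $C(\N)=[\N,\N]^\perp$ and the cited inequality $\dim([\N,\N])\ge 3$ from \cite{PU07}, and then deduces $\dim(\Ann(\N))\ge 3$ by orthogonality; you instead bound $\dim(\Ann(\N))$ from below directly via the inclusion $[\g,\g]\subset\Ann(\N)$ from the proof of Corollary \ref{cor4.10}. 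These two inclusions are orthogonal duals of one another, so this is a cosmetic difference. The genuine added value of your version is that you prove $\dim([\g,\g])\ge 3$ rather than citing it: the invariant form $\Omega(x,y,z)=B([x,y],z)$ is alternating by invariance of $B$, its radical is exactly the center $(\g')^\perp$, and a nonzero element of $\Lambda^3$ of the quotient $\g/(\g')^\perp$, whose dimension equals $\dim(\g')$, forces $\dim(\g')\ge 3$. This is precisely the invariant underlying \cite{PU07} and \cite{DPU10}, so your argument makes the proposition self-contained where the paper relies on an external reference; the paper's version is shorter at the cost of that dependence. All the supporting steps you use ($L_{[x,y]}=0$, $L\Ann(\N)=\Ann(\N)$, non-degeneracy giving $\dim S+\dim S^\perp=\dim\N$) are available in the paper, so there is no gap.
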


\begin{proof}
  By Lemma \ref{lem4.23}, $\N\N\subset C(\N)$. Moreover, $\N$ non-commutative
  implies that $\g(\N)$ is non-Abelian and by \cite{PU07}, $\dim
  \left([\N,\N] \right) \geq 3$. Therefore, $\dim C(\N)
  \leq\dim(\N) -3$ since $C(\N) = [\N,\N]^\bot$. Consequently,
  $\dim(\N\N) \leq\dim(\N) -3$ and then $\dim(\Ann(\N))\geq 3$.
\end{proof}

\begin{cor}\label{cor4.29}
  Let $\N$ be a symmetric non-commutative Novikov algebra of dimension
  7. If $\N$ is 2-step nilpotent then $\N$ is not reduced.
\end{cor}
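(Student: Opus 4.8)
The plan is to argue by contradiction: suppose that, in addition to being symmetric, non-commutative and $2$-step nilpotent of dimension $7$, the algebra $\N$ is reduced. I will derive a dimension constraint that cannot hold in odd dimension.

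First I would relate $\N\N$ and $\Ann(\N)$ by two opposite inclusions. Since $\N$ is symmetric, Proposition \ref{prop4.22} gives that its product is associative, so the $2$-step nilpotency condition $(xy)z = 0$ upgrades to $(xy)z = x(yz) = 0$ for all $x,y,z\in\N$; hence every element of $\N\N$ both left- and right-annihilates $\N$, i.e.\ $\N\N \subset \Ann(\N)$. Conversely, by the definition of a reduced symmetric Novikov algebra one has $\Ann(\N)\subset \N\N$. Combining the two inclusions yields $\Ann(\N) = \N\N$.

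The crux is then a parity observation. By part 2(iii) of Proposition \ref{4.4.5} we have $\Ann(\N) = (\N\N)^\bot$, so the equality $\Ann(\N)=\N\N$ forces $\N\N = (\N\N)^\bot$; that is, $\N\N$ is a maximal totally isotropic (Lagrangian) subspace for the non-degenerate form $B$. In particular $\dim(\N\N) + \dim(\Ann(\N)) = \dim(\N)$ together with $\dim(\Ann(\N)) = \dim(\N\N)$ gives $2\dim(\N\N) = \dim(\N) = 7$, which is impossible since $7$ is odd. Alternatively, one may avoid the parity remark and simply quote Proposition \ref{prop4.28}, which bounds $3 \le \dim(\Ann(\N)) \le \dim(\N\N) \le \dim(\N)-3 = 4$; with $\dim(\Ann(\N)) = \dim(\N\N)$ and $\dim(\Ann(\N)) + \dim(\N\N) = 7$ these cannot be reconciled. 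Either way the contradiction shows that no reduced $2$-step nilpotent $\N$ of dimension $7$ exists, so $\N$ is not reduced.

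The argument is short and the only genuine step is the first inclusion $\N\N \subset \Ann(\N)$, which relies on the associativity of symmetric Novikov algebras (Proposition \ref{prop4.22}) to turn one-sided $2$-step nilpotency into annihilation on both sides. I do not anticipate any real obstacle here; the heart of the matter is simply that reducedness forces $\N\N$ to be Lagrangian, which demands even total dimension and is therefore incompatible with $\dim(\N)=7$.
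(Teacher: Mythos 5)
Your proof is correct and is essentially the paper's argument read contrapositively: both hinge on $\Ann(\N)=(\N\N)^\bot$ forcing $\dim(\Ann(\N))+\dim(\N\N)=7$, which is incompatible with having both inclusions $\Ann(\N)\subset\N\N$ (reducedness) and $\N\N\subset\Ann(\N)$ (2-step nilpotency). The paper pins the dimensions down to $3$ and $4$ via Proposition \ref{prop4.28} and then exhibits an element of $\N\N$ that does not annihilate $\N$, whereas your primary version needs only the parity of $7$; your fallback through Proposition \ref{prop4.28} is precisely the paper's route.
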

\begin{proof}
  Assume that $\N$ is reduced then $\dim(\Ann(\N))= 3$ and
  $\dim(\N\N)=4$. It implies that there must have a nonzero element
  $x\in \N\N$ such that $x\N\neq \{0\}$ and then $\N$ is not 2-step
  nilpotent.
\end{proof}

Now, we give a more general result for symmetric Novikov algebra of
dimension 7 as follows:

\begin{prop}\label{prop4.30}
  Let $\N$ be a symmetric non-commutative Novikov algebra of dimension
  7. If $\N$ is reduced then there are only two cases:
\begin{enumerate}
    \item $\N$ is 3-step nilpotent and indecomposable.
    \item $\N$ is decomposable by $\N =\CC x\oplusp \N_6$, where $x^2=x$ and $\N_6$ is a symmetric non-commutative Novikov algebra of dimension 6.
     \end{enumerate}
\end{prop}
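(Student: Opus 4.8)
The plan is to reduce everything to the two invariants $\N\N$ and $\Ann(\N)$, pin down their dimensions exactly, and then split according to whether $\N$ is nilpotent. First I would combine Proposition \ref{prop4.28} with the identity $\Ann(\N)=(\N\N)^\bot$ from Proposition \ref{4.4.5}. The latter gives $\dim(\Ann(\N))=7-\dim(\N\N)$, and inserting this into $3\le\dim(\Ann(\N))\le\dim(\N\N)\le 4$ forces $\dim(\N\N)=4$ and $\dim(\Ann(\N))=3$. Since $\N\N\subseteq C(\N)=[\N,\N]^\bot$ (Lemma \ref{lem4.23} and Proposition \ref{4.4.5}) and $\dim(C(\N))=7-\dim[\N,\N]\le 4$ by the bound $\dim[\N,\N]\ge 3$ used in Proposition \ref{prop4.28}, I get $C(\N)=\N\N$ and $\dim[\N,\N]=3$, hence $[\N,\N]=\Ann(\N)$. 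Reducedness gives $\Ann(\N)\subseteq(\Ann(\N))^\bot=\N\N$, so for any $e\in\N\N\setminus\Ann(\N)$ one has $\N\N=\CC e\oplus\Ann(\N)$; using that $\N$ is associative (Proposition \ref{prop4.22}) and that $\Ann(\N)$ kills $\N$ on both sides, a one-line computation shows $\N\N$ is a commutative associative subalgebra with $(\N\N)(\N\N)=\CC e^2$.

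Next I would treat the non-nilpotent case, which should yield case (2). Being a finite-dimensional associative $\CC$-algebra, a non-nilpotent $\N$ contains a nonzero idempotent $g=g^2$. Then $g\in\N\N=C(\N)$ is central and $g\notin\Ann(\N)$, so $\N\N=\CC g\oplus\Ann(\N)$. The operator $L_g$ satisfies $L_g^2=L_{g^2}=L_g$ and is $B$-self-adjoint because $g\in C(\N)=[\N,\N]^\bot$, so it is an orthogonal projection and $\N=g\N\oplusp\ker(L_g)$ into non-degenerate ideals. As $g\N\subseteq\N\N$ consists of $L_g$-fixed vectors and $L_g$ fixes only $\CC g$ inside $\CC g\oplus\Ann(\N)$, I conclude $g\N=\CC g$. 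Thus $\N=\CC g\oplusp\N_6$ with $\N_6=(\CC g)^\bot$ a $6$-dimensional non-degenerate ideal, hence a symmetric Novikov algebra; since $g$ is central, $[\N_6,\N_6]=[\N,\N]=\Ann(\N)\ne\{0\}$, so $\N_6$ is non-commutative. This is case (2) with $x=g$.

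Then I would treat the nilpotent case and show it gives case (1). The idempotent construction above shows that $e^2\notin\Ann(\N)$ would produce a nonzero idempotent, so nilpotency forces $e^2\in\Ann(\N)$, i.e. $L_e^2=L_{e^2}=0$. Writing $ey=\phi(y)e+a_y$ with $a_y\in\Ann(\N)$, the relation $0=L_e^2(y)=\phi(y)e^2$ gives either $e^2=0$ or $e\N\subseteq\Ann(\N)$; in the latter case $B(e^2,y)=B(e,ey)=0$ (as $e\in\N\N$ and $ey\in\Ann(\N)=(\N\N)^\bot$) forces $e^2=0$ again by non-degeneracy. Hence $e^2=0$ and $(\N\N)(\N\N)=0$, so by associativity $\N^4=\N^2\N^2=(\N\N)(\N\N)=0$, while $\N^3=\N(\N\N)=\N e\ne\{0\}$ (else $e\in\Ann(\N)$), i.e. $\N$ is exactly $3$-step nilpotent. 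For indecomposability, if $\N=I\oplusp J$ with $I,J$ nonzero non-degenerate ideals, both are nilpotent and $[\N,\N]=[I,I]\oplus[J,J]\ne\{0\}$, so one factor (say $I$) is non-commutative; since symmetric non-commutative Novikov algebras have dimension $\ge 6$, this forces $\dim I\ge 6$ and $\dim J\le 1$, but a $1$-dimensional nilpotent ideal $J=\CC c$ has $c^2=0$, giving $c\in\Ann(\N)$ with $B(c,c)\ne 0$, contradicting reducedness. Hence $\N$ is indecomposable, which is case (1).

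The hard part will be the rigidity of the single ``extra'' dimension $\N\N/\Ann(\N)$: the whole argument rests on Step~1's exact dimension count and on proving $e^2=0$ in the nilpotent case, since these are what make the dichotomy between the split idempotent picture and the $3$-step nilpotent picture clean. The self-adjoint idempotent $L_g$ and the square-zero operator $L_e$ are the technical devices I expect to carry the two cases, and verifying that $e^2=0$ (rather than merely $e^2\in\Ann(\N)$) is the step most likely to require care.
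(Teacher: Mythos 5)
Your proof is correct, and it reaches the required dichotomy by a genuinely different mechanism than the paper. Both arguments begin with the same dimension count ($\dim\Ann(\N)=3$, $\dim(\N\N)=4$, hence $\N\N=\CC e\oplus\Ann(\N)$), but the paper then fixes an explicit Witt-type decomposition $\N=\Ann(\N)\oplus\CC x\oplus V$ with $B(x,x)\neq 0$ and splits on whether the component of $L_x$ along $\CC x$ vanishes: if it does, $x\N\subseteq\Ann(\N)$ gives $((\N\N)\N)\N=\{0\}$ directly; if not, the relation $xy=x+z$ is used to show $x^2\neq 0$, that its $\CC x$-component is nonzero, and to rescale $x^2$ into an idempotent by hand. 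You instead split on nilpotency of the associative algebra $\N$ (associativity being Proposition \ref{prop4.22}), import the classical fact that a non-nilpotent finite-dimensional associative algebra contains a nonzero idempotent, and split off $\CC g$ via the central self-adjoint projection $L_g$ --- a cleaner route to case (2) that avoids the normalization computation. Your treatment of case (1) is also more complete than the paper's: you verify $e^2=0$ exactly (not merely $e^2\in\Ann(\N)$), check $\N^3=\N e\neq\{0\}$ so the nilpotency index is precisely $3$, and give a full indecomposability argument (a one-dimensional complementary factor would be a non-isotropic annihilator vector, contradicting reducedness), whereas the paper only asserts tersely that a decomposable $\N$ would be $2$-step nilpotent and appeals to Corollary \ref{cor4.29}. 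The one external ingredient you use that the paper does not is the idempotent-existence theorem; if you prefer to stay self-contained, your own nilpotent-case analysis already produces the idempotent explicitly when $e^2\notin\Ann(\N)$, since $e^2=\lambda e+a$ with $\lambda\neq 0$ and $a\in\Ann(\N)$ gives $\bigl(e^2/\lambda^2\bigr)^2=e^2/\lambda^2$.
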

\begin{proof}
  Assume that $\N$ is reduced then $\dim(\Ann(\N))= 3$, $\dim(\N\N)=4$
  since $\Ann(\N)\subset \N\N$ and $\Ann(\N)= (\N\N)^\bot$. By
  \cite{Bou59}, $\Ann(\N)$ is totally isotropic, then there exist a
  totally isotropic subspace $V$ and a nonzero $x$ of $\N$ such that
\[\N =\Ann(\N)\oplus \CC x\oplus V,
\] where $\Ann(\N)\oplus V$ is non-degenerate, $\ B(x,x)\neq 0$ and $
x^\bot = \Ann(\N)\oplus V$. As a consequence, $\Ann(\N)\oplus \CC x =
(\Ann(\N))^\bot = \N\N$.

Consider the left-multiplication operator $L_x: \CC x\oplus V
\rightarrow \Ann(\N)\oplus \CC x$, $L_x(y) = xy, \ \forall y\in\CC
x\oplus V$. Denote by $p$ the projection $\Ann(\N)\oplus \CC x
\rightarrow \CC x$.

\begin{itemize}
\item If $p\circ L_x = 0$ then $(\N\N)\N = x\N\subset
  \Ann(\N)$. Therefore, $((\N\N)\N)\N = \{0\}$. That implies $\N$ is
  3-nilpotent. If $\N$ is decomposable then $\N$ must be 2-step
  nilpotent. This is in contradiction to Corollary \ref{cor4.29}.

\item If $p\circ L_x \neq 0$ then there is a nonzero $y\in \CC x\oplus
  V$ such that $xy = ax + z$ with $0\neq a\in\CC$ and
  $z\in\Ann(\N)$. In this case, we can choose $y$ such that $a=1$. It
  implies that $(x^2)y = x(xy) = x^2$.

  If $x^2 = 0$ then $0=B(x^2, y) = B(x,xy) = B(x,x)$. This is a
  contradiction. Therefore, $x^2 \neq 0$. Since $x^2\in\Ann(\N)\oplus
  \CC x$ then $x^2 = z' + \mu x$, where $z'\in \Ann(\N)$ and
  $\mu\in\CC$ must be nonzero. By setting $x': = \frac{x}{\mu}$ and
  $z'' = \frac{z'}{\mu^2}$, we get $(x')^2 = z'' + x'$. Let $x_1: =
  (x')^2$, one has:
\[x_1^2=(x')^2(x')^2 = (z'' + x')(z'' + x') =x_1.\]

Moreover, for all $t=\lambda x + v\in \CC x\oplus V$, we have
$t(x^2)=(x^2)t = x(xt) = \lambda \mu (x^2)$. It implies that $\CC x^2
= \CC x_1$ is an ideal of $\N$.

Since $B(x_1,x_1)\neq 0$, by Lemma \ref{lem4.6} one has $\N =\CC
x_1\oplusp (x_1)^\bot$. Certainly, $(x_1)^\bot$ is a symmetric
non-commutative Novikov algebra of dimension 6.
\end{itemize}

\end{proof}

\begin{prop}\label{prop4.30}
  Let $\N$ be a symmetric Novikov algebra. If $\g(\N)$ or $\J(\N)$ is
  reduced then $\N$ is reduced.
\end{prop}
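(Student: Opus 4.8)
The plan is to deduce the whole statement from two elementary inclusions, together with the trivial observation that total isotropy is inherited by subspaces. First I would record the structural setup: the symmetric Novikov algebra $\N$, its subadjacent Lie algebra $\g(\N)$ (with bracket $[x,y]=xy-yx$) and its associated Jordan algebra $\J(\N)$ (with product $[x,y]_+=xy+yx$) all share the same underlying vector space and the same scalar product $B$. In particular one of the three is nonzero if and only if all three are. Recall that in each of the three categories ``reduced'' means nonzero together with total isotropy of a distinguished annihilator-type subspace: $\Ann(\N)$ for the Novikov algebra, the center $C(\N)=\{x\mid xy=yx,\ \forall y\}$ for the quadratic Lie algebra $\g(\N)$ (as defined in \cite{DPU10}; note that the center of $\g(\N)$ in the Lie sense is exactly $C(\N)$, since $[x,y]=0$ for all $y$ is the same as $xy=yx$ for all $y$), and $\Ann(\J(\N))$ for the pseudo-Euclidean Jordan algebra $\J(\N)$.

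The key step is the pair of inclusions $\Ann(\N)\subseteq C(\N)$ and $\Ann(\N)\subseteq\Ann(\J(\N))$, both of which are immediate from the definitions. Indeed, if $x\in\Ann(\N)$ then $xy=yx=0$ for every $y\in\N$, whence $[x,y]=xy-yx=0$ and $[x,y]_+=xy+yx=0$. The first shows $x$ lies in the center of $\g(\N)$, i.e.\ in $C(\N)$; the second shows $x\in\Ann(\J(\N))$ (here I use that $\J(\N)$ is commutative, so its annihilator is simply $\{x\mid [x,y]_+=0,\ \forall y\}$).

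With these inclusions the conclusion is purely formal. Suppose $\g(\N)$ is reduced. Then $\g(\N)\neq\{0\}$, so $\N\neq\{0\}$, and $C(\N)$ is totally isotropic for $B$. Since $\Ann(\N)\subseteq C(\N)$ and every subspace of a totally isotropic subspace is again totally isotropic, $\Ann(\N)$ is totally isotropic; hence $\N$ is reduced. The case in which $\J(\N)$ is reduced runs identically, with $\Ann(\J(\N))$ in place of $C(\N)$ and the inclusion $\Ann(\N)\subseteq\Ann(\J(\N))$.

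I do not anticipate a genuine obstacle here. The only point demanding care is the bookkeeping of the three notions of ``reduced'': one must check that in each category it is precisely nonvanishing together with total isotropy of the appropriate center or annihilator that is asserted, and that these subspaces are computed with respect to the correct product ($xy$, $[x,y]$, or $[x,y]_+$). Once the two inclusions are established, no computation remains.
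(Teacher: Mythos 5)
Your proof is correct and is essentially the paper's argument in contrapositive form: the paper picks a non-isotropic $x\in\Ann(\N)$ when $\N$ is not reduced and notes $[x,\N]=[x,\N]_+=0$, which is exactly your pair of inclusions $\Ann(\N)\subseteq C(\N)$ and $\Ann(\N)\subseteq\Ann(\J(\N))$ read in the other direction. No substantive difference.
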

\begin{proof}
  Assume that $\N$ is not reduced then there is a nonzero $x\in
  \Ann(\N)$ such that $B(x,x) = 1$. Since $[x,\N] = [x,\N]_+ = 0$ then
  $\g(\N)$ and $\J(\N)$ are not reduced.
\end{proof}
\begin{cor}
  Let $\N$ be a symmetric Novikov algebra. If $\g(\N)$ is reduced then
  $\N$ must be 2-step nilpotent.
\end{cor}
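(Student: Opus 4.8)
The plan is to squeeze the subspace $C(\N)$ between $[\N,\N]$ and itself, forcing a collapse of the relevant subspaces. First I would recall that, by Corollary \ref{cor4.10}, the sub-adjacent Lie algebra $\g(\N)$ is quadratic and $2$-step nilpotent, and that its center coincides with $C(\N)$ (indeed $x\in C(\N)$ if and only if $[x,y]=0$ for all $y\in\N$). By Proposition \ref{4.4.5} we moreover have $C(\N)=[\N,\N]^\bot$, so that, $B$ being non-degenerate on the finite-dimensional space $\N$, taking orthogonals gives $C(\N)^\bot=[\N,\N]$.

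Next I would establish the chain of inclusions $[\N,\N]\subseteq\Ann(\N)\subseteq C(\N)$. The first inclusion is read off from the proof of Corollary \ref{cor4.10}: Lemma \ref{4.4.6} yields $L_{[x,y]}=0$, so every commutator annihilates $\N$ on the left, hence lies in $\Ann(\N)$ by $2$(iii) of Proposition \ref{4.4.5}. The second inclusion is immediate, since any $x\in\Ann(\N)$ satisfies $xy=yx=0$ for all $y$ and therefore commutes with every element of $\N$.

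The reducedness hypothesis on $\g(\N)$ means precisely that its center $C(\N)$ is totally isotropic, i.e. $C(\N)\subseteq C(\N)^\bot$. Combined with the identity $C(\N)^\bot=[\N,\N]$ from the first step, this gives $C(\N)\subseteq[\N,\N]$, which closes the loop:
\[ [\N,\N]\subseteq \Ann(\N)\subseteq C(\N)\subseteq [\N,\N]. \]
Hence all three subspaces coincide; in particular $\Ann(\N)=C(\N)$. Finally I would invoke Lemma \ref{lem4.23}, which gives $\N\N\subseteq C(\N)=\Ann(\N)$. Thus every product $xy$ annihilates $\N$ on both sides, so $(xy)z=z(xy)=0$ for all $x,y,z\in\N$; that is, $\N$ is $2$-step nilpotent.

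The argument is essentially a short squeeze, and I expect no serious computational obstacle. The only point requiring care is the translation of ``reduced'' for the quadratic Lie algebra $\g(\N)$ into the totally isotropic condition on $C(\N)$, together with the correct identification of $C(\N)$ both as the Lie-theoretic center and as $[\N,\N]^\bot$ via Proposition \ref{4.4.5}; once these are pinned down, the equalities $[\N,\N]=\Ann(\N)=C(\N)$ and the conclusion follow with no further work.
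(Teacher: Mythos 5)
Your proof is correct and takes essentially the same route as the paper: both arguments squeeze the chain $[\N,\N]\subseteq\Ann(\N)\subseteq\N\N\subseteq C(\N)$ using the fact that reducedness of the $2$-step nilpotent quadratic Lie algebra $\g(\N)$ forces $C(\N)=[\N,\N]$. The only cosmetic difference is that you argue purely with inclusions ($C(\N)\subseteq C(\N)^\perp=[\N,\N]$, then $\N\N\subseteq C(\N)=\Ann(\N)$), whereas the paper first passes through the preceding proposition ($\g(\N)$ reduced implies $\N$ reduced, hence $\Ann(\N)\subseteq\N\N$) and then concludes by the dimension count $\dim C(\N)=\dim[\N,\N]=\tfrac{1}{2}\dim\N$.
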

\begin{proof}
  Since $\g(\N)$ is reduced then $\Ann(\N)\subset \N\N$. On the other
  hand, $\dim(C(\N)) = \dim([\N,\N]) = \frac{1}{2}\dim(\N)$ so
  $\dim(\Ann(\N)) = \dim(\N\N)$. Therefore, $\Ann(\N) = \N\N$ and $\N$
  is 2-step nilpotent.
\end{proof}
\begin{ex} \label{ex4.32} By Example \ref{ex4.2}, every 2-step
  nilpotent algebra is Novikov then we will give here an example of
  symmetric non-commutative Novikov algebras of dimension 7 which is
  3-step nilpotent. Let $\N = \CC x\oplus \N_6$ be a 7-dimensional
  vector space, where $\N_6$ is the symmetric Novikov algebra of
  dimension 6 in Example \ref{ex4.25}. Define the product on $\N$ by
\[ xe_4 = e_4x = e_1, e_4e_4 = x, e_4e_5 = e_3, e_5e_6 = e_1, e_6e_4 = e_2,
\]
and the symmetric bilinear form $B$ defined by
\begin{eqnarray*} B(x,x) = B(e_1,e_4) = B(e_2,e_5) = B(e_3,e_6) = 1 \\
  B(e_4,e_1) = B(e_5,e_2) = B(e_6,e_3) = 1,\\ 0 \ \text{ otherwise.}
\end{eqnarray*}
\end{ex}

Note that in above Example, $\g(\N)$ is not reduced since $x\in
C(\N)$.

\section{Appendix}

\begin{lem}
  Let $(V,B)$ be a quadratic vector space, $C$ be an invertible
  endomorphism of $V$ such that
\begin{enumerate}
    \item $B(C(x),y) = B(x,C(y)), \forall x,y\in V$.
    \item $3C - 2C^2 = \Id$
\end{enumerate}
Then there is an orthogonal basis $\{e_1, ..., e_n\}$ of $B$ such that
$C$ is diagonalizable with eigenvalues $1$ and $\frac{1}{2}$.
\end{lem}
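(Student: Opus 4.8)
The plan is to exploit the polynomial identity in hypothesis (2). Rewriting $3C - 2C^2 = \Id$ as $2C^2 - 3C + \Id = 0$, I observe that $C$ annihilates the polynomial $2t^2 - 3t + 1 = (t-1)(2t-1)$, whose roots $1$ and $\frac{1}{2}$ are distinct. Hence the minimal polynomial of $C$ divides $(t-1)(t-\frac{1}{2})$, is therefore squarefree, and so $C$ is diagonalizable over $\CC$ with all eigenvalues lying in $\{1, \frac{1}{2}\}$. Writing $V_\lambda = \ker(C - \lambda\Id)$ for the eigenspaces, this yields the direct sum decomposition $V = V_1 \oplus V_{1/2}$.

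Next I would use the symmetry of $C$ with respect to $B$ (hypothesis (1)) to control the interaction of $B$ with this decomposition. If $x \in V_1$ and $y \in V_{1/2}$, then $B(x,y) = B(C(x),y) = B(x,C(y)) = \tfrac{1}{2}B(x,y)$, forcing $B(x,y) = 0$; thus $V_1$ and $V_{1/2}$ are $B$-orthogonal. Since $B$ is non-degenerate on $V = V_1 \oplus V_{1/2}$ and the two summands are orthogonal, the restriction of $B$ to each $V_\lambda$ must itself be non-degenerate: a vector in $V_1$ orthogonal to all of $V_1$ would, by orthogonality to $V_{1/2}$, be orthogonal to all of $V$ and hence zero, and symmetrically for $V_{1/2}$.

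Finally, each $(V_\lambda, B|_{V_\lambda})$ is a finite-dimensional complex space carrying a non-degenerate symmetric bilinear form, so it admits a $B$-orthogonal basis by the classical orthogonalization of quadratic forms over a field of characteristic $\neq 2$. Choosing such a basis inside $V_1$ and one inside $V_{1/2}$ and taking their union produces a basis $\{e_1, \ldots, e_n\}$ of $V$ that is orthogonal for $B$ and consists entirely of eigenvectors of $C$ with eigenvalues $1$ and $\frac{1}{2}$, as required. The only point needing mild care is the existence of an orthogonal basis within each eigenspace, but no real obstacle arises: because $C$ acts as a scalar on each $V_\lambda$, any $B$-orthogonal basis of $V_\lambda$ automatically remains a basis of eigenvectors, so the orthogonalization and the diagonalization are compatible by construction.
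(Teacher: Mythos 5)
Your proof is correct, but it takes a genuinely different route from the paper. You argue globally: the relation $2C^2-3C+\Id=0$ shows the minimal polynomial of $C$ divides $(t-1)(t-\tfrac12)$, hence $C$ is diagonalizable with eigenvalues in $\{1,\tfrac12\}$; the $B$-symmetry of $C$ forces the two eigenspaces to be $B$-orthogonal, hence each carries a non-degenerate restriction of $B$ and admits its own orthogonal basis, and the union of these bases does the job. The paper instead proceeds by induction on $\dim V$: it rewrites the hypothesis as $C(C-\Id)=\tfrac12(C-\Id)$, so that $C(x)-x$ (when nonzero) is an eigenvector for $\tfrac12$, and then does an explicit case analysis to replace this eigenvector by a non-isotropic one when $B(C(x)-x,C(x)-x)=0$, before splitting off $\CC e_1\oplusp e_1^\bot$ and invoking the induction hypothesis. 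Your approach buys brevity and avoids the delicate isotropic-vector manipulations entirely, by pushing all the work into the standard facts that a squarefree annihilating polynomial gives diagonalizability and that a non-degenerate symmetric bilinear form over $\CC$ admits an orthogonal basis; the paper's induction is more self-contained and constructive but must handle the isotropic case by hand. One small observation: the invertibility of $C$ is not actually needed in your argument (nor in the conclusion), since $3C-2C^2=\Id$ already forces $C$ to be invertible with inverse $3\Id-2C$.
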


\begin{proof}
  Firstly, one has (2) equivalent to $C(C-\Id) =
  \frac{1}{2}(C-\Id)$. Therefore, if $x$ is a vector in $V$ such that
  $C(x) - x \neq 0$ then $C(x) - x$ is an eigenvector with respect to
  eigenvalue $\frac{1}{2}$. We prove the result by induction on
  $\dim(V)$. If $\dim(V) = 1$, let $\{e\}$ be a orthogonal basis of
  $V$ and assume $C(e) = \lambda e$ for some $\lambda \in \CC$. Then
  by (2) one has $\lambda = 1$ or $\lambda = \frac{1}{2}$.

Assume that the result is true for quadratic vector spaces of
dimension $n$, $n \geq 1$. Assume $\dim(V) = n+1$. If $C = \Id$ then
the result follows. If $C \neq \Id$ then there exists $x\in V$ such
that $C(x) -x \neq 0$. Let $e_1: = C(x) - x$ then $C(e_1) =
\frac{1}{2}e_1$.

If $B(e_1, e_1) = 0$ then there is $e_2\in V$ such that $B(e_2,e_2) =
0$, $B(e_1, e_2)=1$ and $V = \spa\{e_1,e_2\}\oplusp V_1$, where $V_1 =
\spa\{e_1,e_2\}^\bot$. Since $\frac{1}{2} = B(C(e_1),e_2) =
B(e_1,C(e_2))$ one has $C(e_2)= \frac{1}{2}e_2 + x + \beta e_1$, where
$x\in V_1, \beta\in\CC$. Let $f_1: = C(e_2) - e_2 = -\frac{1}{2}e_2 +
x+ \beta e_1$ then $C(f_1) = \frac{1}{2}f_1$ and $B(e_1,f_1) =
-\frac{1}{2}$. If $B(f_1, f_1) \neq 0$ then let $e_1 := f_1$. If
$B(f_1, f_1) = 0$ then let $e_1 := e_1 + f_1$. In the both cases, we
have $B(e_1, e_1) \neq 0$ and $C(e_1) = \frac{1}{2}e_1$. Let $V = \CC
e_1 \oplusp e_1^\bot$ then $e_1^\bot$ is non-degenerate, $C$ maps
$e_1^\bot$ into itself. Therefore the result follows the induction
assumption.
\end{proof}

\bibliographystyle{amsxport}

\begin{bibdiv}
\begin{biblist}

\bib{AB10}{article}{
   author={Ayadi, I.},
   author={Benayadi, S.},
   title={Symmetric Novikov superalgebras},
   journal={Journal of Mathematical Physics},
   volume={51},
   number={2},
   date={2010},
   pages={023501},

}

\bib{Alb49}{article}{
   author={Albert, A. A.},
   title={A theory of trace-admissible algebras},
   journal={Proceedings of the National Academy of Sciences of the United States of America},
   volume={35},
   number={6},
   date={1949},
   pages={317--322},

}
\bib{BB08}{article}{
   author={Baklouti, A.},
   author={Benayadi, S.},
   title={Pseudo-euclidean Jordan algebras},
   journal={arXiv:0811.3702v1}
}

\bib{BB99}{article}{
     author={Benamor, H.},
   author={Benayadi, S.}
   title={Double extension of quadratic Lie superalgebras},
   journal={Communications in Algebra},
   volume={27},
   number={1},
   date={1999},
   pages={67 -- 88},

}
\bib{BM01}{article}{
     author={Bai, C.},
      author={Meng, D.},
   title={The classification of Novikov algebras in low dimensions},
   journal={J. Phys. A: Math. Gen.},
   volume={34},
   date={2001},
   pages={1581 -- 1594},}

\bib{BM02}{article}{
     author={Bai, C.},
      author={Meng, D.},
   title={Bilinear forms on Novikov algebras},
   journal={Int. J. Theor. Phys.},
   volume={41},
   number={3},
   date={2002},
   pages={495 -- 502},}

\bib{BMH02}{article}{
     author={Bai, C.},
     author={Meng, D.},
     author={He, L},
   title={On fermionic Novikov algebras},
   journal={J. Phys. A: Math. Gen.},
   volume={35},
   number={47},
   date={2002},
   pages={10053 -- 10063},}

\bib{BN85}{article}{
     author={Balinskii, A.A.},
     author={Novikov, S. P.},
     title={Poisson brackets of hydrodynamic type, Frobenius algebras and Lie algebras},
   journal={Dokl. Akad. Nauk SSSR},
   volume={283},
   number={5},
   date={1985},
   pages={1036 -- 1039},}

\bib{Bor97}{article}{
     author={Bordemann, M.}
     title={Nondegenerate invariant bilinear forms on nonassociative algebras},
   journal={Acta Math. Univ. Comenianae},
   volume={LXVI},
   number={2},
   date={1997},
   pages={151 -- 201},
}

\bib{Bou59}{book}{
   author={Bourbaki, N.},
   title={Eléments de Mathématiques. Algèbre, Formes sesquilinéaires et formes quadratiques},
   volume={Fasc. XXIV, Livre II},
   publisher={Hermann},
   place={Paris},
   date={1959},
   pages={},
}

\bib{Bur06}{article}{
   author={Burde, D.},
   title={Classical $r$-matrices and Novikov algebras},
   journal={Geometriae Dedicata},
   volume={122},
   date={2006},
   pages={145--157},

}

\bib{CM93}{book}{
   author={Collingwood, D. H.},
   author={McGovern, W. M.},
   title={Nilpotent Orbits in Semisimple Lie. Algebras},
   publisher={Van Nostrand Reihnhold Mathematics Series},
   place={New York},
   date={1993},
   pages={186},
}

\bib{DPU10}{article}{
   author={Duong, M.T.},
   author={Pinczon, G.},
   author={Ushirobira, R.},
   title={A new invariant of quadratic Lie algebras},
   journal={arXiv:1005.3970v2},

}

\bib{Duo10}{article}{
   author={Duong, M. T.},
   journal={Thèse de l'Université de Bourgogne},
   date={2010},
   pages={in preparation},

}

\bib{FK94}{book}{
   author={Faraut, J.},
   author={Koranyi, A.},
   title={Analysis on symmetric cones},
   publisher={Oxford Mathematical Monographs},
   date={1994},
   pages={382},
}

\bib{FS87}{article}{
   author={Favre, G.},
   author={Santharoubane, L.J.},
   title={Symmetric, invariant, non-degenerate bilinear form on a Lie algebra},
   journal={Journal of Algebra},
   volume={105},
   date={1987},
   pages={451--464},

}
\bib{GD79}{article}{
     author={Gel'fand, I. M.},
     author={Dorfman, I. Ya.},
     title={Hamiltonian operators and algebraic structures related to them},
   journal={Funct. Anal. Appl},
   volume={13},
   number={4},
   date={1979},
   pages={248 -- 262},}

\bib{Jac51}{article}{
   author={Jacobson, N.},
   title={General representation theory of Jordan algebras},
   journal={Trans. Amer. Math. Soc},
   volume={70},
   date={1951},
   pages={509--530},

}

\bib{Kac85}{book}{
   author={Kac, V.},
   title={Infinite-dimensional Lie algebras},
   publisher={Cambrigde University Press},
   place={New York},
   date={1985},
   pages={xvii + 280 pp}

}

\bib{MR85}{article}{
  author={Medina, A.},
  author={Revoy, Ph.},
  title={Algèbres de Lie et produit scalaire invariant},
  journal={Ann. Sci. École Norm. Sup.},
  volume={4},
  date={1985},
  pages={553 -- 561},

}
\bib{Ova07}{article}{
  author={Ovando, G.},
  title={Two-step nilpotent Lie algebras with ad-invariant metrics and a special kind of skew-symmetric maps},
  journal={J. Algebra and its Appl.},
  volume={6},
  date={2007},
  number={6},
  pages={897 -- 917},

}

\bib{PU07}{article}{
   author={Pinczon, Georges},
   author={Ushirobira, Rosane},
   title={New Applications of Graded Lie Algebras to Lie Algebras, Generalized Lie Algebras, and Cohomology},
   journal={Journal of Lie Theory},
   volume={17},
   date={2007},
   number={3},
   pages={633 -- 668},

}
\bib{Sch55}{article}{
   author={Schafer, R. D.},
   title={Noncommutative Jordan algebras of characteristic 0},
   journal={Proceedings of the American Mathematical Society},
   volume={6},
   date={1955},
   number={3},
   pages={472 -- 475},

}
\bib{Sch61}{book}{
   author={Schafer, R. D.},
   title={An Introduction to Nonassociative Algebras},
   publisher={Academic Press},
   place={New York},
   date={1966},
  }
\bib{ZC07}{article}{
   author={Zhu, F.},
   author={Chen, Z.},
   title={Novikov algebras with associative bilinear forms},
   journal={Journal of Physics A: Mathematical and Theoretical},
   volume={40},
   date={2007},
   number={47},
   pages={14243--14251},

}

\end{biblist}
\end{bibdiv}

\end{document}